  \providecommand\BibTeX{{%
    \normalfont B\kern-0.5em{\scshape i\kern-0.25em b}\kern-0.8em\TeX}}}
\crefname{figure}{fig.}{fig.}
\Crefname{figure}{Fig.}{Fig.}
\renewcommand{\hat}[1]{\widehat{#1}}
\newcommand{\LF}{{LF}\xspace}
\newcommand{\id}{\mathsf{id}}
\newcommand{\comma}{,\,}
\newcommand{\semi}{;\,}
\newcommand{\of}{{:}\,}
\newcommand{\type}{\ \text{type}}
\newcommand{\SeqTm}[3]{#1 \vdash #2 \colon #3}
\newcommand{\kw}[1]{\texttt{#1}}
\newcommand{\sem}[1]{\llbracket #1 \rrbracket}
\newcommand{\CC}{\mathbb{D}}
\newcommand{\op}[1]{{#1}^{\mathrm{op}}}
\newcommand{\psh}[1]{\mathrm{Set}^{\op \CC}}
\newcommand{\Obj}{\kw{Obj}}
\newcommand{\ctx}{\kw{ctx}}
\newcommand{\cvar}{u}
\newcommand{\tm}{\kw{tm}}
\DeclareDocumentCommand{\trm}{o}{
  \IfNoValueTF{#1}{
    \kw{trm}
  } {
    \kw{trm}\;#1
  }
}
\newcommand{\ty}{\kw{ty}}
\newcommand{\tty}{\textsf{ty}}
\DeclareDocumentCommand{\ttrm}{o}{
  \IfNoValueTF{#1}{
    \textsf{trm}
  } {
    \textsf{trm}[#1]
  }
}
\newcommand{\Ctx}{\mathsf{Ctx}}
\newcommand{\Ty}[1]{\mathsf{Ty}(#1)}
\newcommand{\hTy}[1]{\hat{\mathsf{Ty}}(#1)}
\newcommand{\tTy}[1]{\mathtt{Ty}(#1)}
\newcommand{\CTm}{\mathsf{Tm}}
\newcommand{\Tm}[2]{\CTm(#1, #2)}
\newcommand{\tTm}[2]{\mathtt{Tm}(#1, #2)}
\newcommand{\Tmv}[2]{\CTm_v(#1, #2)}
\newcommand{\App}{\mathsf{App}}
\newcommand{\CEl}{\kw{El}}
\newcommand{\El}[1]{\CEl\,#1}
\newcommand{\CT}[2]{#1 \vdash #2}
\newcommand{\CTV}[2]{#1 \vdash_{v} #2}
\newcommand{\mpi}{\Pi}
\newcommand{\Type}{\texttt{Type}}
\newcommand{\tletbox}[3]{\kw{let}~\kw{box}~ #1 = #2\ \kw{in}\ #3}
\newcommand{\tlet}[3]{\kw{let}\  #1 = #2\ \kw{in}\ #3}
\newcommand{\tunbox}[1]{\kw{unbox}~#1}
\newcommand{\y}{\mathrm{y}}
\newcommand{\iso}{\cong}
\renewcommand{\Box}{\flat}
\newcommand*{\SavedLstInline}{}
\LetLtxMacro\SavedLstInline\lstinline
\DeclareRobustCommand*{\lstinline}{%
  \ifmmode
    \let\SavedBGroup\bgroup
    \def\bgroup{%
      \let\bgroup\SavedBGroup
      \hbox\bgroup
    }%
  \fi
  \SavedLstInline
}
\DeclareMathAlphabet{\mathpzc}{OT1}{pzc}{m}{it}
\DeclareDocumentCommand{\judge}{ o m } {
  \IfNoValueTF {#1}
  {\Gamma \vdash #2}
  {#1 \vdash #2}
}
\DeclareDocumentCommand{\typing}{ o m m } {
  \judge[#1]{#2 : #3}
}
\DeclareDocumentCommand{\mjudge}{ o o m } {
  \IfNoValueTF {#1}
  {
    \IfNoValueTF {#2}
    {\judge[\Gamma ; \Psi]{#3}}
    {\judge[\Gamma ; #2]{#3}}
  }
  {
    \IfNoValueTF {#2}
    {\judge[#1 ; \Psi]{#3}}
    {\judge[#1 ; #2]{#3}}
  }
}
\DeclareDocumentCommand{\mtyping}{ o o m m } {
  \mjudge[#1][#2]{#3 : #4}
}
\DeclareDocumentCommand{\istype}{o o m} {
  \mjudge[#1][#2]{#3\texttt{ type}}
}
\DeclareDocumentCommand{\mmjudge}{ o o m } {
  \IfNoValueTF {#1}
  {
    \IfNoValueTF {#2}
    {\judge[\Gamma \;|\; \Theta]{#3}}
    {\judge[\Gamma \;|\; #2]{#3}}
  }
  {
    \IfNoValueTF {#2}
    {\judge[#1 \;|\; \Theta]{#3}}
    {\judge[#1 \;|\; #2]{#3}}
  }
}
\DeclareDocumentCommand{\mmtyping}{ o o m m } {
  \mmjudge[#1][#2]{#3 : #4}
}
\DeclareDocumentCommand{\mistype}{o o m} {
  \mmjudge[#1][#2]{#3\texttt{ type}}
}
\DeclareDocumentCommand{\isctx}{o m} {
  \judge[#1]{#2\texttt{ ctx}}
}
\DeclareDocumentCommand{\Istype}{o m} {
  \judge[#1]{#2\texttt{ type}}
}
\newcommand{\pipe}{\bnfalt}
\newcommand{\Set}{\ensuremath{\mathpzc{Set}}\xspace}
\newcommand{\boxit}[1]{\kw{box}~#1}
\newcommand{\letbox}[3]{\tletbox{#1}{#2}{#3}}
\newcommand{\lift}{\texttt{lift}}
\newcommand{\vjudge}[2]{\ensuremath{#1 \vdash_{v} #2}}
\newcommand{\quot}[1]{\ensuremath{\lceil #1 \rceil}}
\newcommand{\unquot}[1]{\ensuremath{\lfloor #1 \rfloor}}
\newcommand{\unquote}[2]{\ensuremath{\lfloor #1 \rfloor_{#2}}}
\newcommand{\deffun}[2]{\ensuremath{\texttt{fn}\;#1\Rightarrow#2}}
\newcommand{\induct}[4]{\ensuremath{\texttt{rec}^{#1}\;#2\;#3\;#4}}
\newcommand{\ctxext}[1]{\ensuremath{\langle #1 \rangle}}
\newcommand{\intp}[1]{\ensuremath{\llbracket #1 \rrbracket}}
\newcommand{\multicoltext}[1]{\multicolumn{1}{p{12cm}}{#1\newline}}
\newcommand{\cocon}{\textsc{Cocon}\xspace}
\newcommand{\hatctx}[1]{\hat{#1}}
\newcommand\floor[1]{\lfloor#1\rfloor}
\newcommand\ceil[1]{\lceil#1\rceil}
\newcommand{\cbox}[1]{\ceil {#1}}
\newcommand{\unbox}[2]{\floor {#1}_{#2}}
\newcommand{\unboxc}[1]{#1}
\newcommand{\lfs}[2]{} 
\definecolor{DimGrey}{rgb}{0.8,0.8,0.8}
\lstdefinelanguage{ContextualML}
{
  morekeywords={and, block, case, of, mlam, fn, impossible, let, in, schema,
    some, rec, type, ctype, prop, stratified, inductive, coinductive, LF, if, then,
    else, total, with},
  keepspaces=true,
  sensitive,
  morecomment=[l]{\%},
  morecomment=[n]{\%\{}{\}\%},
  morestring=[b]"
}[keywords,comments,strings]
\slshape\color{DimGrey},
\newcommand{\teletype}[1]{\ensuremath{\mathtt{#1}}}
\newcommand{\ttapp}{\teletype{app}}
\newcommand{\ttlam}{\teletype{lam}}
\newcommand{\tlam}{\mathsf{lam}}
\newcommand{\tarr}{\mathsf{arr}}
\newcommand{\ttarr}{\teletype{arr}}
\newcommand{\tapp}{\mathsf{app}}
\newcommand{\ttobj}{\teletype{o}}
\newcommand{\tobj}{\mathsf{o}}
\newcommand{\IH}{{\mathcal{I}}}
\newcommand{\rappto}{~}
\newcommand{\ann}[1]{ \breve{#1}}
\newcommand{\clam}{\ensuremath{\mathtt{lam}}~\xspace}
\newcommand{\capp}{\ensuremath{\mathtt{app}}~\xspace}
\newcommand{\cletv}{\ensuremath{\mathtt{letv}}~\xspace}
\newcommand{\bnfas}{\;\mathrel{::=}\;}
\newcommand{\bnfalt}{\, \mid \,}
\newcommand{\wvec}[1]{\overrightarrow{#1}}
\newcommand{\wwk}[1]{\textbf{wk}}
\newcommand{\wk}[1]{\ensuremath{\mathsf{wk}_{#1}}}
\renewcommand{\arrow}{\Rightarrow}
\newcommand{\mlam}{\leftthreetimes^{\!\!\scriptscriptstyle\Box}}
\newcommand{\lamb}{\leftthreetimes}
\newcommand{\app}[2]{#1~#2}
\newcommand{\mto}{\mapsto}
\newcommand{\MBox}[1]{\mathsf{box}~#1}
\newcommand{\tmctx}{\ensuremath{\mathsf{ctx}}}
\newcommand{\R}{\mathcal{B}}
\newcommand{\titer}[3]{\mathsf{rec}^{#3}~\ensuremath{{#1}}}
\newcommand{\tmfn}[2]{\mathsf{fn}\; #1 \Rightarrow #2}
\newcommand{\tmrec}[4]{\mathsf{rec}^{#1} (#2 \mid #3 \mid #4)~}
\newcommand{\tightoverset}[2]{%
  \mathop{#2}\limits^{\vbox to -.5ex{\kern-0.95ex\hbox{$#1$}\vss}}}
\newcommand{\thelock}{\text{\faLock}}
\newcommand{\el}[1]{\scriptstyle\int\!#1}
\newcommand{\elop}[1]{\op{{\scriptstyle\int}#1}}
\begin{document}

\title{A Category Theoretic View of Contextual Types: from Simple Types to Dependent Types}

\author{Jason Z. S. Hu}
\email{zhong.s.hu@mail.mcgill.ca}
\affiliation{
  \institution{McGill University}
  \streetaddress{McConnell Engineering Bldg. 3480 University St.,}
  \city{Montr\'eal}
  \state{Qu\'ebec}
  \country{Canada}
  \postcode{H3A 0E9}
}

\author{Brigitte Pientka}
\email{bpientka@cs.mcgill.ca}
\affiliation{
  \institution{McGill University}
  \streetaddress{McConnell Engineering Bldg. 3480 University St.,}
  \city{Montr\'eal}
  \state{Qu\'ebec}
  \country{Canada}
  \postcode{H3A 0E9}
}

\author{Ulrich Schöpp}
\email{schoepp@fortiss.org}
\affiliation{%
  \institution{fortiss GmbH}
  \city{Munich}
  \country{Germany}
}

\renewcommand{\shortauthors}{Hu, et al.}

\begin{abstract}
We describe the categorical semantics for a simply typed
variant and a simplified dependently typed variant of \cocon, a contextual modal type theory where the box modality mediates
between the weak function space that is used to represent higher-order
abstract syntax (HOAS) trees and the strong function space that
describes (recursive)  computations about them. What makes \cocon
different from standard type theories is the presence of first-class
contexts and contextual objects to describe syntax trees that are closed
with respect to a given context of assumptions. Following M.~Hofmann's
work, we use a presheaf model to characterise HOAS
trees. Surprisingly, this model already provides the necessary
structure to also model \cocon. In particular, we can capture the
contextual objects of \cocon using a comonad~$\flat$ that restricts
presheaves to their closed elements.
This gives a simple semantic characterisation of the invariants of
contextual types (e.g.~substitution invariance) and
identifies \cocon as a type-theoretic syntax of presheaf models.
We further extend this characterisation to dependent types using categories with
families and show that we can model a fragment of \cocon without
recursor in the Fitch-style dependent modal type theory presented by Birkedal et. al..
\end{abstract}

\begin{CCSXML}
<ccs2012>
   <concept>
       <concept_id>10003752.10003790.10003793</concept_id>
       <concept_desc>Theory of computation~Modal and temporal logics</concept_desc>
       <concept_significance>500</concept_significance>
       </concept>
   <concept>
       <concept_id>10003752.10003790.10011740</concept_id>
       <concept_desc>Theory of computation~Type theory</concept_desc>
       <concept_significance>500</concept_significance>
       </concept>
 </ccs2012>
\end{CCSXML}

\ccsdesc[500]{Theory of computation~Modal and temporal logics}
\ccsdesc[500]{Theory of computation~Type theory}
\keywords{category theory, type theory, contextual types, dependent types}

\maketitle

\section{Introduction}
\label{sect:intro}
A fundamental question when defining, implementing, and working with languages and logics is:
How do we represent and analyse syntactic structures?
Higher-order abstract syntax \citep{Pfenning88pldi} (or lambda-tree
syntax \citep{Miller:LambdaTree99}) provides a deceptively
simple answer to this question. The basic idea to
represent syntactic structures is to map uniformly binding structures in
our object language (OL) to the function space in a meta-language thereby inheriting $\alpha$-renaming and capture-avoiding substitution.
In the logical framework \LF \citep{Harper93jacm}, for example, we can
define a small functional programming language
consisting of functions, function application, and let-expressions
using a type \lstinline!tm! as follows:

\begin{lstlisting}
lam : (tm -> tm) -> tm.        letv: tm -> (tm -> tm) -> tm.          app : tm -> tm -> tm.
\end{lstlisting}

The object-language term $(\mathsf{lam}\;x.\,\mathsf{lam}\;y.\,\mathsf{let}\;w =
x\;y\;\mathsf{in}\;w\;y)$ is then encoded as
\lstinline!lam \x.lam \y.letv (app x y) \w.app w y!
using the \LF abstractions to model binding.
Object-level substitution is modelled through \LF application; for instance, the fact that
$((\mathsf{lam}~x.M)~N)$ reduces to $[N/x]M$ in our object language is expressed as
\lstinline[basicstyle=\ttfamily\footnotesize]!(app (lam M) N)!
reducing to
\lstinline[basicstyle=\ttfamily\footnotesize]!(M N)!.

This approach is elegant and can offer substantial benefits: we can
treat objects equivalent modulo renaming and do not need to define
object-level substitution.

However, we not only want to just construct HOAS trees, but also to analyse
them and to select sub-trees. This is challenging, as sub-trees are
context sensitive. For example, the term
\lstinline[basicstyle=\ttfamily\footnotesize]!letv (app x y) \w.app w y!
only makes sense in a context
\lstinline[basicstyle=\ttfamily\footnotesize]!x:tm,y:tm!.
Moreover, one cannot simply extend \LF to allow syntax analysis.
If one simply added a recursion combinator to  \LF, then
it could be used to define many functions
\lstinline[basicstyle=\ttfamily\footnotesize]!M: tm -> tm!
for which
\lstinline[basicstyle=\ttfamily\footnotesize]!lam M!
would not represent an object-level syntax term~\citep{Hofmann:LICS99}.

Contextual types \citep{Nanevski:ICML05,Pientka:POPL08,gabbay_denotation_2013} offer a
type-theoretic solution to these problems by reifying the typing
judgement, i.e.~that \lstinline[basicstyle=\ttfamily\footnotesize]!letv (app x y) \w.app w y!
has type \lstinline[basicstyle=\ttfamily\footnotesize]!tm! in the
context \lstinline[basicstyle=\ttfamily\footnotesize]!x:tm,y:tm!,
as a \emph{contextual type} $\cbox{x{:}\tm,y{:}\tm \vdash \tm}$.
The contextual type $\cbox{x{:}\tm,y{:}\tm \vdash \tm}$ describes a set of terms of type $\tm$ that may contain variables $x$ and $y$. In particular, the contextual object
$\cbox{x, y \vdash \cletv (\capp x~y)~\lambda w.\capp w~y}$
has the given contextual type.
By abstracting over contexts and treating contexts as first-class, we can
now recursively analyse HOAS trees
\citep{Pientka:POPL08,Pientka:PPDP08,Pientka:TLCA15}.
Recently, \citet{Pientka:LICS19} further generalised these ideas and presented a contextual modal type
theory, \cocon, where we can mix HOAS trees and computations, i.e. we
can use (recursive) computations to analyse and traverse (contextual)
HOAS trees and we can embed computations within HOAS trees.
This line of work provides a syntactic perspective to the question
of how to represent and analyse syntactic structures with binders, as
it focuses on decidability of type checking and
normalisation. However, its semantics remains not
well-understood. What is the semantic meaning of a contextual type?
Can we semantically justify the given induction principles? What is
the semantics of a first-class context?

While a number of closely related categorical models of abstract syntax with bindings
\citep{Hofmann:LICS99,Fiore:LICS99,Gabbay:LICS99} were proposed around
2000, the relationship of these models to concrete type-theoretic
 languages for computing with HOAS structures was
tenuous.
In this paper, we give a category-theoretic semantics for
\cocon. This provides
semantic perspective of contextual types and first-class
contexts. Maybe surprisingly, the presheaf model introduced by \citet{Hofmann:LICS99}
already provides the necessary structure
to also model contextual modal type theory.
Besides the standard structure of this model, we only rely on two key 
concepts: a box (necessity) modality which is the same modality discussed by \citet[Section 6]{Hofmann:LICS99} and a cartesian closed universe of representables.
In the first half of this paper,
we focus on the special case of \cocon where the HOAS trees are simply-typed.
Concentrating on the simply-typed setting allows us to
introduce the main idea without the additional complexity that
type dependencies bring with them. 
The dependently-typed case is reserved in \Cref{sec:depty}, in which we study the
semantics using a categorical framework for dependent types, categories with
families (CwFs). We extend the model in the simply typed case such that the domain category
is a CwF, the structure of which is preserved by the Yoneda embedding. In
\Cref{sec:fitch}, we study \cocon's relation with a Fitch-style system and show
their similarity.

Our work provides a semantic foundation to \cocon and can serve as a
starting point to investigate connections to similar work.
First, our work connects \cocon to other work on internal languages for
presheaf categories with a $\flat$-modality, such as spatial
type theory~\citep{shulman_2018} or crisp type
theory~\citep{Licata:FSCD18}.
Second, it may help to understand the relations of \cocon to type theories
that use a modality for metaprogramming and intensional recursion, such as~\citep{DBLP:journals/corr/Kavvos17}.
While \cocon is built on the same general ideas,
a main difference seems to be that \cocon distinguishes between
HOAS trees and computations, even though it allows mixed use of them.
We hope to clarify the relation by providing a semantical perspective.

This paper is an extended version of the conference
paper~\cite{DBLP:conf/fossacs/PientkaS20}. 

\section{Presheaves for Higher-Order Abstract Syntax}
\label{sect:presheaves}

Our work begins with the presheaf models for HOAS of~\citet{Hofmann:LICS99,Fiore:LICS99}.
The key idea of those approaches is to integrate substitution-invariance in
the computational universe in a controlled way.
For the representation of abstract syntax, one wants to allow only substitution-invariant
constructions.  For example, \lstinline[basicstyle=\ttfamily\footnotesize]!lam M! represents
an object-level abstraction if and only if
\lstinline[basicstyle=\ttfamily\footnotesize]!M! is a function that uses its argument
in a substitution-invariant way.
For computation with abstract syntax, on the other hand, one wants to
allow non-substitution-invariant constructions too.
Presheaf categories allow one to choose the desired amount of substitution-invariance.

Let $\CC$ be a small category.
The presheaf category $\hat\CC$ is defined to be the category~$\psh\CC$.
Its objects are functors $F\colon \op\CC \to \mathrm{Set}$, which are also called \emph{presheaves}.
Such a functor~$F$ is given by a set $F(\Psi)$ for each
object $\Psi$ of $\CC$ together with a function
$F(\sigma) \colon F(\Phi) \to F(\Psi)$ for any object $\Phi$ and $\sigma\colon \Psi \to \Phi$ in
$\CC$, subject to the functor laws.
The intuition is that~$F$ defines sets of elements in various $\CC$-contexts,
together with a $\CC$-substitution action.
A morphism $f\colon F \to G$ is a natural transformation, which is a family of
functions $f_\Psi \colon F(\Psi) \to G(\Psi)$ for any $\Psi$.
This family of functions must be natural, i.e.~commute with substitution
$f_\Psi \circ F(\sigma) = G(\sigma) \circ f_\Phi$.

For the purposes of modelling higher-order abstract syntax, $\CC$ will typically be
the term model of some domain-level lambda-calculus.
By domain-level, we mean the calculus that serves as the meta-level for
object-language encodings.
It is the calculus that contains constants like
\lstinline[basicstyle=\ttfamily\footnotesize]!lam!
and
\lstinline[basicstyle=\ttfamily\footnotesize]!app!
from the Introduction.
We call it domain-level to avoid possible confusion between different meta-levels later.
For simplicity,
let us for now use a simply-typed lambda-calculus with functions and
products as the domain language.
It is sufficient to encode the example from the Introduction and allows us to explain the main idea underlying our approach.

The term model of the simply-typed domain-level lambda-calculus forms a cartesian closed
category~$\CC$.
The objects of~$\CC$ are contexts $x_1 \of A_1, \dots, x_n \of A_n$
of simple types.
We use $\Phi$ and $\Psi$ to range over such contexts.
A morphism from $x_1 \of A_1, \dots, x_n \of A_n$ to $x_1 \of B_1, \dots, x_m \of B_m$
is a tuple $(t_1,\dots, t_m)$ of terms
$\SeqTm{x_1\of A_1,\dots ,x_n\of A_n}{t_i}{B_i}$ for $i=1,\dots, m$.
A morphism of type $\Psi \to \Phi$ in $\CC$ thus amounts to a
(domain-level) substitution that
provides a (domain-level) term in context~$\Psi$ for each of the variables in~$\Phi$.
Terms are identified up to $\alpha\beta\eta$-equality.
One may achieve this by using a de Bruijn encoding, for example,
but the specific encoding is not important for this paper.
The terminal object is the empty context, which we denote by~$\top$,
and the product $\Phi \times \Psi$ is defined by context concatenation.
It is not hard to see that any object $x_1\of A_1, \dots, x_n \of A_n$
is isomorphic to an object that is given by a context with a single variable,
namely $x_1 \of (A_1 \times \dots \times A_n)$.
This is to say that contexts can be identified with product types.
In view of this isomorphism, we shall allow ourselves to
consider the objects of $\CC$ also as types and vice versa.
The category~$\CC$ is cartesian closed, the exponential of $\Phi$ and
$\Psi$ being given by the function type $\Phi \to \Psi$ (where the objects are
considered as types).

The presheaf category $\hat\CC$ is a computational universe that both
embeds the term model~$\CC$ and that can represent computations about it.
Note that we cannot just enrich~$\CC$ with terms for computations if we
want to use HOAS.
In a simply-typed lambda-calculus with just the constant terms
\lstinline[basicstyle=\ttfamily\footnotesize]!app: tm -> tm -> tm!
and
\lstinline[basicstyle=\ttfamily\footnotesize]!lam: (tm -> tm) -> tm!,
each term of type \lstinline[basicstyle=\ttfamily\footnotesize]!tm! represents
an object-level term.
This would not be the true anymore, if we were to allow computations
in the domain language,
since one could define
\lstinline[basicstyle=\ttfamily\footnotesize]!M!
to be something like
\lstinline[basicstyle=\ttfamily\footnotesize]!(\x. if x represents an object-level application then M1 else M2)!
for distinct
\lstinline[basicstyle=\ttfamily\footnotesize]!M1!
and
\lstinline[basicstyle=\ttfamily\footnotesize]!M2!. In this case,
\lstinline[basicstyle=\ttfamily\footnotesize]!lam M!
would not represent an object-level term anymore. If we want to preserve a
bijection between the object-level terms and their representations in
the domain-language, we cannot allow case-distinction over whether a
term represents an object-level an application.

The category~$\hat\CC$ unites syntax with computations by allowing one to enforce various degrees of
substitution-invariance.
By choosing objects with different substitution actions, one can
control the required amount of substitution-invariance.

In one extreme, a set~$S$ can be regarded as a constant presheaf. Define the constant presheaf $\Omega S$ by
$\Omega S(\Psi)=S$ and $\Omega S(\sigma)=\mathsf{id}$ for all~$\Psi$ and~$\sigma$.
Since the substitution action is trivial, a morphism $\Omega S \to \Omega T$ in~$\hat\CC$ amounts to 
just a function from set~$S$ to set~$T$.
Since $\Omega S$ is thus essentially just the set~$S$, we shall write just~$S$ both for the set~$S$ and the presheaf~$\Omega S$.

The Yoneda embedding represents the other extreme.
For any object~$\Phi$ of~$\CC$, the presheaf $\y(\Phi)\colon \op \CC \to \mathrm{Set}$
is defined by $\y(\Phi)(\Psi) = \CC(\Psi, \Phi)$, which is the set of morphisms
from~$\Psi$ to~$\Phi$ in~$\CC$.
The functor action is pre-composition.
The presheaf $\y(\Phi)$ should be understood as the type of all domain-level substitutions
with codomain~$\Phi$.
An important example is $y(\tm)$. In this case, $\y(\tm)(\Psi)$ is
the set of all morphisms of type $\Psi \to \tm$ in~$\CC$.
By the definition of $\CC$, these correspond to domain-level terms of type~$\tm$
in context~$\Psi$.
In this way, the presheaf $\y(\tm)$ represents the domain-level terms of type~$\tm$.

The Yoneda embedding does in fact embed~$\CC$ into~$\hat \CC$ fully and faithfully,
where the action on morphisms is post-composition.
This means that~$\y$ maps a morphism $\sigma \colon \Psi \to \Phi$ in $\CC$
to the natural transformation $\y(\sigma) \colon \y(\Psi) \to \y(\Phi)$ that is
defined by post-composing with~$\sigma$.
This definition makes $\y$ into a functor $\y\colon \CC \to \hat\CC$ that is
moreover full and faithful: its action on morphisms is a bijection from $\CC(\Psi, \Phi)$
to $\hat\CC(\y(\Psi), \y(\Phi))$ for any~$\Psi$ and~$\Phi$. 
This is because a natural transformation $f\colon \y(\Psi) \to \y(\Phi)$ is, by naturality,
uniquely determined by $f_\Psi(\id)$, where $\id \in \CC(\Psi,\Psi)=\y(\Psi)(\Psi)$, and
$f_\Psi(\id)$ is an element of~$y(\Phi)(\Psi)=\CC(\Psi, \Phi)$.

Since $\CC$ embeds into $\hat\CC$ fully and faithfully, the term model of the
domain language is available in $\hat\CC$.
Consider for example $\y(\tm)$.
Since $\y$ is full and faithful, the morphisms from~$\y(\tm)$ to~$\y(\tm)$ in~$\hat\CC$
are in one-to-one correspondence with the morphisms from~$\tm$ to~$\tm$ in~$\CC$.
These, in turn, are defined to be substitutions and correspond to
simply-typed (domain-level) lambda terms with one free variable.
This shows that substitution invariance cuts down the morphisms from $\y(\tm)$ to $\y(\tm)$
in $\hat\CC$ just as much as one would like for HOAS encodings.

But $\hat\CC$ contains not just a term model of the domain language.
It can also represent computations about the domain-level syntax and
computations that are not substitution-invariant.
For example, arbitrary functions on terms can be represented
as morphisms from the constant presheaf $\Omega(\y(\tm)(\top))$ to $\y(\tm)$.
Recall that~$\top$ is the empty context, so that
$\y(\tm)(\top)$ is the set $\CC(\top,\tm)$, by definition, which is isomorphic to
the set of closed domain-level terms of type~$\tm$.
The morphisms from $\Omega(\y(\tm)(\top))$ to $\y(\tm)$ in $\hat\CC$ correspond to arbitrary
functions from closed terms to closed terms, without any restriction of
substitution invariance.

The restriction to the constant presheaf of closed terms can be generalised to
arbitrary presheaves.
Define a functor $\flat\colon \hat\CC \to \hat\CC$
by letting $\flat F$ be the constant presheaf $\Omega(F(\top))$,
i.e.~$\flat F(\Psi) = F(\top)$ and $\flat F(\sigma) = \id$.
Thus, $\flat$ restricts any presheaf to the set of its closed elements.
The functor~$\flat$ defines a comonad where the counit $\varepsilon_F \colon \flat F \to F$
is the obvious inclusion and
the comultiplication $\nu_F \colon \flat F \to \flat \flat F$ is the identity.
The latter means that the comonad~$\flat$ is idempotent.

The category~$\hat\CC$ not only embeds~$\CC$ and allows control over
how much substitution invariance is wanted in various constructions, it
also is very rich in structure, not least because it also embeds $\Set$.
We will show that~$\hat\CC$ models contextual types and computations about them.
We will gradually introduce the structure of~$\hat\CC$ that we need to this end.
We begin here by noting that~$\hat\CC$ is cartesian closed, in order to introduce
the notation for this structure.

Finite products exist in~$\hat\CC$ and are constructed pointwise.
\begin{align*}
  \top(\Gamma) &= \{*\}
  &
    (X\times Y)(\Gamma)
  &= X(\Gamma)\times Y(\Gamma)
\end{align*}
The Yoneda embedding preserves finite products, i.e.~$\y(X\times Y)\iso \y(X) \times
\y(Y)$.

The category $\hat\CC$ has exponentials.
The exponential $(X\Rightarrow Y)$ can be calculated using the Yoneda lemma.
We recall that the Yoneda lemma states that $Z(\Gamma)$ is naturally
isomorphic to $\hat\CC(\y(\Gamma), Z)$.
With this, we have:
\[
  (X\Rightarrow Y)(\Gamma)
  \ \iso\ \hat\CC(\y(\Gamma), X \Rightarrow Y)
  \ \iso\ \hat\CC(\y(\Gamma) \times X , Y)
\]
Since~$\y$ preserves finite products, we have in particular
$(\y(A) \Rightarrow \y(B))(\Gamma)
\iso \hat\CC(\y(\Gamma) \times \y(A) , \y(B))
\iso \hat\CC(\y(\Gamma \times A) , \y(B))
\iso \CC(\Gamma \times A, B)$.
In the case where $A=B=\tm$, this shows that the exponential $\tm \Rightarrow \tm$
represents terms with an additional bound variable. Given exponentials in $\hat\CC$,
$y$ also preserves exponentials, i.e.~$\y(X \Rightarrow Y)\iso \y(X) \Rightarrow
\y(Y)$.

\section{Internal Language}
\label{sec:internal}

To explain how $\hat\CC$ models higher-order abstract syntax and contextual
types, we need to expose more of its structure.  Presenting it directly in terms
of functors and natural transformations is somewhat laborious and the technical
details may obscure the basic idea of our approach. We therefore use dependent
types as an internal language for working with~$\hat\CC$.

It is well-known that presheaf categories like $\hat\CC$ furnish a model of a dependent type theory
that supports dependent products, dependent sums and extensional identity types,
among others, see e.g.~\cite{Jacobs:TCS93}. In this section we outline this dependent type
theory and how it is related to $\hat\CC$. Since the constructions are largely standard, our aim
here is mainly to fix the notation for the later sections.

With the cartesian closed structure defined above, it is already possible to use the
simply-typed lambda-calculus for working with~$\hat\CC$.
Morphisms of type
$X_1\times \dots \times X_n \to Y$ in~$\hat\CC$ can be considered
as terms $x_1\of X_1,\dots,x_n\of X_n \vdash t \of Y$ of the simply-typed lambda
calculus. The cartesian closed structure of~$\hat\CC$ is enough to interpret abstraction
and application terms. With this correspondence, the simply-typed lambda calculus may
be used as a language for defining morphisms~$\hat\CC$.
Notice in particular that morphisms
$X_1\times \dots \times X_n \to Y_1 \times \dots \times Y_m$
correspond to a list of terms $(t_1,\dots, t_m)$ of type
$x_1\of X_1,\dots,x_n\of X_n \vdash t_i \of Y_i$
for all~$i$. Thus, morphisms $X_1\times \dots \times X_n \to Y_1 \times \dots \times Y_m$
can be considered as substitutions $(t_1/y_1,\dots, t_m/y_m)$ of
terms in context $x_1\of X_1,\dots,x_n\of X_n$ for the
variables in $y_1\of Y_1,\dots, y_m\of Y_m$.

The category~$\hat\CC$ has enough structure to extend this idea to a dependently-typed
lambda-calculus in a similar way. Let us first explain how contexts, types and
terms relate to the structure of~$\hat\CC$.
\begin{itemize}
\item Contexts:
  Typing contexts~$\Gamma$, $\Delta$ still correspond to objects of~$\hat\CC$.
  Also, the morphisms $\Gamma \to \Delta$ in~$\hat\CC$ still correspond
  to substitutions, just like in the simply-typed case.
\item Types:
  Due to type dependencies, the types are not simply objects anymore.
  For each context~$\Gamma$, the set of types in context~$\Gamma$ is specified by a set~$\hTy \Gamma$.
  The elements of~$\hTy \Gamma$ do not directly appear in~$\hat\CC$, but they induce morphisms in~$\hat\CC$.
  Each type $X \in \hTy \Gamma$ determines an object $\Gamma . X$ and
  a \emph{projection map} $p_X \colon \Gamma . X \to \Gamma$.
  The intention is that $\Gamma . X$ represents the context $\Gamma, x\of X$ and that
  the projection maps represent the weakening substitution $\Gamma, x\of X \to \Gamma$.
\item Terms:
  A term $\Gamma \vdash m \of X$ appear as morphisms
  $m\colon \Gamma \to \Gamma. X$ in $\hat\CC$ with the property $\id = p_X \circ m$.
  In essence it is a substitution for the variables in $\Gamma, x\of X$ that is the identity
  on all variables other than~$x$.
\end{itemize}


Concretely, we define the set $\hTy \Gamma$ to consist of presheaves
$\elop \Gamma \to \Set$, where $\el \Gamma$ is the \emph{category of elements}
of~$\Gamma$, which is defined as follows: the objects of $\el \Gamma$ are the set
$\{ (\Psi, g) \mid g \in \Gamma(\Psi) \}$ and the morphisms $\sigma$ between
$(\Psi, g) \to (\Phi, g')$ are induced from $\sigma : \Psi \to \Phi$ subject
to $g = \Gamma(\sigma, g')$. We can show that this construction does form a
category. Intuitively, this construction use morphisms in $\CC$ as coherence
conditions, which are respected in all other constructions in the presheaf model. 

For a presheaf $X : \elop \Gamma \to \Set$ in $\hTy\Gamma$, we define the 
presheaf $\Gamma . X$ in $\hat \CC$ by
$ (\Gamma . X)(\Phi) = \{ (g, x) \mid g \in \Gamma(\Phi),\ x\in X(\Phi, g) \}$ 
with the canonical action on morphisms. The projection map~$p_X$ is just the 
first projection.

This interpretation of types generalizes the simply-typed case.
Any object~$X$ of $\hat\CC$ can be seen as a simple type. It can be 
lifted to a presheaf in $\overline X : \elop \Gamma \to \Set$ in $\hTy{\Gamma}$ 
by letting $\overline X(\Phi, g) = X(\Phi)$. 
Note that with this choice, $\Gamma . \overline X$ is identical to $\Gamma \times X$,
as one would expect from the simply-typed case.

With type dependencies, one needs a substitution operation on types.  For any
$\sigma \colon \Delta \to \Gamma$, there is a type substitution function
$(-)\{\sigma\}$ from $\hTy\Gamma$ to $\hTy\Delta$. It maps $X : \hTy\Gamma$ to
\begin{align*}
  X\{\sigma\}(\Phi : \CC, d : \Delta(\Phi)) := X(\Phi, \sigma(\Phi, d))
\end{align*}

The definition of $\hTy\Gamma$ justifies the intuition of contexts as types of
tuples as in the simply-typed case.  Consider, for example, a context of the
form $(\top. X) . Y$ for some types $X\in \hTy\top$ and
$Y\in \hTy{\top. X}$. The presheaf $(\top. X) . Y$ is defined such
that the elements of $((\top. X) . Y)(\Phi)$ have the form
$((*, x), y)$ for some~$x$ and~$y$. $*$ is the only element of a uniquely chosen
singleton set in $\Set$.  One should think of them as representing
values of the variables of the context $x\of X, y \of Y$.

All presheaf categories also support dependent function
spaces~\citep{Hofmann:NI97}. For $X : \hTy \Gamma$ and $Y : \hTy{\Gamma.X}$, we write
$\hat\Pi(X, Y) : \hTy\Gamma$ for the dependent function space. This is defined to be
\begin{align*}
  \hat\Pi(X, Y)(\Phi : \CC, d : \Gamma(\Phi)) := \{ f
  &: (\Psi : \CC)(\delta : \Psi \to \Phi)(x : X(\Psi, \Gamma(\delta, d))) \to Y(\Psi, (\Gamma(\delta, d), x)) \\
  &\mid \forall \Psi, \delta, x, \Psi', \delta' : \Psi' \to \Psi.\,
    f(\Psi', \delta \circ \delta', X(\delta', x)) = Y(\delta', f(\Psi, \delta, x))\ \}
\end{align*}
$\hat\Pi(X, Y)$ is a set of functions the values of which are coherent under the
morphisms of the base category $\CC$. We can show that this definition of $\hat\Pi$
types respect type substitutions. 

This outlines how the structure of~$\hat\CC$ relates to dependent type theory,
see e.g.~\citep{Jacobs:TCS93} for more details. We will use this type theory
as a convenient internal language to work with the structure of~$\hat\CC$.
It is well-known that~$\hat\CC$ has enough structure to support
dependent sums and extensional identity types,
among others in its internal dependent type theory, see e.g.~\citep{Jacobs:TCS93}.
We do not need to spell out the details of their interpretation.

We use Agda notation for the types and terms of this internal dependent type theory.
We write $(x \of S) \to T$ for a dependent function type
and write $\lamb x\of S.m$ and $m\ n$ for the associated lambda-abstractions and applications.
As usual, we will sometimes also write $S\to T$ for $(x \of S) \to T$ if~$x$ does not appear in~$T$.
However, to make it easier to distinguish the function spaces at various levels, we will
write $(x \of S) \to T$ by default even when~$x$ does not appear in~$T$.
We use $\tlet x m n$ as an abbreviation for $\app{(\lamb x\of T.n)} m$, as usual.
For two terms $m\of T$ and $n\of T$, we write $m =_T n$ or just $m = n$ for the
associated identity type.

In the spirit of Martin-Löf type theory, we will identify the basic types and
terms needed for our constructions successively as they are needed. 
In the following sections, we will expose the
structure of $\hat\CC$ step by step until we have enough to interpret contextual
types.

While much of the structure of $\hat\CC$ can be captured by adding rules and constants to
standard Martin-Löf type theory, for the comonad~$\flat$ such a formulation
would not be very satisfactory.
The issues are discussed by \citet[p.7]{shulman_2018}, for example.
To obtain a more satisfactory syntax for the comonad, we refine the internal type theory
into a modal type theory in which~$\flat$ appears as a necessity modality.
This approach goes back to \citet{DBLP:conf/tlca/BentonBPH93,Barber96,Davies:ACM01}
and is also used by recent work of \citet{shulman_2018,Licata:FSCD18} and others on working with the
$\flat$-modality in type theory.

We summarise here the typing rules for the $\flat$-modality which we will
rely on. To control the modality, one uses two kinds of variables.
In addition to standard variables $x\of T$, one has a second kind of so-called
\emph{crisp} variables $x {::} T$.
Typing judgements have the form $\Delta \mid \Theta \vdash m \of T$,
where $\Delta$ collects the crisp variables and~$\Theta$ collects the
ordinary variables.
In essence, a crisp variable $x {::} T$ represents an assumption of the form
$x \of \flat T$.
The syntactic distinction is useful, since it leads to a type theory that
is well-behaved with respect to substitution~\citep{Davies:ACM01,shulman_2018}.

The typing rules are closely related to those in
modal type systems~\citep{Davies:ACM01,Nanevski:ICML05},
where $\Delta$ is the typing context for modal
(global) assumptions and $\Theta$ for (local) assumptions,
and type systems for linear logic~\citep{Barber96}, where
$\Delta$ is the typing context for non-linear assumptions and $\Theta$ for linear assumptions.
\begin{small}
  \begin{mathpar}
    \inferrule*
    { }
    {\Delta, u{::}T, \Delta' \mid \Theta \vdash u \of T}

    \inferrule*
    { }
    {\Delta \mid \Theta, x\of T, \Theta' \vdash x \of T}

    \inferrule*
    {\Delta \mid \cdot \vdash m : T}
    {\Delta \mid \Theta \vdash \MBox m : \Box T}

    \inferrule*
    {\Delta \mid \Theta \vdash m : \Box T \\ \Delta, x{::}T \mid \Theta \vdash n : S}
    {\Delta \mid \Theta \vdash \tletbox x m n : S}
  \end{mathpar}
\end{small}
Given any term $m\colon T$ which only depends on modal variable
context $\Delta$, we can form the term $\MBox m \colon \Box T$.
We have a let-term $\tletbox x m n$ that takes a term $m\colon \Box T$
and binds it to a variable $x{::} T$.
The rules maintain the invariant that the free variables in a type~$\Box T$  are all crisp variables from the crisp context
$\Delta$.

The model has other structures.
For example, the rules for dependent products are:
\begin{small}
\[
  \begin{prooftree}
 \Hypo{\Delta \mid \Theta, x\of T \vdash m \of S}
  \Infer1[]{
    \Delta \mid \Theta \vdash \lamb x{:}T.m : (x{:}T) \to S
}
\end{prooftree}
\quad
\begin{prooftree}
  \Hypo{\Delta \mid \Theta \vdash m \of (y{:}T) \to S \quad
        \Delta \mid \Theta \vdash n \of T}
  \Infer1[]{
    \Delta \mid \Theta \vdash \app m n \of [n/y]S
}
  \end{prooftree}
\]
\end{small}%
Though the dependent function space is supported by the model, in our interpretation,
we only use the simple function space. 
It is also convenient to have a crisp variant of abstractions and applications~\cite{Nanevski:ICML05}:
\begin{small}
\[
  \begin{prooftree}
 \Hypo{\Delta, u{::}T \mid \Theta \vdash m \of S}
  \Infer1[]{
    \Delta \mid \Theta \vdash \mlam u{::}T.m : (u{::}T) \to^\flat S
}
\end{prooftree}
\quad
\begin{prooftree}
  \Hypo{\Delta \mid \Theta \vdash m \of (u{::}T) \to^\flat S \quad
        \Delta \mid \cdot \vdash n \of T}
  \Infer1[]{
    \Delta \mid \Theta \vdash \app m n \of [n/u]S
}
  \end{prooftree}
\]
\end{small}%
The superscripts $\flat$ of $\lamb$ and $\to$ indicate that we are referring to the
crisp variant.
Notice that in the application rule, $n$ is necessarily closed, i.e. the local context of $n$ is empty.
These rules allow us to directly operate on crisp variables, and we will interpret
computation-level functions of \cocon to this crisp function space. 
Despite its convenience, the full effect of introducing crisp functions to a comonadic
modality type theory is still unclear and we leave its investigation to the future. In
this paper, however, we do not make use of its full strength, but just use the syntax as
a notation for the semantic interpretation. 

When~$\Delta$ is empty, we shall
write just $\Theta \vdash m \of T$ for $\Delta \mid \Theta \vdash m \of T$.

Let us outline the categorical interpretation of the modality rules.
First recall from above that $\flat$ is a comonad on~$\hat\CC$.
For any type $X \in \hTy \Delta$
we define the type $\flat X \in \hTy {\flat \Delta}$ by $(\flat X)(\Phi, d) = X(\top, d)$.
Notice that this is well-defined, since~$d \in (\flat\Delta)(\Psi)$ implies $d\in\Delta(\top)$ for all $\Psi$,
by definition of~$\flat$. 
Notice that the definition of $\flat X$ makes $\flat (\Delta.X)$ the same as $\flat \Delta. \flat X$.

Since $\Delta$ only contains crisp variables, types in it are represented as $\flat X$.
In general, when a type $T$ appears in a context $\Delta \mid \Theta$, it is a type in 
$\hTy(\flat \Delta. \Theta)$.

A variable lookup into the crisp context requires application of the counit $\epsilon$. Consider
\begin{align*}
  v : \flat \Delta \to \flat \Delta . \flat T
\end{align*}
which is the variable projection of the crisp context. We further need to unwrap $\flat T$
using $\epsilon : \flat T \to T$. Thus we need another natural transformation:
\begin{align*}
  h &: \flat \Delta . \flat T \to \flat \Delta . T \\
  h(\Psi) &= (id_{\flat \Delta}, \epsilon_\Psi)
\end{align*}
Thus we have $h \circ v : \flat \Delta \to \flat \Delta . T$. We can show that this
natural transformation is a section of the projection map, due to the first component in $h$ is an identity
morphism. General variable lookups in the crisp context can then be obtained by
weakening. 

To model $\kw{box}$, let us first consider a section natural transformation $\flat \Delta \to
\flat \Delta . \flat T$ given another section $m : \flat \Delta \to
\flat \Delta . T$. Though there might be a more general formulation, we take advantage
of the fact that $\flat$ is idempotent in our model, and we can immediately have
$\flat m$ to be the desired natural transformation, because $\flat \flat \Delta =
\flat \Delta$. The general $\kw{box}$ with $\Theta$ can be obtained by weakening.

The interpretation of $\tletbox x m n$ is relatively easier. We consider as an example
the
special case where $\Theta$ is empty. Given
$m : \flat \Delta \to \flat \Delta . \flat T$ and
$n : \flat \Delta . \flat T \to \flat \Delta. \flat T . S$, we can just apply $m$ as a
substitution to $n$: $n\{m\} : \flat \Delta \to \flat \Delta . S\{m\}$.

Thus we see that the syntax can be interpreted as categorical constructs. In the rest of the
paper, we use syntactic translations to simplify our presentation. 

\section{From Presheaves to Contextual Types}
\label{sect:simple}

Armed with the internal type theory, we can now explore the
structure of~$\hat\CC$.

\subsection{A Universe of Representables}
\label{sect:yoneda-universe}
For our purposes, the main feature of $\hat\CC$ is that it embeds~$\CC$ fully and
faithfully via the Yoneda embedding.
In the type theory for $\hat\CC$, we may capture this embedding by means of a
Tarski-style universe. Such a universe is defined by a type of codes~$\Obj$ for
the objects of $\CC$ together with a decoding function that maps these codes into
types of the type theory for~$\hat\CC$.

Let~$\Obj$ be the set of objects of~$\CC$.
Recall from above, that any set can be considered as a constant presheaf with the trivial substitution
action, and thus as a type in the internal type theory of~$\hat\CC$. 
The terms of this type~$\Obj$ represent objects of $\CC$.
The cartesian closed structure of~$\CC$ gives us terms $\kw{unit}$, $\kw{times}$, $\kw{arrow}$ for
the terminal object~$\top$, finite products~$\times$ and the exponential
(function type).
We also have a term for the domain-level type~$\tm$.
\begin{align*}
  &\vdash \Obj\type
    &
  &\vdash \tm \colon \Obj
  &
  &\vdash \kw{times} \colon (a \of \Obj) \to (b \of \Obj) \to \Obj
  \\
  &
    &
  &\vdash \kw{unit} \colon \Obj
  &
  &\vdash \kw{arrow} \colon (a\of\Obj) \to (b\of\Obj) \to \Obj  
\end{align*}
Subsequently, we sometimes talk about objects of~$\CC$  when we intend
to describe terms of type~$\Obj$ (and vice versa).

The morphisms of $\CC$ could similarly be encoded as a presheaf with many term constants,
but this is in fact not necessary.
Instead, we can use the Yoneda embedding to decode elements of~$\Obj$ into actual types.
To this end, we use the following:
\[
  x\of \Obj \vdash \El x \type
\]
The type $\El$ is almost direct syntax for the Yoneda embedding.
The interpretation of $\El$ in~$\hat\CC$, given in detail below, is such that, for any object~$A$ of~$\CC$, the type $\El A$ is
interpreted by the presheaf~$\y(A)$.
Such a presheaf is called \emph{representable}.
One can think of $\El A$ as the type of all
morphisms of type~$\Psi \to A$ in~$\CC$ for arbitrary~$\Psi$.
Recall from above that a morphism of type $\Psi \to A$ in~$\CC$ amounts to a domain-level term
of type~$A$ that may refer to variables in~$\Psi$.
In this sense, one should think of $\El A$ as a type of domain-level terms of type $A$,
both closed and open ones.

That~$\El A$ is interpreted by~$\y A$ means that all constructions on~$\El A$ in the internal type theory are guaranteed to be substitution invariant.
In particular, since the Yoneda embedding is full and faithful, recall \Cref{sect:presheaves},
the type of functions $(x\of\El A) \to \El B$ corresponds to the morphisms of type $A \to B$ in $\CC$.
Any closed term of type ${(x:\El A)} \to \El B$ corresponds to such a morphism $A \to B$ in $\CC$ and
vice versa.
This is because~$\El A$ and~$\El B$ correspond to $\y A$ and $\y B$ respectively and 
the naturality requirements in $\hat\CC$ enforce substitution-invariance,
as outlined in \Cref{sect:presheaves}.
The type $(x:\El A) \to \El B$ thus does not represent arbitrary
functions from terms of type~$A$ to terms of type~$B$, but only substitution-invariant
ones.
If a function of this type maps a domain-level variable $x\of A$ (encoded as an element of $\El A$) to some term $M\of B$ (encoded as an element of $\El B$), then it must map any other $N\of A$ to $[N/x]M$.

In more detail, the interpretation of $\Obj\in \hTy\top$ and $\El \in \hTy\Obj$ of the above types in the internal type theory of $\hat\CC$ is given by:
\begin{align*}
    \Obj(\Psi, *) &= \{\Phi \mid \text{$\Phi$ is an object of $\CC$}\}
    &
    \El(\Psi, \Phi) &= \CC(\Psi, \Phi)
    \\
    \Obj(\sigma) &: \Phi \mapsto \Phi
    &
    \El(\sigma) &: f \mapsto f \circ \sigma
\end{align*}
Notice in particular that $(\Obj . \El)(\Psi) = \{(\Phi, f) \mid \text{$\Phi$ is object of $\CC$}, f\in\CC(\Psi, \Phi)\}$.
If, for any object~$A$ in~$\CC$, we substitute along the corresponding constant function $A\colon \top \to \Obj$, then we obtain 
$(\top. \El\{A\})(\Phi) = \{(A, f) \mid f\in\CC(\Phi, A)\}$.
This presheaf is isomorphic to~$\y A$.

We note that, while type dependencies often make it difficult to spell out types directly in terms of the categorical structure of~$\hat\CC$,
type dependencies on constant presheaves like~$\Obj$ are relatively easy to work with. 
This is because~$\Obj$ is just a set, so that the naturality constraints
of $\hat\CC$ are vacuous for functions out of~$\Obj$.
Instead of working with the dependent type directly, we can just work with all its instances.
For example, a term of type $(a\of \Obj) \to (b\of \Obj) \to  (x\of\El a) \to \El b$
is uniquely determined by a family of terms $(x\of\El A) \to \El B$ indexed by
objects~$A$ and~$B$ in~$\CC$.
We have the following lemma, which states that functions out of~$\Obj$ have independent values for all arguments.
\begin{lemma}
\label{lemma:deps}
  In the internal type theory of\/~$\hat\CC$, a closed term $t\colon (a\of \Obj) \to X$
  is in one-to-one correspondence with a family of closed terms $(t_A)_{A\in \Obj}$ such that
  $t_A\colon X[A/a]$. In particular, there is no uniformity condition on this family, 
  i.e.~for different objects~$A$ and~$B$, the terms $t_A$ and $t_B$ may be arbitrary unrelated 
  terms of types $X[A/a]$ and $X[B/a]$.
\end{lemma}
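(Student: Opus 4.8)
The plan is to exploit the fact that $\Obj$ is interpreted by a constant presheaf, so that the naturality conditions governing morphisms out of $\Obj. \El\cdots$ impose no constraint along the $\Obj$-coordinate. Concretely, a closed term $t\colon (a\of\Obj)\to X$ is, by the interpretation of dependent function types recalled in \Cref{sec:internal}, an element of $\hat\Pi(\Obj, X)(\top, *)$, i.e.~a family of functions $t(\Psi,\delta,A)$ assigning, for every $\Psi$, every $\delta\colon \Psi\to\top$ (there is exactly one such $\delta$, since $\top$ is terminal in $\CC$), and every $A\in\Obj(\Psi,*)=\Obj$, an element of $X(\Psi,(*,A))$, subject to the coherence equation $t(\Psi',\delta\circ\delta',\Obj(\delta')(A)) = X(\delta')(t(\Psi,\delta,A))$. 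Since $\Obj(\delta') = \id$ and $\delta\circ\delta' = \delta$ (both by terminality), this coherence equation says precisely that, for fixed $A$, the function $\Psi\mapsto t(\Psi,!,A)$ is a natural transformation $\y(\top)\to X\{A/a\}$ — equivalently, by Yoneda, an element $t_A\in X\{A/a\}(\top,*)$, i.e.~a closed term $t_A\colon X[A/a]$. Crucially, the coherence equation relates $t(\Psi,!,A)$ only to $t(\Psi',!,A)$ with the \emph{same} $A$; there is no equation linking $t(-,!,A)$ to $t(-,!,B)$ for $A\neq B$.

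The steps I would carry out are: (1) unfold the definition of $\hat\Pi(\Obj,X)(\top,*)$ from \Cref{sec:internal} and observe that the argument position ranges over $\Obj(\Psi,*)=\Obj$ for all $\Psi$, independently of $\Psi$; (2) use terminality of $\top$ in $\CC$ to collapse the $\delta$-quantification and to simplify the coherence equation to the statement ``$\Psi\mapsto t(\Psi,!,A)$ is natural'' for each fixed $A$; (3) invoke the Yoneda lemma (as recalled in \Cref{sect:presheaves}: $Z(\Gamma)\iso\hat\CC(\y\Gamma,Z)$) with $\Gamma=\top$ and $Z=X\{A/a\}$ to identify such a natural transformation with a closed term $t_A\colon X[A/a]$; (4) check the two directions of the correspondence — given $t$ read off $(t_A)_A$, and given an arbitrary family $(t_A)_A$ assemble a term $t$ by setting $t(\Psi,!,A)$ to be the component at $\Psi$ of the transformation corresponding to $t_A$ — and verify that these are mutually inverse, which is immediate from Yoneda; (5) note that the assembled $t$ satisfies the coherence equation no matter how the $t_A$ were chosen, precisely because the equation never mixes different $A$'s, which gives the ``no uniformity condition'' clause.

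I expect the only mild subtlety to be bookkeeping in step (2): one must be careful that the argument $x\of\El a$ in the instances $X[A/a]$ still carries genuine $\CC$-naturality (the family $(t_A)_A$ is a family of \emph{morphisms} of $\hat\CC$, each individually natural), and that it is only the \emph{outer} dependency on $a\of\Obj$ that is unconstrained. In other words, the lemma does not say terms of $(a\of\Obj)\to\El a\to\El a$ are arbitrary functions on elements — each $t_A\colon \El A\to\El A$ is still forced to be substitution-invariant by the naturality internal to $\hat\CC$ — it says only that the choice of $t_A$ may vary wildly with $A$. Making this distinction precise is the main thing to get right; everything else is a direct application of Yoneda together with the triviality of the substitution action on the constant presheaf $\Obj$. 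No step here is genuinely hard, so I would keep the write-up short and lean entirely on the Yoneda lemma plus terminality of $\top$.
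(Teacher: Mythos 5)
Your proposal is correct and follows essentially the same route the paper takes: the paper's (one-sentence) justification is precisely that $\Obj$ is a constant presheaf, so the coherence/naturality constraint in $\hat\Pi(\Obj,X)$ degenerates along the $\Obj$-coordinate and never relates $t_A$ to $t_B$ for $A\neq B$; your write-up just unfolds this explicitly via the definition of $\hat\Pi$, terminality of $\top$, and Yoneda. The closing caveat distinguishing the unconstrained outer dependency on $a\of\Obj$ from the still-enforced inner naturality of each $t_A$ is accurate and matches the paper's remark following the lemma.
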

Notice that such a lemma would not be true, e.g., with $\El A$ instead of $\Obj$.
We have seen above that functions of type $\El A \to \El B$ correspond to morphisms of type~$A\to B$ in~$\CC$.
By definition of~$\CC$, a morphism~$A\to B$ corresponds to a domain-level term $x\of A \vdash t \colon B$.
Such terms are not in one-to-one correspondence with families of closed terms of type~$B$ indexed by closed terms of
type~$A$.

To summarise this section, by considering the Yoneda embedding as a decoding function~$\CEl$ of a universe \'a la Tarski, 
we get access to $\CC$ in the internal type theory of $\hat\CC$.
Since the universe consists of the representable presheaves, we call it
the \emph{universe of representables}.

The following lemmas state that the embedding preserves terminal
object, binary products and the exponential.
\begin{lemma}
  \label{lemma:terminal}
  The internal type theory of\/ $\hat\CC$ has a term
  $\vdash \kw{terminal} \colon \El \kw{unit}$, such that $x = \kw{terminal}$
  holds for any $x\colon \El \kw{unit}$.
\end{lemma}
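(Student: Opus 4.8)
The plan is to unfold the interpretation of $\El\kw{unit}$ and appeal to the universal property of the terminal object. First I would recall that, by the interpretation of the universe given immediately above, the type $\El\kw{unit}$ is interpreted by the presheaf $\y(\top)$, with $\top$ the terminal object of $\CC$ and $\kw{unit}$ its code. The key observation is then that $\y(\top)$ is itself terminal in $\hat\CC$: this is immediate from the fact that the Yoneda embedding preserves finite products (recalled in \Cref{sect:presheaves}), since the terminal object is the empty product; alternatively one checks it directly, $\y(\top)(\Psi) = \CC(\Psi,\top)$ being a singleton for each $\Psi$ because $\top$ is terminal in $\CC$, and the restriction maps between singletons commuting trivially, so $\y(\top) \iso \top$ in $\hat\CC$.

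Next I would use the term--morphism correspondence of \Cref{sec:internal}. Since $\El\kw{unit}$ is representable, it is an honest object of $\hat\CC$, i.e.\ a simple type, so context extension by it is a plain product, $\Gamma . \El\kw{unit} \iso \Gamma \times \y(\top)$; a term $\Gamma \vdash x \colon \El\kw{unit}$ is therefore a section of the projection $\Gamma\times\y(\top)\to\Gamma$, equivalently a morphism $\Gamma \to \y(\top)$ in $\hat\CC$. Because $\y(\top)$ is terminal, there is exactly one such morphism for every $\Gamma$; taking $\Gamma = \top$ yields the required closed term, which I name $\kw{terminal}$. Since the internal identity type is interpreted extensionally, as equality of morphisms, the uniqueness of the morphism $\Gamma\to\y(\top)$ then gives $x = \kw{terminal}$ for every $x\colon\El\kw{unit}$, in every context.

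I do not expect any serious obstacle; the statement is essentially the universal property of $\top$ transported along $\y$ into the internal language. The only point that needs care is the bookkeeping between $\hat\CC$ and its internal type theory --- in particular noticing that $\El\kw{unit}$ is non-dependent (a representable presheaf), which is exactly what lets one replace ``closed term of $\El\kw{unit}$'' by ``global point of $\y(\top)$'' and so reduce the lemma to that universal property.
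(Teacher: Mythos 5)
Your proposal is correct and matches the paper's own argument, which simply notes that \Cref{lemma:terminal} (and \Cref{lemma:products}) are consequences of the Yoneda embedding preserving limits: $\El\kw{unit}$ is interpreted as $\y(\top)$, which is terminal in $\hat\CC$, so the unique global point gives $\kw{terminal}$ and the uniqueness of maps into it gives $x = \kw{terminal}$. You merely spell out the same reasoning in more detail.
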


\begin{lemma}
  \label{lemma:products}
  The internal type theory of\/ $\hat\CC$ justifies the terms below,
  such that
  $\kw{fst}\ (\kw{pair}\ x\ y) = x$,
  $\kw{snd}\ (\kw{pair}\ x\ y) = y$,
  $z = \kw{pair}\ (\kw{fst}\ z)\ (\kw{snd}\ z)$
  for all~$x, y, z$.

  \smallskip
  \(
  \begin{aligned}
    c\of\Obj\comma d\of \Obj
    &\vdash
      \kw{fst} \colon (z : \El{(\kw{times}\ c\ d)}) \to \El c
    \\
    c\of\Obj\comma d\of \Obj
    &\vdash
      \kw{snd} \colon (z : \El{(\kw{times}\ c\ d)}) \to \El d
    \\
    c\of\Obj\comma d\of \Obj
    &\vdash
      \kw{pair} \colon (x:\El c) \to (y:\El d) \to \El{(\kw{times}\ c\ d)}
      \end{aligned}
      \)
\end{lemma}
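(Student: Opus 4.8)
The plan is to reduce the statement, for each of the three operations, to the fact recalled in \Cref{sect:presheaves} that the Yoneda embedding $\y\colon\CC\to\hat\CC$ preserves finite products. Since $\kw{fst}$, $\kw{snd}$, $\kw{pair}$ are to be \emph{closed} terms whose types are iterated function types out of $\Obj$, \Cref{lemma:deps}, applied once for $c$ and once for $d$, reduces the problem to exhibiting, for every pair of objects $A$ and $B$ of $\CC$, morphisms
\[
  \kw{fst}_{A,B}\colon\El{(\kw{times}\ A\ B)}\to\El A,\qquad
  \kw{snd}_{A,B}\colon\El{(\kw{times}\ A\ B)}\to\El B,\qquad
  \kw{pair}_{A,B}\colon\El A\times\El B\to\El{(\kw{times}\ A\ B)}
\]
in $\hat\CC$ for which the three displayed equations hold at the level of these morphisms for each choice of $A$ and $B$; the families $(\kw{fst}_{A,B})_{A,B}$, $(\kw{snd}_{A,B})_{A,B}$ and $(\kw{pair}_{A,B})_{A,B}$ then assemble into the required terms.

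First I would recall that, for fixed $A$ and $B$, the type $\El{(\kw{times}\ A\ B)}$ is interpreted by $\y(A\times B)$, where the $\times$ on the right is the product of $\CC$, while $\El A\times\El B$ is interpreted by the pointwise product $\y(A)\times\y(B)$ in $\hat\CC$. Because $\y$ preserves finite products, $\bigl(\y(A\times B),\y(\pi_1),\y(\pi_2)\bigr)$ — with $\pi_1\colon A\times B\to A$ and $\pi_2\colon A\times B\to B$ the projections in $\CC$ — is a product cone in $\hat\CC$; its comparison map into $\y(A)\times\y(B)$ has as component at a stage $\Psi$ the bijection $\CC(\Psi,A\times B)\iso\CC(\Psi,A)\times\CC(\Psi,B)$ sending $f$ to $(\pi_1\circ f,\pi_2\circ f)$, with inverse $(f,g)\mapsto\langle f,g\rangle$ and naturality in $\Psi$ given by precomposition. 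I would then set $\kw{fst}_{A,B}:=\y(\pi_1)$ and $\kw{snd}_{A,B}:=\y(\pi_2)$, and take $\kw{pair}_{A,B}$ to be the pairing morphism into this product cone, that is, the inverse comparison map $(f,g)\mapsto\langle f,g\rangle$.

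Finally, the three equations hold because $\bigl(\El{(\kw{times}\ A\ B)},\kw{fst}_{A,B},\kw{snd}_{A,B}\bigr)$ is a product in $\hat\CC$ with pairing $\kw{pair}_{A,B}$: the $\beta$-equations amount to $\kw{fst}_{A,B}\circ\kw{pair}_{A,B}$ and $\kw{snd}_{A,B}\circ\kw{pair}_{A,B}$ being the two projections of $\El A\times\El B$, and the $\eta$-equation to $\kw{pair}_{A,B}\circ\langle\kw{fst}_{A,B},\kw{snd}_{A,B}\rangle=\id$, and all three follow — computing pointwise and using functoriality of $\y$ — from $\pi_1\circ\langle f,g\rangle=f$, $\pi_2\circ\langle f,g\rangle=g$ and $\langle\pi_1\circ h,\pi_2\circ h\rangle=h$ in $\CC$, which hold since $\CC$ is cartesian. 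Because the internal type theory of $\hat\CC$ interprets its simply-typed function and product fragment soundly and has extensional identity types, equality of these morphisms coincides with inhabitation of the corresponding internal identity types, so the stated internal equations follow. I do not expect a real obstacle here; the only points requiring care are the bookkeeping between the $\Obj$-dependent types and the pointwise morphism-level claims, which is exactly what \Cref{lemma:deps} licenses, and the passage between internal and categorical equality, which is standard for presheaf models.
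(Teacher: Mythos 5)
Your proposal is correct and follows the same route as the paper: the paper's proof of this lemma is the single remark that it is a consequence of the Yoneda embedding preserving (finite) limits, and your argument is exactly that observation spelled out — using \Cref{lemma:deps} to reduce to object-indexed families, taking $\kw{fst}$, $\kw{snd}$ to be $\y(\pi_1)$, $\y(\pi_2)$ and $\kw{pair}$ to be the inverse of the product-comparison map, and deriving the equations from the universal property of products in $\CC$.
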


\begin{lemma}
  \label{lemma:exponentials}
  The internal type theory of\/ $\hat\CC$ justifies the terms below
  such that
  $\kw{arrow-i}\ (\kw{arrow-e}\ f) = f$ and
  $\kw{arrow-e}\ (\kw{arrow-i}\ g) = g$ for all~$f, g$.

  \smallskip
  \(
  \begin{aligned}
    c\of\Obj\comma d\of \Obj
    &\vdash
      \kw{arrow-e} \colon (x\of\El{(\kw{arrow}\ c\ d)}) \to (y\of\El c) \to \El d
    \\
    c\of\Obj\comma d\of \Obj
    &\vdash
      \kw{arrow-i} \colon (y\of(\El c \to \El d)) \to \El{(\kw{arrow}\ c\ d)}
  \end{aligned}
  \)
\end{lemma}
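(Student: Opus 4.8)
The plan is to reduce the dependently-typed statement to its instances using \Cref{lemma:deps}, and then to recognise the remaining claim as exactly the fact, recalled in \Cref{sect:presheaves}, that the Yoneda embedding preserves exponentials.

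First I would apply \Cref{lemma:deps} (twice, once for $c$ and once for $d$): a closed term of a type of the form $c\of\Obj\comma d\of\Obj\vdash X$ is the same as an arbitrary, non-uniform family $(X_{A,B})_{A,B\in\Obj}$ of closed terms of type $X[A/c,B/d]$. So it suffices to produce, for each pair of objects $A,B$ of $\CC$, mutually inverse closed terms $\vdash\kw{arrow-e}_{A,B}\colon\El{(\kw{arrow}\ A\ B)}\to\El A\to\El B$ and $\vdash\kw{arrow-i}_{A,B}\colon(\El A\to\El B)\to\El{(\kw{arrow}\ A\ B)}$; the two equations then hold instancewise. Next I would unfold the interpretation: $\El A$ is interpreted by $\y(A)$, the constant $\kw{arrow}\ A\ B$ denotes the exponential object $A\Rightarrow B$ of the cartesian closed category $\CC$, so $\El{(\kw{arrow}\ A\ B)}$ is interpreted by $\y(A\Rightarrow B)$; and since our interpretation uses only the simple function space, $\El A\to\El B$ is interpreted by the $\hat\CC$-exponential $\y(A)\Rightarrow\y(B)$. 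Thus what must be exhibited is an isomorphism $\y(A\Rightarrow B)\iso(\y(A)\Rightarrow\y(B))$ in $\hat\CC$ together with explicit names for its two directions.

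For $\kw{arrow-e}_{A,B}$ I would start from the evaluation morphism $\mathsf{ev}\colon(A\Rightarrow B)\times A\to B$ of $\CC$; applying $\y$ and using that the Yoneda embedding preserves finite products gives a morphism $\y(A\Rightarrow B)\times\y(A)\to\y(B)$ in $\hat\CC$, whose transpose under the cartesian closed structure of $\hat\CC$ is the desired term. This transpose is precisely the canonical comparison map $\y(A\Rightarrow B)\to(\y(A)\Rightarrow\y(B))$, and ``$\y$ preserves exponentials'' is exactly the assertion that it is an isomorphism; I would take $\kw{arrow-i}_{A,B}$ to be its inverse, so that the two equations $\kw{arrow-i}\ (\kw{arrow-e}\ f)=f$ and $\kw{arrow-e}\ (\kw{arrow-i}\ g)=g$ are just the triangle identities of an inverse pair. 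Equivalently, one can describe $\kw{arrow-i}$ directly by currying in $\CC$: by fullness and faithfulness of $\y$ and preservation of products, a morphism $\y(\Gamma)\times\y(A)\to\y(B)$ corresponds to a morphism $\Gamma\times A\to B$ in $\CC$, and currying in $\CC$ sends it to a morphism $\Gamma\to(A\Rightarrow B)$; naturality in $\Gamma$ makes this a morphism of $\hat\CC$, and the $\beta\eta$-laws of $\CC$ give that it inverts $\kw{arrow-e}$.

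The main obstacle I anticipate is bookkeeping rather than conceptual difficulty: one has to verify that the internal-language function type $\El A\to\El B$ is indeed interpreted by the $\hat\CC$-exponential (not by some presheaf of ``pointwise'' functions), and that the comparison map built from $\CC$'s evaluation morphism is the same one that the Yoneda-lemma computation of $(X\Rightarrow Y)(\Gamma)$ in \Cref{sect:presheaves} identifies as an isomorphism. Threading the $\alpha\beta\eta$-identities of $\CC$ through the Yoneda isomorphisms to check naturality and the two round-trip equations is routine but is where all the actual work lies; the dependency on $c$ and $d$ adds nothing once \Cref{lemma:deps} has been invoked, precisely because that lemma imposes no uniformity on the family.
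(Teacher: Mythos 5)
Your proposal is correct and follows essentially the same route as the paper: reduce to instances at each pair of objects $A,B$ via \Cref{lemma:deps}, then observe that the claim is exactly preservation of exponentials by the Yoneda embedding, established by the chain of isomorphisms $(\y A\Rightarrow \y B)(\Gamma)\iso\dots\iso \y(A\Rightarrow B)(\Gamma)$ natural in $\Gamma$. Your extra detail about realising the comparison map as the transpose of $\y(\mathsf{ev})$ is a more explicit rendering of the same argument, not a different one.
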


\begin{proof}
  \Cref{lemma:terminal,lemma:products} are consequences of the preservation of limits of
  the Yoneda embedding.

  For \Cref{lemma:exponentials}, it suffices, by \Cref{lemma:deps}, to establish an isomorphism
  between $\El{(\kw{arrow}\ A\ B)}$ and $(y\of\El A) \to \El B$ for all objects~$A$ and~$B$ of~$\CC$.
  By definition of~$\El$, this amounts to preservation of exponentials by~$\y$.
  The goal follows because we have the following isomorphisms, which are natural in~$\Gamma$:
  \begin{align*}
  (\y A\Rightarrow \y B) (\Gamma)
  &\ \iso\ \hat\CC(\y\Gamma, \y A \Rightarrow \y B)
    \ \iso\ \hat\CC(\y\Gamma \times \y A, \y B)
  \\
  &\ \iso\ \hat\CC(\y(\Gamma \times A), \y B)
   \ \iso\ \CC(\Gamma \times A, B)
  \\
  &\ \iso\ \CC(\Gamma, A \Rightarrow B)
   \ \iso\ \hat\CC(\y\Gamma, \y(A \Rightarrow B))
   \ \iso\ \y(A \Rightarrow B)(\Gamma)
\end{align*}
\end{proof}

\subsection{Higher-Order Abstract Syntax}

The last lemma in the previous section states that $\El A \to \El B$ is isomorphic to
$\El (\kw{arrow}\ A\ B)$.
This is particularly useful to lift HOAS-encodings from $\CC$ to $\hat\CC$.
For instance, the domain-level term constant \lstinline`lam: (tm -> tm) -> tm`
gives rise to an element of $\El {(\kw{arrow}\ (\kw{arrow}\ \tm\ \tm)\ \tm)}$.
But this type is isomorphic to $(\El \tm \to \El \tm) \to \El \tm$, by the lemma.

This means that the higher-order abstract syntax constants lift to~$\hat\CC$:
\begin{align*}
  \kw{app} &\colon (m\of\El \tm) \to (n\of\El \tm) \to \El \tm
  &
  \kw{lam} &\colon (m\of(\El \tm \to \El \tm)) \to \El \tm
\end{align*}

Once one recognises $\El A$ as $\y(A)$, the adequacy of this higher-order abstract
syntax encoding lifts from~$\CC$ to $\hat\CC$ as in~\citet{Hofmann:LICS99}.
For example, an argument~$M$ to $\kw{lam}$ has type $\El \tm \to \El \tm$, which
is isomorphic to $\El (\kw{arrow}\ \tm\ \tm)$.
But this type represents (open) domain-level terms $t\colon \tm \to \tm$.
The term $\kw{lam}\ M\colon \El \tm$ then represents the domain-level term
$\kw{lam}\ t\colon \tm$, so it just lifts the domain-level.

\subsection{Closed Objects}
\label{sect:modality}

One should think of $\Box T$ as the type of `closed' elements of $T$.
In particular, $\Box (\El A)$ represents morphisms of type $\top \to A$ in~$\CC$,
recall the definition of $\flat$ from \Cref{sect:presheaves}
and that $\El A$ corresponds to $\y A$.
In the term model $\CC$, the morphisms $\top \to A$ correspond to closed domain-language terms of type~$A$.
Thus, while $\El A$ represents both open and closed domain-level terms,
$\Box(\El A)$ represents only the closed ones.

This applies also to the type $\El A \to \El B$.
We have  seen above that $\El A \to \El B$ is isomorphic to
$\El {(\kw{arrow}\ A\ B)}$ and may therefore be thought of as containing the
terms of type~$B$ with a distinguished variable of type~$A$.
But, these terms may contain other free domain language variables.
The type $\Box (\El A \to \El B)$, on the other hand, contains only terms
of type~$B$ that may contain (at most) one variable of type~$A$.

Restricting to closed objects with the modality is useful because it disables
substitution-invariance.
For example, the internal type theory for $\hat\CC$ justifies a function
$\kw{is-lam}\colon (x{:}\Box (\El \tm)) \to \kw{bool}$ that
returns $\kw{true}$ if and only if the argument represents an object language
lambda abstraction.
We shall define it in the next section.
Such a function cannot be defined with type $\El \tm \to \kw{bool}$, since it would
not be invariant under substitution.
Its argument ranges over terms that may be open; which particularly includes domain-level
variables.
The function would have to return $\kw{false}$ for them, since a domain-level
variable is not a lambda-abstraction.
But after substituting a lambda-abstraction for the variable, it would have to return
$\kw{true}$, so it could not be substitution-invariant.

We note that the type $\Obj$ consists only of closed elements and that
$\Obj$ and~$\Box \Obj$ happen to be definitionally equal types (an isomorphism
would suffice, but equality is more convenient).

\subsection{Contextual Objects}
\label{sect:contextual}

Using function types and the modality, it is now possible to work with contextual objects
that represent domain level terms in a certain context, much like in~\citet{Pientka:POPL08,Pientka:TLCA15}.
A contextual type $\cbox{\CT{\Psi}{A}}$ is a boxed function type of the form
$\Box (\El \Psi \to \El A)$.
It represents domain-level terms of type~$A$ with variables from~$\Psi$.
Here, we consider the domain-level context~$\Psi$ as a term that encodes it.
The interpretation will make this precise.

For example, domain-level terms with up to two free variables now appear as terms of type
\[
\Box (\El ((\kw{times}\ (\kw{times}\ \kw{unit}\ \tm)\ \tm) \to \El \tm ),
\]
as the following example illustrates.
\[
  \begin{array}{ll}
\mathsf{box} ~ (\lamb \cvar\of\El ((\kw{times}\ (\kw{times}\ \kw{unit}\ \tm)\ \tm).
  &\kw{let}\ x_1 = \kw{snd}\ (\kw{fst}\ \cvar) \ \kw{in}\\
  &\kw{let}\ x_2 = {\kw{snd}\ \cvar} \ \kw{in}  \\
  & \quad {\kw{app}\ (\kw{lam}\ (\lamb x\of\El \tm.\, \kw{app}\ x_1\ x))\ x_2~)}
  \end{array}
\]
Here, the variables~$x_1$ and~$x_2$ are bound at the meta level, i.e. the internal
language. As we will see in the next section, the example interprets the 
open domain-level term $ \kw{app}\ (\kw{lam}\ (\lambda x. \kw{app}\ x_1\
x))\ x_2$ with domain-level variables $x_1{:}\tm$ and $x_2{:}\tm$. 

This representation integrates substitution as usual.
For example,
given crisp variables $m{::}\El ({\kw{times}\ c\ \tm}) \to \tm$
and $n{::} \El c \to \tm$ for contextual terms,
the term $\MBox{(\lamb \cvar\of\El c.\, m\ (\kw{pair}\ \cvar\ (n\ \cvar)))}$ represents substitution of~$n$ for the last
variable in the context of~$m$.

For working with contextual objects, it is convenient to lift the
constants~$\kw{app}$ and~$\kw{lam}$ to contextual types.
\begin{align*}
c \of \Obj
  &\vdash
    \kw{app}' \colon
        \Box(\El c \to \El \tm)
    \to \Box(\El c  \to \El \tm)
    \to \Box{(\El c \to \tm)}
  \\
  c{:}\Obj
  &\vdash
    \kw{lam}' \colon \Box(\El (\kw{times}\ c\ \tm) \to  \El \tm)
    \to \Box{(\El c \to \El \tm)}
\end{align*}
These terms are defined by:
\[
  \begin{array}{lcl}
\kw{app}' & :=& \lamb m, n.\,
\kw{let}~\kw{box}~{m'} = {m}\ \kw{in}\ \kw{let}~\kw{box}~{n'} = {n}\  \kw{in}\ \\
& & \quad\quad\quad~~\MBox{(\lamb \cvar\of\El c.\, \kw{app}\ (m'\ \cvar)\ (n'\ \cvar))}
\\
\kw{lam}' & := & \lamb m.\, \tletbox{m'}{m}{\MBox{(\lamb \cvar\of\El c.\, \kw{lam}\ (\lamb x\of\El \tm.\,\ m'\ (\kw{pair}\ \cvar\ x)))}}
  \end{array}
\]

A contextual type for domain-level variables (as opposed to arbitrary terms) can be
defined by restricting the function space in $\Box(\El \Psi \to \El A)$
to consist only of projections.
Projections are functions of the form $\kw{snd} \circ \kw{fst}^k$,
where we write~$\kw{fst}^k$ for the $k$-fold iteration $\kw{fst} \circ \dots \circ \kw{fst}$.
Let us write $\El \Psi \to_v \El A$ for the subtype of $\El \Psi \to \El A$ consisting only of projections.
The contextual type $\Box(\El \Psi \to_v \El A)$ is then a subtype of
$\Box(\El \Psi \to \El A)$.

With these definitions, we can express a primitive recursion scheme for contextual types. We write it in its general form where the result type $A$ can possibly depend on $x$. This is only relevant for the dependently typed case; in the simply typed case, the only dependency is on~$c$.
\begin{lemma}\label{lem:rec}
  Let $c\of\Obj\comma x\of\Box(\El c \to \El \tm) \vdash A\ c\ x \type$ and define:

  \smallskip
  \(
  \begin{small}
  \begin{aligned}
    X_{\kw{var}} &:= (c \of \Obj) \to (x\of \Box(\El c \to_v \El \tm)) \to  A\ {c}\ x
    \\
    X_{\kw{app}} &:= (c \of \Obj) \to (x, y \of \Box(\El c\to \El \tm)) \to\, A\ c \ x \to A \ {c} \ y \to A\ c \ (\app {\app {\kw{app}'}{x}}\ {y})
    \\
    X_{\kw{lam}} &:= (c \of \Obj) \to (x \of \Box(\El ({\kw{times}\ c\  \tm}) \to \El \tm)) \to A \ (\kw{times}\ c \ \tm) \ x
\to A\ {c}\ (\app{\kw{lam}'} {x})
  \end{aligned}
\end{small}
\)

\smallskip\noindent
Then, $\hat\CC$ justifies a term

\smallskip
  \(
  \vdash
  \kw{rec} \colon X_{\kw{var}} \to X_{\kw{app}} \to X_{\kw{lam}} \to (c \of \Obj) \to (x \of \Box(\El c \to \El \tm)) \to\, \app{A}{c}\ x
  \)

  \smallskip\noindent
  such that the following equations are valid.

  \smallskip
\(
  \begin{array}{lcl}
    \app{\kw{rec}\ t_{\kw{var}}\ t_{\kw{app}} \ t_{\kw{lam}}}{c}\ x
    &= & \app{t_{\kw{var}}}{c}\ x \text{\hspace{2cm} \text{if $x \of \Box(\El c \to_v \El \tm)$ }}
    \\
    \app{\kw{rec}\ t_{\kw{var}}\ t_{\kw{app}} \ t_{\kw{lam}}}{c}\ (\app{\app{\kw{app}'}s} t)
    &= &\app{t_{\kw{app}}}{c}\ s\ t
    \\
    \app{\kw{rec}\ t_{\kw{var}}\ t_{\kw{app}} \ t_{\kw{lam}}}{c}\ (\app {\kw{lam}'}{s})
    &= &\app{t_{\kw{lam}}}{c}\ s
  \end{array}
\)
\end{lemma}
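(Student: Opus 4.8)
The plan is to construct $\kw{rec}$ in the semantics, exploiting that the presheaves built from $\Obj$ and $\Box(\El c \to \El \tm)$ are constant. By the definition of $\flat$ (\Cref{sect:presheaves}) together with \Cref{lemma:exponentials}, the type $\Box(\El c \to \El \tm)$ is interpreted by the constant presheaf on $\hat\CC(\y c, \y\tm) \iso \CC(c,\tm)$, the set of domain-level terms of type $\tm$ with a free variable of type $c$; likewise $\Box(\El (\kw{times}\ c\ \tm) \to \El\tm)$ is the constant presheaf on domain-level terms of type $\tm$ in the extended context $c, x{:}\tm$. Combining this with \Cref{lemma:deps} (which lets us treat the $c\of\Obj$ argument concretely and $t_{\kw{var}}, t_{\kw{app}}, t_{\kw{lam}}$ as families indexed by objects of $\CC$), the construction of $\kw{rec}$ reduces to the following: for each object $c$ of $\CC$ and each domain-level term $M \in \CC(c,\tm)$, produce an element $r(c,M) \in A\ c\ \cbox M$, where $\cbox M$ denotes the contextual object $\MBox{f_M}$ with $f_M \colon \El c \to \El\tm$ the closed internal term induced by $M$. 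Since the substitution action on a constant presheaf is trivial and each $\cbox M$ is a global element, the family $(r(c,M))_{M}$ then assembles into the $c$-instance of $\kw{rec}$.

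I would define $r(c,M)$ by well-founded recursion on the size of the $\beta\eta$-normal form of $M$, using the case analysis of domain-level terms of type $\tm$: such a normal term is exactly one of (i) a variable, i.e.~a projection $\kw{snd}\circ\kw{fst}^k$, in which case $\cbox M$ lies in the subtype $\Box(\El c \to_v \El\tm)$ and I set $r(c,M) := t_{\kw{var}}\ c\ \cbox M$; (ii) an application $\kw{app}\ M_1\ M_2$ with $M_1, M_2 \in \CC(c,\tm)$ proper subterms, in which case, unfolding the definition of $\kw{app}'$ and using the computation rules $\tletbox{x}{\MBox m}{n} = n[m/x]$ and $\beta$ for the internal function space, one has $\cbox M = \kw{app}'\ \cbox{M_1}\ \cbox{M_2}$, and I set $r(c,M) := t_{\kw{app}}\ c\ \cbox{M_1}\ \cbox{M_2}\ (r(c,M_1))\ (r(c,M_2))$; or (iii) an abstraction $\kw{lam}\ (\lamb x. M')$ with $M' \in \CC(\kw{times}\ c\ \tm, \tm)$ a proper subterm, in which case, unfolding $\kw{lam}'$ and additionally using $\kw{fst}(\kw{pair}\ x\ y) = x$ and $\kw{snd}(\kw{pair}\ x\ y) = y$ from \Cref{lemma:products}, one has $\cbox M = \kw{lam}'\ \cbox{M'}$, and I set $r(c,M) := t_{\kw{lam}}\ c\ \cbox{M'}\ (r(\kw{times}\ c\ \tm,\, M'))$. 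The recursion terminates because in (ii) and (iii) it is called on strictly smaller terms, even though in (iii) the context is enlarged from $c$ to $\kw{times}\ c\ \tm$.

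It then remains to check that $r$ genuinely assembles into an internal term of the stated type and that the three displayed equations hold. The former is essentially automatic: naturality is trivial on the constant presheaves involved, and the only potential subtlety, the dependency of $A$ on $x$ in the dependently-typed case, is harmless because every element of the relevant presheaf is a global element $\cbox M$. The equations fall out of the construction: if $x$ has type $\Box(\El c \to_v \El\tm)$ then $x = \cbox M$ for a projection $M$, so case (i) applies and returns $t_{\kw{var}}\ c\ x$; and feeding $\kw{app}'\ s\ t$ (with $s = \cbox{M_1}$, $t = \cbox{M_2}$) resp.~$\kw{lam}'\ s$ (with $s = \cbox{M'}$) to $\kw{rec}$ lands in case (ii) resp.~(iii), since by the same computations these are $\cbox{\kw{app}\ M_1\ M_2}$ resp.~$\cbox{\kw{lam}(\lamb x.M')}$, and the definition returns $t_{\kw{app}}\ c\ s\ t\ (\dots)\ (\dots)$ resp.~$t_{\kw{lam}}\ c\ s\ (\dots)$ as required.

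The step I expect to be the main obstacle is the exhaustiveness of this case analysis. It is clean precisely when the context $c$ is built only from $\kw{unit}$, $\kw{times}$ and $\tm$ --- as holds for all the examples here and for the context schemas of \cocon --- because then every $\beta\eta$-normal term of type $\tm$ genuinely is a projection, an application of $\kw{app}$ to two such terms, or $\kw{lam}$ applied to an abstraction, and the three cases are mutually exclusive with well-founded proper subterms. For a general object $c$ of $\CC$ that mentions function types, a normal term of type $\tm$ may instead be a neutral term headed by a function-typed variable, which falls under none of the three cases, so the recursor is really available only over this restricted class of contexts; the care in writing up the proof goes into stating the semantic fact about $\Box(\El c \to \El\tm)$ for exactly that class. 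Everything else --- the $\beta$-computations for $\kw{app}'$ and $\kw{lam}'$, naturality, and the verification of the three equations --- is routine given \Cref{lemma:deps,lemma:products,lemma:exponentials} and the $\beta$-rules of the internal modal type theory.
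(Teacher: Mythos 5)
Your proposal takes essentially the same approach as the paper's own proof, which is only a brief outline of exactly this argument: use the vacuity of naturality for $\Obj$ and for boxed (constant) presheaves to reduce to producing, for each concrete object $\Phi$ of $\CC$ and each domain-level term of type $\tm$ in context $\Phi$ up to $\alpha\beta\eta$, an inhabitant of $A\ \Phi\ t$, and then case-split on whether that term is a variable, an application or an abstraction. Your writeup is considerably more detailed and, moreover, correctly flags the exhaustiveness caveat --- that the variable/$\kw{app}$/$\kw{lam}$ case analysis on normal forms is complete only for contexts built from $\tm$ (and units/products), not for contexts containing function-typed variables --- which the paper's outline silently elides.
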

\begin{proof}[outline]
To outline the proof idea, note first that
a function of type $(c \of \Obj) \to (x\of \Box(\El c \to \El \tm)) \to\, A\ c\ x$ in $\hat\CC$,
corresponds to an inhabitant of $A\ \Phi\ t$ for each concrete object $\Phi$ of~$\CC$ and each
inhabitant $t\colon \Box(\El \Phi \to \El \tm)$.
This is because naturality constraints for boxed types are vacuous (and $\Obj = \Box\Obj$).
Next, note that inhabitants of $\Box(\El \Phi \to \El \tm)$ correspond to domain-level
terms of type~$\tm$ in context $\Phi$ up to $\alpha\beta\eta$-equality.
We can perform a case-distinction on whether it is a variable, abstraction or application
and depending on the result use $t_{\kw{var}}$, $t_{\kw{app}}$ or $t_{\kw{lam}}$
to define the required inhabitant of $A\ \Phi\ t$.
\end{proof}

As a simple example for $\kw{rec}$, we can
define the function $\kw{is-lam}$ discussed above by\newline
$\kw{rec}\
(\lamb c, x.\, \kw{false})\
(\lamb c, x,y,r_x, r_y.\, \kw{false})\
(\lamb c, x, r_x.\, \kw{true})$.

\section{Simple Contextual Modal Type Theory}
\label{sect:sctt}

We have outlined informally how the internal dependent type theory of~$\hat\CC$ can model
contextual types.
In this section, we make this precise by giving the interpretation of
\cocon \citep{Pientka:LICS19}, a contextual modal type theory where we can work with
contextual HOAS trees and computations about them, into $\hat\CC$.
We will focus here on a simply-typed version of \cocon where we use a
simply-typed domain-language with constants~$\kw{app}$ and~$\kw{lam}$
and also only allow computations about
HOAS trees, but do not consider, for example, universes.
Concentrating on a stripped down, simply-typed version of \cocon
allows us to focus on the essential aspects, namely how to interpret
domain-level contexts and domain-level contextual objects and types semantically.
The generalisation to a dependently typed domain-level such as \LF in
\Cref{sec:depty} will be conceptually straightforward, although
more technical. Handling universes is an orthogonal issue.

\begin{figure}[hbt]
\begin{center}
  \begin{small}
\[
\begin{array}{p{4.8cm}@{~}l@{~}r@{~}l}
Domain-level types           & A, B        & \bnfas & \tm \bnfalt A \to B\\
Domain-level terms           & M, N       & \bnfas & \lambda x.M \mid M\,N \mid
 x \mid \tlam\bnfalt\tapp \mid \unbox t \sigma \\
Domain-level contexts        & \Psi, \Phi & \bnfas & \cdot \bnfalt \psi \bnfalt \Psi, x{:}A\\
Domain-level context (erased) & \hatctx{\Psi},\hatctx{\Phi} & \bnfas & \cdot \bnfalt \psi \bnfalt\hatctx{\Psi}, x\\
Domain-level substitutions   & \sigma   & \bnfas & \cdot \bnfalt \wk{\hatctx\Psi} \bnfalt \sigma, M
\\[0.25em]
\hline
\\[-0.75em]
Contextual types & T & \bnfas &
                     \Psi \vdash A \bnfalt
                     \CTV \Psi A
\\
Contextual objects & C & \bnfas &
                      \hatctx{\Psi} \vdash M
\\[0.25em]
\hline
\\[-0.75em]
Domain of discourse & \ann\tau & \bnfas & \tau \bnfalt \ctx \\
Types and Terms & \tau, \IH & \bnfas &  \cbox T
                             \bnfalt (y :\ann{\tau}_1) \arrow \tau_2
\\
 & t, s &  \bnfas &  y \bnfalt  \cbox C \bnfalt \titer{\vec\R}{}{\IH} \rappto\Psi~t
             \bnfalt \tmfn y t \bnfalt  t_1~t_2

\\
Branches  & {\R} & \bnfas & \Gamma \mto t
\\
Contexts & \Gamma & \bnfas & \cdot \bnfalt \Gamma, y:\ann\tau
\end{array}
\]
  \end{small}
\end{center}
\vspace{-0.15cm}
  \caption{Syntax of \cocon with a fixed simply-typed domain $\tm$}
  \label{fig:grammar}
\end{figure}

We first define our simply-typed domain-level with the type $\tm$
and the term constants $\tlam$ and $\tapp$ (see
\Cref{fig:grammar}). Following \cocon, we allow computations to be
embedded into domain-level terms via unboxing. The intuition is that
if a program $t$ promises to compute a value of type
$\cbox{x{:}\tm,y{:}\tm \vdash \tm}$, then we can embed $t$ directly
 into a domain-level object writing $\clam \lambda x. \clam \lambda y. \capp
 \unbox{t}{}~x$, unboxing $t$.
Domain-level objects (resp.~types) can be packaged together with their domain-level
context to form a contextual object (resp.~type).  Domain-level
contexts are formed as usual, but may contain context variables to
describe a yet unknown prefix.  Last, we include domain-level
substitutions that allow us to move between domain-level contexts.
The compound substitution $\sigma, M$ extends the substitution
$\sigma$ with domain $\hat\Psi$ to a substitution with domain
$\hat\Psi,x$, where $M$ replaces $x$. Following
\citet{Nanevski:ICML05,Pientka:LICS19}, we do not store the domain
(like $\hat\Psi$) in the substitution, it can always be recovered
before applying the substitution. We also include \emph{weakening
  substitution}, written as $\wk{\hatctx\Psi}$, to describe the
weakening of the domain $\Psi$ to $\Psi, \wvec{x{:}A}$. Weakening
substitutions are necessary, as
they allow us to express the weakening of a context variable $\psi$.
Identity is a special form of the $\wk{\hatctx\Psi}$ substitution,
which follows immediately from the typing rule of $\wk{\hatctx\Psi}$.
Composition is admissible.

We summarise the typing rules for domain-level terms and types in
\Cref{fig:lftyping}. We also include typing rules for domain-level
contexts. Note that since we restrict ourselves to a simply-typed
domain-level, we simply check that $A$ is a well-formed type.
We 
remark that equality for domain-level terms and substitution is modulo~$\beta\eta$.
In particular, $\unbox{\cbox{\hatctx{\Phi} \vdash
    N}}{\sigma}$ reduces to $[\sigma] N $.

\begin{figure}[htb]
  \centering\small
\[
\begin{array}{c}
\multicolumn{1}{p{12cm}}{
\fbox{$\Gamma ; \Psi \vdash M : A$}~~Term $M$ has type $A$ in domain-level context $\Psi$ and context $\Gamma$\newline}
\\
\infer{\Gamma ; \Psi \vdash x : A}{
\Gamma \vdash \Psi : \ctx&
x{:}A \in \Psi}
\qquad
\infer{\Gamma ; \Psi \vdash \tlam : (\tm \to \tm) \to \tm}
      {\Gamma\vdash \Psi : \ctx }
\qquad
\infer{\Gamma ; \Psi \vdash \tapp : \tm \to \tm \to \tm}
      {\Gamma\vdash \Psi : \ctx }
\\[0.5em]
\infer{\Gamma ; \Psi \vdash M~N : B}
      {\Gamma ; \Psi \vdash M : A \to B &
       \Gamma ; \Psi \vdash N : A}
\qquad
 \infer{\Gamma ; \Psi \vdash \lambda x.M : A \to B}
         {\Gamma ; \Psi, x{:}A \vdash M : B}
\qquad
\infer{\Gamma ; \Psi \vdash \unbox t {\sigma} : A}
                                          {\Gamma \vdash t : \cbox{\Phi \vdash A} \text{ or }
                                          \Gamma \vdash t : \cbox{\CTV \Phi A}
                         &
          \Gamma; \Psi \vdash \sigma : \Phi}
\\[1em]
\multicolumn{1}{p{12cm}}{
\fbox{$\Gamma ; \Phi \vdash \sigma : \Psi$}~~Substitution $\sigma$
  provides a mapping from the  (domain) context $\Psi$ to $\Phi$\newline}
\\[0.3em]
\infer{\Gamma ;\Psi, \wvec{x{:}A} \vdash \wk{\hatctx\Psi} : \Psi}
{\Gamma\vdash \Psi, \wvec{x{:}A} : \ctx }
\quad
 \infer{\Gamma ; \Phi \vdash \cdot : \cdot}{\Gamma \vdash \Phi :\ctx }
\quad
\infer{\Gamma ; \Phi \vdash \sigma, M : \Psi, x{:}A}
      {\Gamma ; \Phi \vdash \sigma : \Psi &
       \Gamma ; \Phi \vdash M : \lfs \sigma \Psi A}
\\[1em]
\multicolumn{1}{l}{
\fbox{$\Gamma \vdash \Psi : \ctx$}~~\mbox{Domain-level context $\Psi$ is well-formed}} \\[0.3em]
\infer{\Gamma \vdash \cdot : \ctx}{}
\quad
\infer{\Gamma \vdash \unboxc{y} : \ctx}{\Gamma(y) = \tmctx }
\quad
\infer{\Gamma \vdash \Psi, x{:}A : \ctx}
{\Gamma \vdash \Psi : \ctx
}
\end{array}
\]
\caption{Typing Rules for Domain-level Terms, Substitutions, Contexts}\label{fig:lftyping}
\end{figure}

In our grammar, we distinguish
between the contextual type $\Psi \vdash A$ and the more restricted
contextual type $\CTV \Phi A$ which characterises only variables of
type $A$ from the domain-level context $\Phi$. We give here two sample
typing rules for $\CTV \Phi A$ which are the ones used most in practice to
illustrate the main idea.
We embed contextual objects into computations via the
modality. Computation-level types include boxed contextual types, $\cbox{\CT \Phi A}$,  and
function types, written as $ (y:\ann{\tau}_1) \arrow
\tau_2$. We overload the function space and allow as domain
of discourse both computation-level types and the schema $\tmctx$ of
domain-level contexts, although only in the latter case $y$ can occur
in $\tau_2$. We use $\tmfn y t$ to introduce functions of both kinds. We also
overload function application $t\;s$ to eliminate function types $(y :
\tau_1) \arrow \tau_2$ and $(y : \tmctx) \arrow \tau_2$, although in
the latter case $s$ stands for a domain-level context.
We separate domain-level contexts
from contextual objects, as we do not allow functions that return a
domain-level context.

The recursor is written
as $\titer{\vec\R}{}{\IH} \rappto\Psi~t$. Here, $t$ describes a term of
type $\cbox{\Psi \vdash \tm}$ that we recurse over and
$\vec\R$ describes
the different branches that we can take depending on the value
computed by $t$. As is common when we have dependencies, we annotate the recursor with the typing invariant
$\IH$.
Here, we consider only the recursor over domain-level terms of type
$\tm$. Hence, we annotate it with $\IH = (\psi : \tmctx) \arrow
(y:\cbox{\psi \vdash \tm}) \arrow \tau$. To check that the recursor
$\titer{\R}{}\IH~\Psi~t$ has type $[\Psi/\psi]\tau$, we check that
each of the three branches has the specified type $\IH$. In the base
case, we may assume in addition to $\psi:{\tmctx}$ that we have a
variable $p:\cbox{\CTV{\unboxc{\psi}} \tm}$ and check that the body
has the appropriate type.
If we encounter a contextual object built with the domain-level constant $\tapp$, then we choose the branch $b_\tapp$. We assume $\psi \of \tmctx$, $m \of \cbox{\unboxc{\psi}\vdash \tm}$, $n \of \cbox{\unboxc{\psi}\vdash \tm}$, as well as $f_n$ and $f_m$ which stand for the recursive calls on $m$ and $n$ respectively. We then check that the body $t_\tapp$ is well-typed.
If we encounter a domain object built with the domain-level constant $\tlam$, then we choose the branch $b_\tlam$. We assume $\psi \of \tmctx$ and $m \of \cbox{\psi, x{:}\tm \vdash \tm}$ together with the recursive call $f_m$ on $m$ in the extended LF context $\psi, x{:}\tm$. We then check that the body  $t_\tlam$ is well-typed.
The typing rules for computations are given in
\Cref{fig:comptyping}. We omit the reduction rules here for brevity.

\begin{figure}[htb]
  \centering\small
  \[
    \begin{array}{c}
      \multicolumn{1}{p{12cm}}{
        \fbox{$\Gamma \vdash C : T$}~~ Contextual object $C$ has contextual type $T$\newline}
    \\
    \infer{\Gamma \vdash (\hatctx{\Psi} \vdash M) : (\Psi \vdash A)}{\Gamma; \Psi \vdash M : A}
    \qquad
    \infer{\Gamma \vdash (\hatctx{\Psi} \vdash x) : (\CTV \Psi A)}
      {\Gamma \vdash \Psi : \ctx&
       x{:}A \in \Psi}
\qquad
\infer{\Gamma \vdash (\hatctx{\Psi} \vdash \unbox{x}{\wk{\hatctx\Psi}})
                                  : (\CTV \Psi A)}
{ x{:}\cbox{\CTV \Phi A} \in \Gamma &
  \Gamma ; \Psi \vdash \wk{\hatctx\Psi} : \Phi}
    \end{array}
\]
\[
\begin{array}{c}
  \multicolumn{1}{p{12cm}}{
    \fbox{$\Gamma \vdash t : \tau$}~~ Term~$t$ has computation type~$\tau$}
  \\[1.5em]
  \infer{\Gamma \vdash y : \ann\tau}{y:\ann\tau \in \Gamma}
  \qquad
  \infer{\Gamma \vdash \cbox C : \cbox T}{\Gamma \vdash C : T}
  \qquad
\infer{\Gamma \vdash t~s : [s/y]\tau_2}
{\Gamma \vdash t : (y:\ann\tau_1) \arrow \tau_2 &
 \Gamma \vdash s : \ann\tau_1}
\qquad
\infer{\Gamma \vdash \tmfn y t : (y:\ann\tau_1) \arrow \tau_2}
        {\Gamma, y:\ann\tau_1 \vdash t : \tau_2 & \Gamma \vdash (y:\ann\tau_1) \arrow \tau_2: \type}
\\[0.75em]
\multicolumn{1}{l}{\mbox{Recursor over domain-level terms}~\IH = (\psi : \tmctx) \arrow (y:\cbox{\psi \vdash \tm}) \arrow \tau }\\[0.5em]
\infer
{\Gamma \vdash \tmrec {\IH} {b_v} {b_{\mathsf{app}}} {b_{\tlam}} \rappto \Psi~t : [{\Psi}/\psi]\tau}
{
\Gamma \vdash t :  \cbox{\Psi \vdash \tm}  & \Gamma \vdash \IH : \type  &
\Gamma \vdash b_v : \IH & \Gamma \vdash b_{\mathsf{app}} : \IH & \Gamma \vdash b_{\mathsf{lam}} : \IH
}
\\[1em]
\raisebox{2ex}{\mbox{Branch for Variable ($b_v$)}}\qquad
\inferrule*
{ \Gamma, \psi:\tmctx, p:\cbox{\CTV {\unboxc{\psi}} {\tm} }  \vdash  t_v : \tau}
{\Gamma \vdash ({\psi,p \mto t_v}) : \IH }~~~~~~~~~~~~~~~~~~~~~~~
 \\[0.5em]
\raisebox{2ex}{\mbox{Branch for Application $\tapp$ ($b_{\mathsf{app}}$)}}\qquad
\infer
{\Gamma \vdash (\psi, m, n, f_n, f_m \mto t_{\mathsf{app}}) : \IH}
{\Gamma, \psi:\tmctx, m{:}\cbox{\unboxc{\psi} \vdash \tm}, n{:}\cbox{\unboxc{\psi} \vdash \tm},
  f_m{:} \tau, f_n{:} \tau   \vdash   t_{\mathsf{app}} : \tau
}
\\[0.5em]
\raisebox{2ex}{\mbox{Branch for Function $\tlam$ ($b_{\tlam}$)}}\qquad
\infer{\Gamma \vdash \psi, m, f_m \mto t_{\tlam} : \IH}
{
\Gamma, \phi:\tmctx,    m{:}\cbox{\phi, x{:}\tm \vdash \tm},
         f_m{:}[ (\phi, x{:}\tm)/\psi ] \tau           \vdash  t_{\tlam} : [\phi/\psi]\tau
 }
\end{array}
\]
  \caption{Typing Rules for Contextual Objects and Computations}
  \label{fig:comptyping}
\end{figure}

We now give an interpretation of simply-typed \cocon in a presheaf
model with a cartesian closed universe of representables.
Let us first extend the internal dependent type theory
with the constant $\tm$ for modelling the
domain-level type constant $\tm$ and with the
constants
$\kw{app}\colon \El \tm \to \El \tm \to \El \tm$ and $\kw{lam} \colon
(\El \tm \to \El \tm) \to \El \tm$ to model the corresponding
domain-level constants $\tapp$ and $\tlam$.

\begin{figure}[htb]
  \centering\small
  \begin{alignat*}{4}
    &\mbox{Interpretation of domain-level types} \span\span\span \\[0.25em]
    &\sem{\tm}     &= &\tm \\
    &\sem{A \to B} &= &\kw{arrow}\ \sem{A}\ \sem{B}\\[0.5em]
    &\mbox{Interpretation of domain-level contexts} \span\span\span\\[0.25em]
    &\sem{\Gamma \vdash \psi : \ctx} & = & \psi \\[0.25em]
    &\sem{\Gamma \vdash \cdot : \ctx} & = & \kw{unit} \\[0.25em]
    &\sem{\Gamma \vdash (\Psi, x{:}A) : \ctx} & = & \kw{times}\ e \ \sem{A}
    \tag*{where $\sem{\Gamma \vdash \Psi : \ctx} = e$} \\[0.5em]
    &\mbox{Interpretation of domain-level terms, where } \sem{\isctx{\Psi}} = e\span\span\span
    \\[0.25em]
    &\sem{\Gamma ; \Psi \vdash x \colon A} & = &
    \lamb \cvar: \El e. \kw{snd}~(\kw{fst}^k~\cvar)
    \tag*{where $\Psi = \Psi_0, x{:}A, y_k{:}A_k, \ldots, y_1{:}A_1$} \\
    &\sem{\Gamma; \Psi \vdash \lambda x.\, M : A \to B} & = &
    \lamb \cvar: \El e. \kw{arrow-i}\ (\lamb x{:}\El \sem{A}.e'~(\kw{pair}~\cvar~x))
    \tag*{where $\sem{\Gamma ; \Psi, x{:}A \vdash M : B} = e'$} \\
    &\sem{\Gamma ; \Psi \vdash M\ N : B} & =&
    \lamb \cvar: \El e. \kw{arrow-e}\ (e_1~\cvar)\ (e_2~\cvar)
    \tag*{where $\sem{\Gamma ; \Psi \vdash M : A \to B} = e_1$} \\
    \tag*{ and $\sem{\Gamma ; \Psi \vdash N : A } = e_2$} \\
    &\sem{\Gamma ; \Psi \vdash \unbox t \sigma : A} & = &
    \tletbox{x}{e_1}{\lamb \cvar: \El e. x(e_2\; u)}
    \tag*{where $\sem{\Gamma \vdash t : \cbox{\Phi \vdash A}} = e_1$} \\
    \tag*{and $\sem{\Gamma ; \Psi \vdash \sigma : \Phi} = e_2$} \\
    &\sem{\Gamma ; \Psi \vdash \tapp :  \tm \to \tm \to \tm} &=&
    \lamb \cvar: \El e. \kw{arrow-i} (\lamb x{:}\El \tm.\, \kw{arrow-i}\ (\lamb y{:}\El \tm.\,
     \kw{app}\ x\ y)) \\
     &\sem{\Gamma ; \Psi \vdash \tlam : (\tm \to \tm) \to  \tm} &=&
     \lamb \cvar: \El e. \kw{arrow-i} (\lamb f{:}\El (\kw{arrow}~\tm~\tm).\, \kw{lam}\ (\lamb
     x{:}\El \tm.\, \kw{arrow-e}\ f\ x)) \\
    &\mbox{Interpretation of domain-level substitutions, where }  \sem{\isctx{\Psi}} = e\span\span\span \\
    &\sem{\Gamma ; \Psi \vdash \cdot : \cdot} & = & \lamb \cvar: \El e. \kw{terminal} \\
    &\sem{\Gamma ; \Psi \vdash (\sigma, M) : \Phi, x{:}A} & = &
    \lamb \cvar: \El e. \kw{pair}\ (e_1 \cvar)\ (e_2 \cvar)
    \tag*{where $\sem{\Gamma ; \Psi \vdash \sigma : \Phi} = e_1$ and $\sem{\Gamma ;
        \Psi \vdash M : A} = e_2$} \\
    & \sem{\Gamma ; \Psi, \wvec{x{:}A} \vdash \wk{\hatctx\Psi} : \Psi} &=&
    \lamb \cvar: \El e. \kw{fst}^n~\cvar
    \tag*{where $n = |\wvec{x{:}A}|$}
  \end{alignat*}
  \vspace{-1em}
  \caption{Interpretation of Domain-level Types and Terms}
  \label{fig:idom}
\end{figure}

We can now translate domain-level and computation-level types of \cocon
into the internal dependent type theory for $\hat\CC$. We do so by
interpreting the domain-level terms, types, substitutions, and
contexts (see Fig. \ref{fig:idom}). All translations are on well-typed
terms and types. Domain-level types are interpreted as the terms of type $\Obj$
in the internal dependent type theory that represent them.
Domain-level contexts are also interpreted as terms of type $\Obj$
 by $\sem{\Gamma \vdash \Psi : \ctx}$. For example, a domain-level context $x{:}\tm, y{:}\tm$ is
interpreted as  $\kw{times}~(\kw{times}~\kw{unit}~\tm)~\tm : \Obj$.
A domain-level substitution with domain~$\Psi$ and codomain $\Phi$ becomes a function
from $\El e'$ to $\El e$, where 
$e' = \sem{\Gamma \vdash \Psi : \ctx}$ and $e = \sem{\Gamma \vdash \Phi : \ctx}$. Thus we use a semantic function to
interpret a simultaneous substitution as usual.
As $e$ is some
product, for example $\kw{times}~(\kw{times}~\kw{unit}~\tm)~\tm$, the
domain-level substitution is translated into a function returning an n-ary tuple. A
weakening substitution
 $\Gamma ; \Psi, x{:}\tm \vdash \wk{\Psi} : \Psi$ is interpreted as
$\kw{fst}~\cvar$ where $\cvar \of \El (\kw{times}~e~\tm)$ and $e = \sem{\Gamma
  \vdash \Psi : \ctx}$. More generally, when we weaken a context
$\Psi$ by $n$ declarations, i.e. $\wvec{x{:}A}$, we interpret
$\wk{\Psi}$ as $\kw{fst}^n~\cvar$.
A well-typed domain-level term, $\Gamma ; \Psi \vdash M : A$, is mapped to a function
from $\cvar{:}\El \sem{\Gamma \vdash \Psi :  \ctx}$ to
$\El \sem{A}$.

Hence the translation of a well-typed domain-level term eventually introduces via a
$\lamb\ \cvar$ that stands for the term-level interpretation of a domain-level
context $\Phi$. Most cases in the interpretation just pass $\cvar$ along. The
exceptional case is $\Gamma ; \Phi \vdash \lambda x.M : A \to B$, because the body $M$
is translated into a function from an extended domain-level context $\Psi, x{:}A$. In
this case, we need to pair $\cvar$ with the variable $x$ obtained from the
constructor of the domain-level function. When we translate a variable $x$ where $\Phi = \Phi_0, x{:}A, y_k{:}A_k, \ldots, y_1{:}A_1$, we return $\kw{snd}~(\kw{fst}^k~\cvar)$.
We translate $\Gamma ; \Phi \vdash  \unbox t \sigma : A$ directly using $\kw{let
  box}$-construct. Intuitively, a substitution of terms is morphism composition. The
interpretation of the domain-level substitution $\sigma$ has given one morphism. The
computation term $t$ should give the other morphism. Since it has the contextual type
$\cbox{\Phi \vdash \tm}$ its translation will be of type $\flat(\El e \to \El \tm)$
where $e' = \sem{\Gamma \vdash \Phi : \ctx}$. We thus can obtain a $\El e \to \El \tm$
from \kw{let box}, in which we can compose $e_2$ to obtain the final morphism.
The translation of domain-level applications and domain-level constants $\tapp$ and $\tlam$ is straightforward.

The interpretation of a contextual type ${(\Phi \vdash A)}$ makes explicit the
fact that they correspond to functions $\El e \to\El \sem{A}$ where
$e = \sem{\Gamma \vdash \Phi : \ctx}$ (see Fig.~\ref{fig:ictx}).
Consequently, the corresponding
contextual object $ (\hatctx{\Phi}\vdash M)$ is interpreted as a
function which is already the case by just taking the interpretation of domain-level
terms. The case of
${(\CTV \Psi A)}$ requires the contextual object to be interpreted to the restricted function
space denoted by $\to_v$, which is the case by looking at the  variable case of
the domain-level interpretation.

 \begin{figure}[tb]
  \centering
  \begin{small}
    \[
      \begin{array}{l@{~}c@{~}ll}
\multicolumn{4}{l}{\mbox{Interpretation of contextual objects ($C$)}}\\[0.25em]
\sem{\Gamma \vdash (\hatctx{\Phi}\vdash M) : (\Phi \vdash A)} & = & \sem{\Gamma ; \Phi \vdash M : A}
\\[.25em]
\sem{\Gamma \vdash (\hatctx{\Phi}\vdash M) : (\CTV \Phi A)} & = & \sem{\Gamma ; \Phi \vdash M : A}
\\[0.5em]
\multicolumn{4}{l}{\mbox{Interpretation of contextual types ($T$)}}\\[0.25em]
\sem{\Gamma \vdash (\CT \Phi A)} &= &(\cvar{:}\El e) \to \El \sem{A} &
\mbox{where}~\sem{\Gamma \vdash \Phi : \ctx} = e
\\[0.25em]
\sem{\Gamma \vdash (\CTV \Phi A)} &= &(\cvar{:}\El e) \to_v \El \sem{A} &
\mbox{where}~\sem{\Gamma \vdash \Phi : \ctx} = e
      \end{array}
\]
  \end{small}
  \vspace{-1em}
  \caption{Interpretation of Contextual Objects and Types}
  \label{fig:ictx}
\end{figure}

Last, we give the interpretation of computation-level types,
contexts and terms (see \Cref{fig:icomp}).
It is mostly straightforward.  We simply map $\cbox{T}$ in context $\Gamma$ to $
\Box\sem{\Gamma \vdash T}$ and $\cbox{C}$ is simply interpreted as a boxed
term. Since we intend to keep all variables on the computation level crisp, we
interpret function types to the crisp function space. As a
consequence, the argument in a function application must be closed. This is true
because all computation-level terms should only use crisp variables, and it is
justified by the soundness theorem. When translating
a computation-level function, we use $\mlam$ for abstraction, so that we create
a crisp function. We then recursively interpret the function body.

\begin{figure}[tb]
  \centering
\begin{small}
\[
  \begin{array}{l@{~}c@{~}ll}
    \multicolumn{4}{l}{\mbox{Interpretation of computation-level types ($\ann\tau$)}}\\[0.25em]
    \sem{\Gamma \vdash \cbox{T}} &= & \Box\sem{\Gamma \vdash T}  \\[0.25em]
    \sem{\Gamma \vdash (x{:}\ann\tau_1) \arrow \tau_2} &= &(x{::}\sem{\Gamma \vdash
                                                            \ann\tau_1}) \to^\flat
                                                            \sem{\Gamma, x: \ann\tau_1 \vdash \tau_2}\\[0.25em]
    \sem{\Gamma \vdash \ctx} &= & \Obj
    \\[0.5em]
    \multicolumn{3}{l}{\mbox{Computation-level typing contexts ($\Gamma$)}}\\[0.25em]
    \sem{\cdot} & =  & \cdot  & \\[0.25em]
    \sem{\Gamma\comma x\of\ann\tau} & = & \sem{\Gamma}\comma x :: \sem{\Gamma \vdash \ann\tau} &
    \\[.5em]
    \multicolumn{4}{l}{\mbox{Interpretation of computations ($\Gamma\vdash t:\tau$; without recursor)}}\\[0.25em]
    \sem{\Gamma \vdash \cbox{C} : \cbox{T}} & = & \kw{box}~ e &\mbox{where}~\sem{\Gamma \vdash C : T} = e
    \\[0.25em]
    \sem{\Gamma \vdash t_1\ t_2 : \tau} & = & e_1\ e_2 & \mbox{where}~\sem{\Gamma \vdash t_1 : (x{:}\ann\tau_2) \arrow \tau} = e_1 ~
    \\
                   & & & \mbox{and}~\sem{\Gamma \vdash t_2 : \ann{\tau_2}} = e_2
    \\[0.25em]
    \sem{\Gamma \vdash  \tmfn x t : (x{:}\ann\tau_1) \arrow \tau_2}
                   & = &  \mlam x{::} \sem{\Gamma \vdash \ann\tau_1}.\, e &
\mbox{where}~\sem{\Gamma, x{:}\ann\tau_1 \vdash t : \tau_2} = e
    \\[0.25em]
    \sem{\Gamma \vdash x : \tau} & = & x  &
  \end{array}
\]
\end{small}
   \vspace{-1em}
  \caption{Interpretation of Computation-level Types and Terms -- without recursor}
  \label{fig:icomp}
\end{figure}

The interpretation of the recursor is straightforward now (see \Cref{fig:irec}). In
Lemma~\ref{lem:rec}, we expressed a primitive recursion scheme in our
internal type theory and defined a term \kw{rec} together with its
type. We now interpret every branch of our recursor in the
computation-level as a function of the required type in our internal
type theory. While this is somewhat tedious, it is straightforward. When interpreting
the branches, we again use the crisp function space, which allows us to push the
parameters to the crisp context and simulate the behavior of \cocon. The branch body
is then interpreted recursively.

\begin{figure}[tb]

\begin{small}
  \[
    \begin{array}{l@{~}c@{~}ll}
  \multicolumn{4}{l}{\mbox{Interpretation of recursor for $\IH = (\psi : \tmctx) \arrow (y:\cbox{\psi \vdash \tm}) \arrow \tau$}:}\\[.25em]
  \multicolumn{4}{l}{\sem{\Gamma \vdash  \tmrec {\IH} {b_v} {b_{\mathsf{app}}} {b_{\tlam}} \rappto \Psi~t : [{\Psi}/\psi,~t/y]\tau} = \kw{rec}~e_v~e_{\kw{app}}~e_{\kw{lam}}~e_{c}~e}\\
  \multicolumn{4}{l}{\quad\mbox{where}~\sem{\Gamma \vdash b_v : \IH} = e_v, \sem{\Gamma \vdash b_{\mathsf{app}} : \IH} = e_{\kw{app}}, \sem{\Gamma \vdash b_{\mathsf{lam}} : \IH} = e_{\kw{lam}},}\\
  \multicolumn{4}{l}{\phantom{\quad\mbox{where}}~\sem{\Gamma \vdash \Psi : \ctx} = e_{c}~\mbox{and}~\sem{\Gamma \vdash t : \cbox{\Psi \vdash \tm}} = e}\\[.5em]
  \multicolumn{4}{l}{\mbox{Interpretation of Variable Branch}}\\[.25em]
  \sem{\Gamma \vdash ({\psi,p \mto t_v}) : \IH } & = & \multicolumn{2}{l}{\mlam
                                                       \psi{::}\Obj.\,\mlam p{::}
                                                       \Box(\El \psi \to_v \El \tm
                                                       ). e}\\
  \multicolumn{4}{l}{\quad\mbox{where}~\sem{\Gamma, \psi:\tmctx, p:\cbox{\CTV{\unboxc{\psi}} {\tm}}  \vdash  t_v : [p/y]\tau } = e}
  \\[0.5em]
  \multicolumn{4}{l}{\mbox{Interpretation of Application Branch}}\\[.25em]
      \sem{\Gamma \vdash (\psi, m, n, f_n, f_m \mto t_{\mathsf{app}}) : \IH}
                                                 & = & \mlam \psi{::} \Obj.\,
                                                       \mlam m,n{::}\Box(\El
                                                       \psi \to \El \tm). \\
  & & \mlam f_m {::}\sem{\Gamma \vdash [\psi,m/\psi,y]\tau}.\, \mlam f_n{::}
      \sem{\Gamma \vdash [\psi,n/\psi,y]\tau}.e \\
  \multicolumn{4}{l}{\quad\mbox{where}~\sem{\Gamma, \psi{:}\ctx, m{:}\cbox{\psi \vdash
      \tm}, n{:}\cbox{\psi \vdash \tm}, f_m : [m/y]\tau, f_n: [n/y]\tau \vdash t_{\mathsf{app}} : [\cbox{\psi \vdash \mathsf{app}~\unbox{m}{}~\unbox{n}{}}/y]\tau } = e}
  \\[0.5em]
  \multicolumn{4}{l}{\mbox{Interpretation of Lambda-Abstraction Branch}}\\[.25em]
  \sem{\Gamma \vdash (\psi, m, f_m \mto t_{\tlam}) : \IH} & = & \mlam \psi :: \Obj.\mlam m :: \Box(\El (\kw{times}~\psi~\tm) \to \El \tm). \mlam f_m{::}\tau_m. e\\
  \multicolumn{4}{l}{\quad\mbox{where}~\sem{\Gamma \vdash [(\psi,x{:}\tm),~m/\psi, y]\tau } = \tau_m,}\\
  \multicolumn{4}{l}{\phantom{\quad\mbox{where}}~\sem{\Gamma, \psi{:}\ctx, m{:}\cbox{\psi,x{:}\tm \vdash \tm},f_m:[(\psi,x:tm),m/\psi,y]\tau \vdash t_{\mathsf{app}} : [\cbox{\psi \vdash \mathsf{lam}~\lambda x.\unbox{m}{}}/y]\tau } = e}\\
  \end{array}
  \]

\end{small}
  \vspace{-1em}
  \caption{Interpretation of Recursor}
  \label{fig:irec}
\end{figure}

We can now show that all well-typed domain-level and computation-level
objects are translated into well-typed constructions in our internal
type theory. As a consequence, we can show that equality in
\cocon implies the corresponding equivalence in our internal
type theoretic interpretation.

\begin{lemma}
  \label{lem:simplint}
  The interpretation maintains the following typing invariants:
  \begin{itemize}
  \item
    If $\Gamma \vdash \Psi \colon \ctx$ then
   $\sem{\Gamma \vdash \Psi \colon \ctx} \colon \Obj$.
  \item If $\Gamma\semi \Psi \vdash M \colon A$ then
    $\sem{\Gamma} \pipe \cdot \vdash \sem{\Gamma ; \Psi \vdash M \colon A} \colon
    (u : \El \sem{\Gamma \vdash \Psi : \ctx}) \to \El\sem{A}$.
  \item
    If $\Gamma\semi \Psi \vdash \sigma \colon \Psi$ then
    $\sem{\Gamma}\pipe \cdot \vdash \sem{\Gamma ; \Psi \vdash \sigma \colon \Psi}
    \colon (\cvar\of \El \sem{\Gamma \vdash \Psi : \ctx}) \to \El\sem{\Psi}$.
  \item
    If $\Gamma \vdash C \colon T$ then
    $\sem{\Gamma} \pipe \cdot \vdash \sem{\Gamma \vdash C : T} \colon \sem{\Gamma
      \vdash T}$.
  \item
    If $\Gamma \vdash t \colon \tau$ then
    $\sem{\Gamma} \pipe \cdot \vdash \sem{\Gamma \vdash t \colon \tau} \colon
    \sem{\Gamma \vdash\tau}$.
  \end{itemize}
\end{lemma}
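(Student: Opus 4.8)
The plan is to prove all five clauses \emph{simultaneously}, by structural induction on the \cocon typing derivations of \Cref{fig:lftyping,fig:comptyping}, checking for every rule that the matching clause of the interpretation in \Cref{fig:idom,fig:ictx,fig:icomp,fig:irec} produces a well-typed term of the claimed type in the modal internal type theory of \Cref{sec:internal}. The invariant carried through the induction is exactly the one the statement exposes: every interpreted term is \emph{closed}, in the sense that its ordinary (local) context is empty and its only free variables are the crisp ones from $\sem\Gamma$. Since the recursor clause produces types involving substituted computation types such as $[\Psi/\psi]\tau$, I would first dispatch, by the same mutual induction, the companion \emph{substitution lemmas}: that interpretation commutes with domain-level substitution into domain-level terms and types (using that $\sem{[\sigma]A}=\sem A$ in the simply-typed domain) and with substitution of contexts and computations into computation types, e.g.\ $\sem{\Gamma\vdash[\Psi/\psi]\tau}=\sem{\Gamma,\psi{:}\ctx\vdash\tau}[\sem\Psi/\psi]$. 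The definitional equality $\Obj=\Box\Obj$ is also used at the outset, so that a context variable $\psi$ may be read off as a crisp variable of type $\Obj$.

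For the domain-level judgements, the context case is immediate from $\kw{unit}$, $\kw{times}$ and that last observation. For terms I would proceed rule by rule: the variable case is discharged by the iterated projections $\kw{snd}\,(\kw{fst}^k\,\cvar)$ of \Cref{lemma:products}; the $\lambda$-case by $\kw{arrow-i}$ of \Cref{lemma:exponentials}, the point being that the inductive hypothesis for the body holds at the \emph{extended} domain-level context and is fed $\kw{pair}\,\cvar\,x$, where $x$ is the variable supplied by the argument of $\kw{arrow-i}$; the application case dually by $\kw{arrow-e}$; and the constants $\tapp,\tlam$ directly from $\kw{app},\kw{lam}$. The one case where the modality is used is $\unbox{t}{\sigma}$: the inductive hypothesis gives $e_1:\Box(\El{\sem\Phi}\to\El{\sem A})$, and the clause $\tletbox{x}{e_1}{\lamb\cvar{:}\El{\sem\Psi}.\,x\,(e_2\,\cvar)}$ typechecks because the $\kw{let}~\kw{box}$ rule applies in the empty ordinary context, binding the crisp variable $x{::}\El{\sem\Phi}\to\El{\sem A}$, which is post-composed with the interpreted substitution $e_2:\El{\sem\Psi}\to\El{\sem\Phi}$. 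For substitutions, the empty case uses $\kw{terminal}$ (\Cref{lemma:terminal}), the extension case uses $\kw{pair}$ (\Cref{lemma:products}), and $\wk{\hatctx\Psi}$ uses the $n$-fold projection $\kw{fst}^n$.

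For the computation-level judgements, a contextual object is handled by the already-established domain-level clause, and for the restricted type $\CTV\Phi A$ one observes in addition that the variable clause only ever emits projections $\kw{snd}\circ\kw{fst}^k$, so it lands in the subtype $\to_v$ and hence in $\Box(\El{\sem\Phi}\to_v\El{\sem A})$. The $\kw{box}$ rule applies to $\cbox C$ precisely because by induction $\sem C$ lives in the empty ordinary context; computation-level functions are interpreted with the crisp abstraction $\mlam$, extending the crisp context by the inductive hypothesis; and computation-level application is interpreted with crisp application, whose side condition --- the argument is closed --- is exactly the invariant carried. For the recursor I would unfold $\sem\IH$ for $\IH=(\psi{:}\tmctx)\arrow(y{:}\cbox{\psi\vdash\tm})\arrow\tau$ and match the interpreted, crisp-abstracted branches against the types $X_{\kw{var}}$, $X_{\kw{app}}$, $X_{\kw{lam}}$ of \Cref{lem:rec}: one identifies the branch hypothesis $p{:}\cbox{\CTV{\unboxc\psi}\tm}$ with the argument $x{:}\Box(\El c\to_v\El\tm)$ of $X_{\kw{var}}$, the bindings $m,n,f_m,f_n$ with those of $X_{\kw{app}}$, and similarly for $X_{\kw{lam}}$, using the substitution lemmas to bring each branch's result type into the required $A\,c\,x$ shape. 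Then $\kw{rec}$ can be fed these branches together with $e_c:\Obj$ and $e:\Box(\El{e_c}\to\El\tm)$ to yield a term of $\sem{[\Psi/\psi,t/y]\tau}$.

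The routine cases are genuinely routine; the \emph{main obstacles} are the two bookkeeping tasks just flagged. First, the substitution lemmas, together with the coherence of the interpretations of $\wk{\hatctx\Psi}$, identity and composition --- this is where \cocon's convention of not storing substitution domains must be reconciled with the explicit internal language --- have to be arranged so that the recursor's advertised output type and its branch types agree exactly with \Cref{lem:rec}; and, since the interpretation consistently uses the crisp function space whereas \Cref{lem:rec} is phrased with the ordinary one, this step also requires identifying the two function spaces on the $\Box$-modal domains involved (which is routine --- via \Cref{lemma:deps} and idempotence of $\flat$ --- since $\Obj$ and every $\Box$-type are modal). Second, the modal discipline: one must verify at \emph{every} step --- most delicately in the unboxing clause and in each recursor branch, where all parameters are deliberately pushed into the crisp context to mimic \cocon's computation layer --- that the translation never uses an ordinary variable where a crisp one is demanded, so that the invariant ``$\sem{\Gamma}\pipe\cdot\vdash\cdots$'' stays stable throughout.
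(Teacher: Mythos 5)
Your proposal is correct and takes essentially the same approach as the paper: the paper's entire proof of this lemma is the single sentence ``The proof goes by induction on derivations,'' and your case-by-case analysis is a faithful (and considerably more detailed) elaboration of exactly that induction, including the auxiliary substitution lemmas, the match of the recursor branches against \Cref{lem:rec}, and the crisp-context bookkeeping that the paper leaves implicit.
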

The proof goes by induction on derivations. Next we show that equivalence in \cocon is
preserved by the interpretation.

\begin{proposition}[Soundness]
  The following are true.
  \begin{itemize}
  \item
    If $\Gamma\semi \Psi \vdash M \equiv N \colon A$ then\\
    $\sem{\Gamma}\pipe \cdot \vdash \sem{\Gamma \semi \Psi \vdash M : A} = \sem{\Gamma
      \semi \Psi \vdash N : A} \colon (\cvar \of \El\sem{\Psi}) \to \El\sem{A}$.
  \item
    If
    $\Gamma; \Psi \vdash \sigma \equiv \sigma' \colon \Phi$ then \\
    $\sem{\Gamma} \pipe \cdot
     \vdash
     \sem{\Gamma; \Psi \vdash \sigma : \Phi} = \sem{\Gamma; \Psi \vdash \sigma' : \Phi}
     \colon (\cvar \of \El\sem{\Psi}) \to \El \sem{\Phi}$.
  \item
    If
    $\Gamma \vdash t_1 \equiv t_2 \colon \ann\tau$ then 
    $\sem{\Gamma} \pipe \cdot \vdash \sem{\Gamma \vdash t_1 : \ann\tau} = \sem{\Gamma
      \vdash t_2 : \ann\tau} \colon \sem{\Gamma \vdash \ann\tau}$.
  \end{itemize}
\end{proposition}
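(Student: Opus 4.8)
The plan is to prove all three statements simultaneously by a single induction on the derivation of the equivalence judgement. Throughout I rely on the fact that the internal type theory of $\hat\CC$ has extensional identity types, so that propositional and definitional equality coincide and one may reason purely equationally; moreover \Cref{lem:simplint} guarantees that both sides of each claimed equation are well-typed terms of the stated type, so only the equations themselves need to be checked.

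Before the induction I would establish a \emph{substitution lemma}: for $\Gamma; \Phi \vdash N : A$ and $\Gamma; \Psi \vdash \sigma : \Phi$ one has $\sem{\Gamma; \Psi \vdash [\sigma]N : A} = \lamb \cvar.\, \sem{N}\,(\sem{\sigma}\,\cvar)$, and dually that composition of domain-level substitutions is interpreted by composition of the associated functions on $\El(-)$-types. This is a routine induction on $N$, using the pairing and projection equations of \Cref{lemma:products} in the variable and $\lambda$ cases (where $\cvar$ gets paired with a freshly bound variable) and the $\beta\eta$-laws of \Cref{lemma:exponentials} in the application and $\lambda$ cases. This lemma is exactly what makes the reduction $\unbox{\cbox{\hatctx\Phi \vdash N}}{\sigma} \equiv [\sigma]N$ go through.

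In the inductive step, the congruence rules are immediate from the induction hypotheses, since every clause of the interpretation is compositional and identity types are congruences. The computational rules fall into three groups. (i) The domain-level $\beta\eta$-rules for $\lambda$ and for substitutions follow by combining the $\beta\eta$-laws of the internal $\lambda$-calculus with $\kw{arrow-e}\,(\kw{arrow-i}\,g)=g$ and $\kw{arrow-i}\,(\kw{arrow-e}\,f)=f$ from \Cref{lemma:exponentials}, the pairing laws of \Cref{lemma:products}, and uniqueness of $\kw{terminal}$ from \Cref{lemma:terminal}; the rule $\unbox{\cbox{\hatctx\Phi \vdash N}}{\sigma} \equiv [\sigma]N$ then follows, via the $\beta$-law $\tletbox{x}{\MBox e}{-} = -[e/x]$ for the $\Box$-modality, from the substitution lemma. (ii) The computation-level $\beta\eta$-rules for $\tmfn y t$ are the $\beta\eta$-laws of the crisp function space $\to^\flat$; here one checks that the closedness side condition on the argument of a crisp application is satisfied, which holds because \Cref{lem:simplint} places every computation-level interpretant in a context $\sem\Gamma \mid \cdot$ with empty ordinary part. (iii) The recursor $\iota$-rules are precisely the three equations for $\kw{rec}$ stated in \Cref{lem:rec}; to apply them one first verifies that the interpretation of a $\tapp$-headed contextual object $\cbox{\hatctx\psi \vdash \mathsf{app}\;\unbox{m}{}\;\unbox{n}{}}$ equals $\app{\app{\kw{app}'}{\sem m}}{\sem n}$ and that the interpretation of $\cbox{\hatctx\psi \vdash \mathsf{lam}\;\lambda x.\unbox{m}{}}$ equals $\app{\kw{lam}'}{\sem m}$, and then reads off the reduct directly from \Cref{lem:rec}.

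The step I expect to be the main obstacle is this last identification for the recursor. Proving that the interpretation of a $\tapp$- or $\tlam$-headed contextual object agrees with the lifted constructor applied to the interpretations of its immediate subterms requires unfolding the definitions of $\kw{app}'$ and $\kw{lam}'$ (both of which contain nested $\kw{let}~\kw{box}$), commuting those $\kw{let}~\kw{box}$s past the interpretations of the embedded unboxings, and collapsing the resulting $\MBox{(\cdots)}$ using idempotence of $\flat$ together with the substitution lemma. Keeping the crisp/ordinary variable discipline straight throughout this calculation — so that the premises of the $\Box$-rules are always met — is the delicate point; once the identification is in place, \Cref{lem:rec} yields the equation immediately and the remaining cases are bookkeeping.
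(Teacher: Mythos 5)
Your overall strategy --- induction on the equivalence derivation, a preliminary substitution lemma, congruence cases by compositionality, and the recursor cases discharged by the equations of \Cref{lem:rec} --- matches the paper's, and your treatment of the $\beta$-rule for $\unbox{\cbox{M}}{\sigma}$ via the $\beta$-law of $\Box$ composed with the substitution lemma is exactly the calculation the paper performs. However, there is a genuine gap: your case analysis has no slot for the $\eta$-rule of contextual types, $t \equiv \cbox{\unbox{t}{\wk{\hat\Psi}}}$, which is the one case the paper singles out as requiring ``a second thought.'' It is neither a domain-level $\beta\eta$-rule for $\lambda$ or substitutions (your group (i)) nor a $\beta\eta$-rule of the crisp function space (your group (ii)). Unfolding the interpretation, this case reduces, after an $\eta$-step for $\lamb$, to the equation $\MBox{(\tletbox{x}{e'}{x})} = e'$, and this is \emph{not} an instance of the standard $\beta$- or $\eta$-laws of the dual-context modality (the usual $\eta$-law gives $\tletbox{x}{e'}{\MBox{x}} = e'$, with the box inside the let, not outside). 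The paper observes that the needed equation is equivalent to idempotency of $\Box$, and justifies it only because in the presheaf model $\flat\flat F = \flat F$ holds definitionally; this is also what limits the model to a two-level modal system.

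You do invoke idempotence of $\flat$, but only inside the recursor-constructor identification, so the ingredient is within reach; the problem is that without explicitly routing the contextual-type $\eta$-case through it, your induction has an uncovered case that cannot be closed by the laws you list. I would also flag that your substitution lemma should be checked against the paper's convention that domain-level equality is already modulo $\beta\eta$ (so that $\unbox{\cbox{\hat\Phi\vdash N}}{\sigma}$ reducing to $[\sigma]N$ is a definitional matter on the syntactic side), but that is a presentational point rather than a mathematical one.
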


The proof in the $\beta$ and $\eta$ equivalence cases of contextual types is
interesting. In the case of $\beta$ equivalence, we have
\begin{mathpar}
  \inferrule*
  {\mtyping[\Gamma][\Phi] M A \\ \mtyping \sigma \Phi}
  {\mtyping{\unbox{\cbox M}\sigma \equiv [\sigma]M}{A}}
\end{mathpar}
The right hand side is easy; substitution in terms is just composition of morphisms:
\begin{align*}
  \sem{\mtyping{[\sigma]M}{A}}
  &= e_1 \circ e_2
    \tag*{where $\sem{\mtyping[\Gamma][\Phi] M A} = e_1$ and
    $\sem{\mtyping \sigma \Phi} = e_2$}
\end{align*}

On the left hand side, we have
\begin{align*}
  \sem{\mtyping{\unbox{\cbox M}\sigma}{A}}
  &= \tletbox{x}{\kw{box}~e_1}{\lamb \cvar : \El e. x(e_2\; u)}
    \tag*{where $\sem{\isctx{\Psi}} = e$,} \\
  & \tag*{$\sem{\mtyping[\Gamma][\Phi] M A} = e_1$ and
    $\sem{\mtyping \sigma \Phi} = e_2$} \\
  &= \lamb u : \El e. e_1(e_2\ u) \tag*{due to the $\beta$ rule of $\Box$}
\end{align*}
Thus the $\beta$ equivalence of contextual types is sound semantically.

The following rule expresses the $\eta$ equivalence of contextual types:
\begin{mathpar}
  \inferrule*
  {\typing t {\cbox{\judge[\Psi]A}}}
  {\typing{t \equiv \cbox{\unbox t {\wk{\hat \Psi}}}}{\cbox{\judge[\Psi]A}}}
\end{mathpar}

We reason as follows:
\begin{align*}
  \sem{\typing{\cbox{\unbox t {\wk{\hat \Psi}}}}{\cbox{\judge[\Psi]A}}}
  &= \kw{box}~\tletbox{x}{e'}{\lamb u : \El e. x\ u}
    \tag*{where $\sem{\isctx{\Psi}} = e$ and $\sem{\typing{t}{\cbox{\judge[\Psi]A}}} =e'$} \\
  &= \kw{box}~\tletbox{x}{\sem{\typing{t}{\cbox{\judge[\Psi]A}}}}{x}
  \tag*{$\eta$ equivalence of $\lamb$}\\
  &= \sem{\typing{t}{\cbox{\judge[\Psi]A}}}
\end{align*}

The last equation requires a second thought. Indeed, this equation does not normally
hold. In fact, $\kw{box}~\tletbox{x}{t}{x} = t$ is equivalent to requiring $\Box$ to
be idempotent. In our model, this turns out to be true, as we can see from
\begin{align*}
  \Box \Box F(\Psi) = \Box F(\top) = F(\top) = \Box F(\Psi)
\end{align*}
That is, $\Box$ is \emph{definitionally} idempotent, which allows us to conclude the
equation. The fact that we rely on the idempotency of the $\Box$ modality implies that
the model can only support a two-level modal system.


\section{Dependently Typed Case}\label{sec:depty}

In the previous sections, we outlined the interpretation of a simply typed variant of
\cocon to a presheaf model. In this section, we demonstrate how the idea can be
extended to the dependently typed case. In the dependently typed case, both domain
level and computation level have dependent function spaces. This can be used to model
intrinsically simply typed languages in the domain level.
\begin{lstlisting}[escapeinside=;;]
ty : type.                         trm : ty -> type.
o : ty.                            lam : ;$\Pi$; a : ty, ;$\Pi$; b : ty, (trm a -> trm b) -> trm (arr a b).
arr : ty -> ty -> ty.              app : ;$\Pi$; a : ty, ;$\Pi$; b : ty, trm (arr a b) -> trm a -> trm b.
\end{lstlisting}
In \texttt{trm}, we use the type parameter to keep track of the object-level type of an
object-level term. We
still use HOAS to encode the case of lambda abstraction. As in the simply typed case,
we have two distinct function spaces in dependently typed \cocon as well: the weak
space that is used in the domain level and is used for HOAS, and the strong space that
is used in the computation level and supports induction.

\subsection{A Simplified \cocon}\label{sec:dep-cocon}

To model a dependently typed variant of \cocon capable of encoding this object
language, we present the modification to the syntax of the simply typed \cocon in
\Cref{fig:syntax}. In the syntax for domain-level types, we add the types for the
domain language as well as turn the simple function space into a dependent one. For
the terms, we add the corresponding constructors of domain level types, $\trm$ and $\ty$.
Compared to \citet{Pientka:LICS19}, this version of \cocon is simplified by removing
the full hierarchy of universes; as a consequence, types and terms are separated. 

\begin{figure}
  \begin{center}
    \begin{small}
      \[
        \begin{array}{p{4.8cm}@{~}l@{~}r@{~}l}
          Domain-level types           & A, B        & \bnfas & \ty \bnfalt \trm[M] \bnfalt \Pi x : A. B \\
          Domain-level terms           & M, N       & \bnfas & \lambda x.M \mid M\,N \mid x \mid \unbox t \sigma \mid c \\
          Domain-level Constants & c & \bnfas & \ttobj \pipe \ttarr \pipe \ttlam \pipe \ttapp
        \end{array}
      \]
    \end{small}
  \end{center}
  \caption{Syntax of dependent \cocon}\label{fig:syntax}
\end{figure}

\begin{figure}
  \centering\small
  \[
    \begin{array}{c}
      \multicoltext{\fbox{$\typing{\Psi}{\ctx}$}~~The domain-level context $\Psi$ is well-formed}\\
      \infer{\typing{\cdot}{\ctx}}{ }
      \qquad
      \infer
      {\typing{\psi}{\ctx}}
      {\Gamma(\psi) = \ctx}
      \qquad
      \infer
      {\typing{\Phi, x:A}{\ctx}}
      {\typing{\Phi}{\ctx} & \istype{A}}

      \\[1em]
      \multicoltext{From now on, whenever $\Psi$ presents, $\typing{\Psi}{\ctx}$ is assumed.}
      \\
      \multicoltext{\fbox{$\istype{A}$} ~~ The domain-level type $A$ is well-formed}\\
      
      \inferrule*
      { }
      {\istype{\ty}}
      \qquad
      \inferrule*
      {\mtyping{M}{\ty}}
      {\istype{\trm[M]}}
      \qquad
      \inferrule*
      {\istype{A} \\ \istype[\Gamma][\Psi, x: A]{B}}
      {\istype{\Pi x : A . B}}

      \\[2em]
      \multicoltext{\fbox{$\mtyping M A$} ~~ Term $M$ has type $A$ in domain-level
      context $\Psi$ and context $\Gamma$} \\

      \inferrule*
      {x : A \in \Psi}
      {\mtyping x A}
      \qquad
      \inferrule*
      {\mtyping[\Gamma][\Psi, x : A] M B}
      {\mtyping{\lambda x.M}{\Pi x : A . B}}
      \qquad
      \inferrule*
      {\mtyping M {\Pi x : A. B} \\ \mtyping N A}
      {\mtyping{M\;N}{[N/x]B}}

      \\[0.5em]

      \inferrule*
      {\typing t {\quot{\judge[\Phi] A}}\text{ or }\typing t {\quot{\vjudge \Phi A}} \\ \mtyping \sigma \Phi}
      {\mtyping{\unquote t \sigma}{[\sigma]A}}

      \qquad

      \inferrule*
      {\mtyping M A \\ \istype{A \equiv B}}
      {\mtyping M B}
      
      \qquad
      
      \inferrule*
      { }
      {\mtyping{\ttobj}{\ty}}

      \\[0.5em]
      
      \inferrule*
      {\mtyping{a}{\ty} \\ \mtyping{b}{\ty}}
      {\mtyping{\ttarr\;a\;b}{\ty}}

      \qquad
      
      \inferrule*
      {\mtyping{a}{\ty} \\ \mtyping{b}{\ty} \\ \mtyping[\Gamma][\Psi] f {\trm[a] \to \trm[b]}}
      {\mtyping{\ttlam\;a\;b\;f}{\trm[(\ttarr\;a\;b)]}}

      \\[0.5em]
      
      \inferrule*
      {\mtyping{a}{\ty} \\ \mtyping{b}{\ty} \\ \mtyping m {\trm[(\ttarr\;a\;b)]} \\ \mtyping n {\trm[a]}}
      {\mtyping{\ttapp\;a\;b\;m\;n}{\trm[b]}}

      \\[2em]
      \multicoltext{\fbox{$\mtyping \sigma \Phi$} ~~ Substitution $\sigma$ provides a mapping from the (domain) context $\Phi$ to $\Psi$} \\

      \inferrule*
      {\isctx{\Psi}}
      {\mtyping \cdot \cdot}
      \qquad
      \inferrule*
      {\mtyping \sigma \Phi \\ \mtyping M {[\sigma]A}}
      {\mtyping{\sigma, M}{\Phi, x : A}}
      \qquad
      \inferrule*
      {\isctx{\Psi, \overrightarrow{x : A}}}
      {\mtyping[\Gamma][\Psi, \overrightarrow{x : A}]{\wk{\hat\Psi}}\Psi}
    \end{array}
  \]
  \caption{Domain-level judgments}\label{fig:lfjudgments}
\end{figure}

The typing rules are shown in \Cref{fig:lfjudgments} and are changed more
significantly. Since domain-level terms can appear in types now, we need to add
well-formedness condition for domain-level contexts, $\isctx \Psi$, and types,
$\istype A$. In particular, $\trm[M]$ is well-formed only if $M$ has type $\ty$. In the
typing rules, the application rule and the unbox rule shows how dependent types are
involved. In particular, in the unbox case, the substitution $\sigma$ is applied to
$A$ as the resulting type. Without loss of generality, we require object-level
constructors to be fully applied, e.g. $\ttarr\ a$ is not a valid domain-level term. It
helps to simplify the semantic interpretation and allows us to focus on the essential
idea of the development.

\begin{figure}
  \centering\small
  \[
    \begin{array}{c}
      \multicoltext{\fbox{$\vdash \Gamma$}~~ $\Gamma$ is a well-formed context}
      \\
      \inferrule*
      { }
      {\vdash \cdot}
      \qquad
      \inferrule*
      {\vdash \Gamma \\ \Istype{\ann\tau}}
      {\vdash \Gamma, x : \ann\tau}
      \\[1em]
      \multicoltext{\fbox{$\Istype{\ann\tau}$}~~ $\ann\tau$ is a well-formed type in context $\Gamma$}
      \\
      \inferrule*
      { }
      {\Istype \ctx}
      \qquad
      \inferrule*
      {\Istype{\ann\tau_1} \\ \Istype[\Gamma, y: \ann\tau_1]{\tau_2}}
      {\Istype{(y : \ann \tau_1) \arrow \tau_2}}
      \qquad
      \inferrule*
      {\istype A}
      {\Istype{\cbox{\Psi \vdash A}}}
      \qquad
      \inferrule*
      {\istype A}
      {\Istype{\cbox{\Psi \vdash_v A}}}
      \\[1em]
      \multicoltext{\fbox{$\typing C T$}~~ Contextual object $C$ has contextual type
      $T$}
      \\
      \inferrule*
      { }
      {\typing{(\judge[\hat{\Psi}]M)}{(\judge[\Psi]A)}}
      \qquad
      \inferrule*
      {x : A \in \Psi}
      {\typing{(\judge[\hat{\Psi}]x)}{(\vjudge \Psi A)}}
      \qquad
      \inferrule*
      {x : \quot{\vjudge \Phi A} \in \Gamma \\
      \mtyping{\wk{\hat{\Psi}}}\Phi}
      {\typing{(\judge[\hat{\Psi}]\unquote{x}{\wk{\hat{\Psi}}})}{(\vjudge \Psi A)}}

      \\[1em]
      \multicoltext{\fbox{$\typing t {\ann\tau}$} ~~ Term $t$ has computation type $\ann\tau$}
      \\      
      \inferrule*
      {y : \ann\tau \in \Gamma}
      {\typing y \ann{\tau}}
      \qquad
      \inferrule*
      {\typing C T}
      {\typing{\quot C}{\quot T}}
      \qquad
      \inferrule*
      {\typing[\Gamma, y:\ann{\tau_1}] t {\tau_2}}
      {\typing{\deffun y t}{(y : \ann{\tau_1}) \arrow \tau_2}}
      \\[1em]
      \inferrule*
      {\typing t {(y : \ann{\tau_1}) \Rightarrow \tau_2} \\ \typing s {\ann{\tau_1}}}
      {\typing{t\;s}{[s/y]\tau_2}}
      \qquad
      \inferrule*
      {\typing t{\ann\tau_1} \\ \Istype{\ann\tau_1 \equiv \ann\tau_2}}
      {\typing t {\ann\tau_2}}

      \\[1.5em]
      \multicoltext{Recursor over $\ty$: $\mathcal{I} = (\psi : \ctx) \Rightarrow (y :
      \quot{\judge[\psi]{\ty}}) \Rightarrow \tau$} \\
      \inferrule*
      {\typing{t}{\quot{\judge[\Psi]\ty}} \\
      \typing{b_\ttobj}{\mathcal{I}} \\ \typing{b_{\ttarr}}{\mathcal{I}}}
      {\typing{\induct{\mathcal{I}}{(b_\ttobj \pipe b_{\ttarr})}{\Psi}{t}}{[\Psi,t/\psi, y]\tau}}

      \\[1em]
      
      \text{Branch for the $\ttobj$ case} \quad
      \inferrule*
      {\typing[\Gamma, \psi : \ctx]{t_\ttobj}{[\quot{\judge[\psi]\ttobj} / y]\tau}}
      {\typing{(\psi \mapsto t_\ttobj)}{\mathcal{I}}}
      \\[1em]

      \text{Branch for the $\ttarr$ case}
      
      \inferrule*
      {\typing[\Gamma, \psi : \ctx, m, n : \quot{\judge[\psi]\ty}, f_m : [m/y]\tau, f_n :
      [n/y]\tau]{t_{arr}}{[\quot{\judge[\psi]{\ttarr\unquot{m}\unquot{n}}}/y]\tau}}
      {\typing{(\psi, m, n, f_m, f_n \mapsto t_{\ttarr})}{\mathcal{I}}}

      \\[1.5em]
      \multicoltext{
      Recursor over $\trm$: $\mathcal{I} = (\psi : \ctx) \Rightarrow (z :
      \quot{\judge[]{\ty}}) \Rightarrow (y :
      \quot{\judge[\psi]{\trm[\unquote{z}{\cdot}]}}) \Rightarrow \tau$} \\

      \inferrule*
      {\typing{t}{\quot{\judge[]\ty}} \\ \typing{t'}{\quot{\judge[\Psi]\trm[\unquote{t}{\cdot}]}} \\
      \typing{b_v}{\mathcal{I}} \\ \typing{b_{\ttlam}}{\mathcal{I}} \\
      \typing{b_{\ttapp}}{\mathcal{I}}}
      {\typing{\induct{\mathcal{I}}{(b_v \pipe b_{\ttlam} \pipe
      b_{\ttapp})}{\Psi}{t}\;t'}{[\Psi,t,t'/\psi, z, y]\tau}}

      \\[1em]
      
      \text{Branch for the variable case} \quad
      \inferrule*
      {\typing[\Gamma, \psi : \ctx, a : \quot{\judge[]\ty}, t :
      \quot{\vjudge{\psi}{\trm[\unquote{a}{\cdot}]}}]{t_v}{[a, t/ z, y]\tau}}
      {\typing{(\psi, a, t \mapsto t_v)}{\mathcal{I}}}

      \\[1em]
      \text{Branch for the $\ttlam$ case}
      
      \inferrule*
      {\Gamma, \psi: \ctx, a, b : \quot{\judge[]\ty},  
      m : \quot{\judge[\psi, x : \trm[\unquote{a}{\cdot}]]{\trm[\unquote{b}{\cdot}]}}, \\
      f_m : [(\psi, \trm[\unquote{a}{\cdot}]), b, m/\psi, x, y]\tau \vdash {t_{\ttlam}}
      : {[\quot{\judge[]{\ttarr\unquot{a}\unquot{b}}},
      \quot{\judge[\psi]{\ttlam\unquote{a}{\cdot}\unquote{b}{\cdot}(\lambda
      x. \unquot m)}}/ z, y]\tau}}
      {\typing{(\psi, a, b, m, f_m \mapsto t_{\ttlam})}{\mathcal{I}}}
      
      \\[1em]
      \text{Branch for the $\ttapp$ case}
      
      \inferrule*
      {\Gamma, \psi : \ctx, a, b : \quot{\judge[]\ty},
      m : \quot{\judge[\psi]{\trm{(\ttarr\unquote{a}{\cdot}\unquote{b}{\cdot})}}},
      n : \quot{\judge[\psi]{\trm{\unquote{a}{\cdot}}}}, \\
      f_m : [\quot{\judge[]{\ttarr\unquot{a}\unquot{b}}}, m/ z, y]\tau,
      f_n : [a,n/z, y]\tau \vdash t_{\ttapp} : {[b,\quot{\judge[\psi]{\ttapp\unquote{a}{\cdot}\unquote{b}{\cdot}\unquot m
      \unquot n}}/z, y]\tau}
      }
      {\typing{(\psi, a, b, m, n, f_m, f_n \mapsto t_{\ttapp})}{\mathcal{I}}}
      
    \end{array}
  \]
  
  \caption{Judgments in the computation level}\label{fig:compjudgments}
\end{figure}

The computation-level judgments are shown in \Cref{fig:compjudgments}. Similar to the
domain level, we also need well-formedness judgments for computation-level contexts
and types. Unlike \cocon defined in \citet{Pientka:LICS19}, we need a separate
judgment for well-formed types because we do not have universes here. The
computation-level language is also dependently typed, as shown in the application
rule. We also formulate the induction principle for $\trm$ and $\ty$. The case for
$\ty$ is straightforward as it is a normal algebraic data type. For $\trm$, in
addition to the cases of $\ttlam$ and $\ttapp$, we need a case for variables as in the
simply typed settings.

\subsection{Categories with Families}

In the previous section, we showed that we can regard the domain-level language 
as a cartesian closed category~$\CC$ and use the Yoneda embedding to embed the domain
level into the presheaf category $\hat\CC$, which is regarded as the computation-level
language. Interestingly, this model can be extended to the case of dependently typed
domain languages. In this case, $\CC$ needs to have enough
structure to model dependent types, and the model we consider here is categories with
families (CwFs)~\citep{DBLP:conf/types/Dybjer95,Hofmann:NI97}.
\begin{definition}\label{def:cwa}
  A \emph{category with families} $\mathcal{C}$ consists of the following data:
  \begin{enumerate}
  \item a terminal object $\top$,
  \item a functor $\textsf{Ty} : \mathcal{C}^{op} \to \Set$, whose action on morphisms we denote as $-\{\sigma\} : \Ty \Phi \to \Ty \Psi$ for
    $\sigma : \Psi \to \Phi$,
  \item for $\Phi \in \mathcal{C}$ and $A \in \Ty \Phi$, a set $\Tm \Phi A$, such that:
    \begin{itemize}
      \item for $\sigma : \Psi \to \Phi$ and $t \in \Tm \Phi A$, $t \{ \sigma \} \in \Tm{\Psi}{A\{\sigma\}}$, and
      \item the equations $t\{id_\Phi\} = t$ and  $t\{\sigma\}\{\delta\} = t \{ \sigma \circ \delta \}$ are valid.
    \end{itemize}
  \item a context comprehension $-.-$ so that given $\Phi \in \mathcal{C}$ and $A \in
    \Ty \Phi$, $\Phi.A \in \mathcal{C}$,
  \item for $\Phi \in \mathcal{C}$ and $A \in \Ty \Phi$, a projection morphism of context comprehension $p(A) : \Phi.A \to \Phi$, 
  \item for $\Phi \in \mathcal{C}$ and $A \in \Ty \Phi$, a variable projection $v_A
    \in \Tm{\Phi.A}{A\{p(A)\}}$, and
  \item for $\sigma : \Psi \to \Phi$, $A \in \Ty \Phi$, and $t \in
    \Tm{\Psi}{A\{\sigma\}}$, a unique extension morphism $\langle \sigma, t \rangle :
    \Psi \to \Phi.A$. 
  \end{enumerate}

  The following equations hold:
  \begin{enumerate}
  \item $p(A) \circ \langle \sigma, t \rangle = \sigma$,
  \item $v_A\{\langle \sigma, t \rangle\} = t$, and
  \item $\sigma = \langle p(A) \circ \sigma , v_T\{\sigma\} \rangle$ where $\sigma :
    \Phi \to \Psi.A$.
  \end{enumerate}
\end{definition}

We regard the objects in $\mathcal{C}$ as contexts and the morphisms as substitutions
of one context for another. Based on this understanding, $\Ty \Phi$ denotes the set of
semantic types in context $\Phi$. Given a semantic type $A \in \Ty \Phi$,
$A\{\sigma\}$ is thus applying the substitution $\sigma$ to $A$ given
$\sigma : \Psi \to \Phi$. Based on their definitions, we can prove the following
properties of type and term substitutions. Given $A \in \Ty \Phi$, $\sigma : \Psi \to
\Phi$, $\delta : \Psi' \to \Psi$, $t : \Tm \Phi A$:
\begin{align*}
  A\{id_\Phi\} &= A \\
  A\{\sigma\}\{\delta\} &= A \{ \sigma \circ \delta \}
\end{align*}

$\Tm \Phi A$ denotes the set of semantic terms of type $A$ in context $\Phi$. Given a
term $t \in \Tm \Phi A$ and a substitution $\sigma : \Psi \to \Phi$, $t\{\sigma\} :
\Tm{\Psi}{A\{\sigma\}}$ is
the result of applying $\sigma$ to $t$. Note that the type of this term is
$A\{\sigma\}$, so CwFs are capable of handling dependent types. 

Sometimes, given a substitution $\sigma : \Psi \to \Phi$ and $A \in \Ty\Phi$, we
would like to obtain another substitution $q(\sigma, A) : \Psi.A\{\sigma\} \to
\Phi.A$. This substitution is needed below when we define substitution for $\Pi$
types. We can define
\begin{align*}
  q(\sigma, A) := \langle \sigma \circ p(A\{\sigma\}), v_{A\{\sigma\}} \rangle
\end{align*}
By applying the property of $p(A\{\sigma\})$, we can see that the following diagram
is a pullback:
\begin{center}
  \begin{tikzcd}
    \Psi.A\{\sigma\} \arrow[rr, "{q(\sigma,A)}"] \arrow[dd, "p(A\{\sigma\})"'] &  & \Phi.A \arrow[dd, "p(A)"] \\
    &  &                           \\
    \Psi \arrow[rr, "\sigma"']                                                 &  & \Phi
  \end{tikzcd}
\end{center}

We shall work with telescopes of types. A \emph{telescope} of types in context $\Phi$ is a sequence 
of types $A_1,A_2,\dots, A_n$ such that $A_1\in\Ty{\Phi}$, $A_2\in\Ty{\Phi.A_1}$,
\dots, $A_n\in\Ty{\Phi.A_1.\dots A_{n-1}}$.
We write $\vec A$ to range over telescopes and extend context comprehension, substitution
and projection to telescopes in the canonical way. That is, we write $\Phi.\vec A$ for 
$\Phi.A_1.\dots.A_n$ and $p(\vec A)$ for $p(A_1)\circ \dots \circ p(A_n)$.


Up until this point, we have obtained a generic categorical structure for dependent type
theory. In order to model the dependent function space, we need a semantic type former.
\begin{definition}~\citep{Hofmann:NI97}
  Semantic $\Pi$ types in a CwF $\mathcal{C}$ have the following data:
  \begin{enumerate}
  \item A semantic type $\Pi(A, B) \in \Ty\Phi$ for each $A \in \Ty\Phi$ and $B \in
    \Ty{\Phi.A}$,
  \item a semantic term $\Lambda_{A,B}(M) \in \Tm{\Phi}{\Pi(A, B)}$ for each $M \in
    \Tm{\Phi.A}B$, and
  \item a semantic term $\App_{A,B}(M, N) \in \Tm\Phi{B\{\ctxext{id_\Phi, N}\}}$ for each $M \in \Tm\Phi
    {\Pi(A, B)}$ and $N \in \Tm\Phi A$.
  \end{enumerate}
  so that the following axioms are satisfied:
  \begin{enumerate}
  \item $\Pi(A, B)\{\sigma\} = \Pi(A\{\sigma\}, B\{q(\sigma, A)\}) \in \Ty\Psi$ for
    $\sigma : \Psi \to \Phi$,
  \item
    $\Lambda_{A,B}(M)\{\sigma\} = \Lambda_{A\{\sigma\},B\{q(\sigma, A)\}}(M\{q(\sigma,
    A)\}) \in \Tm\Psi{\Pi(A, B)\{\sigma\}}$ for $M \in \Tm{\Phi.A}B$ and
    $\sigma : \Psi \to \Phi$,
  \item
    $\App_{A,B}(M, N)\{\sigma\} = \App_{A\{\sigma\},B\{q(\sigma, A)\}}(M\{\sigma\},
    N\{\sigma\}) \in \Tm\Psi{B\{\ctxext{id_\Phi, N}\}\{\sigma\}} = \Tm\Psi
    {B\{\ctxext{\sigma, N\{\sigma\}}\}}$ for $M \in \Tm\Phi{\Pi(A, B)}$,
    $N \in \Tm\Phi A$ and $\sigma : \Psi \to \Phi$,
  \item
    $\App_{A,B}(\Lambda_{A,B}(M), N) = M\{\ctxext{id_\Phi, N}\} \in \Tm\Phi{B\{\ctxext{id_\Phi,
      N}\}}$ for each $M \in \Tm{\Phi.A}B$ and $N \in \Tm\Phi A$, and
  \item $\Lambda_{A,B}(\App_{A,B}(M\{p(A)\}, v_A)) = M \in \Tm\Phi{\Pi(A, B)}$.
  \end{enumerate}
\end{definition}
We often omit the subscripts of $\Lambda$ and $\App$ in favor of conciseness when they can be
unambiguously inferred. For example, the fourth axiom can be more concisely expressed
as $\App(\Lambda(M), N) = M\{\ctxext{id_\Phi, N}\}$.

One characteristic of this framework is that the consistency of equations involves
reasoning about equality between sets. Consider the third equation above. We can
transform the left hand side as follows:
\begin{align*}
  \App(M, N)\{\sigma\} &\in \Tm{\Phi}{B\{\ctxext{id_\Psi, N}\}\{\sigma\}} \\
                      &= \Tm{\Phi}{B\{\ctxext{id_\Psi, N} \circ \sigma\}} \\
                      &= \Tm{\Phi}{B\{\ctxext{\sigma, N\{\sigma\}}\}}
                        \tag*{property of extension morphism}
\end{align*}
The right hand side has:
\begin{align*}
  \App(M\{\sigma\}, N\{\sigma\}) &\in \Tm{\Phi}{B\{q(\sigma, A)\}\{\ctxext{id_\Phi,
                                  N\{\sigma\}}\}} \\
                                &= \Tm{\Phi}{B\{q(\sigma, A) \circ \ctxext{id_\Phi,
                                  N\{\sigma\}}\}} \\
                                &= \Tm\Phi {B\{\langle \sigma \circ p(A\{\sigma\}), v_{A\{\sigma\}} \rangle \circ \ctxext{id_\Phi,
                                  N\{\sigma\}}\}} \\
                                &= \Tm\Phi{B\{\ctxext{\sigma, N\{\sigma\}}\}}
\end{align*}
As both terms belong to the same set, the equation is well-defined.

Combining the definition of a CwF with $\Pi$ types, we are able to capture the nature
of dependent types with dependent function spaces in both the domain level and the computation
level. We can also axiomatize other types like $\ty$ and $\trm$ and their
constructors. However, we need more to interpret \cocon:
\begin{enumerate}
\item CwFs do not provide a direct connection between the domain and the computation
  level;
\item the domain-level terms exhibit different substitution behavior from the computation
  terms.
\end{enumerate}
To overcome these issues, we can extend the model presented in \Cref{sect:presheaves}
by requiring the domain category to possess structure of a CwF. We will expand our
discussion in the next section.

\subsection{Presheaves over a Small Category with Families}

Having introduced an appropriate notion of model for dependent type theories,
we can now generalise the construction from \Cref{sect:simple} to the case of
dependent domain languages. 

As in the simply-typed case, we begin with a term model~$\CC$ of the
domain-level type theory. The main difference is that in the dependent case,
this category has the structure of a CwF instead of being just cartesian closed.
In order to interpret the computation types of \cocon, we work in the presheaf
category $\hat\CC$. This category has enough structure to interpret \cocon
computations and also embeds~$\CC$ fully and faithfully via the Yoneda embedding.

\paragraph{A Universe for a small CwF} For working with the internal type theory
of~$\hat \CC$, it is again convenient to capture the embedding of $\CC$ into
$\hat\CC$ in terms of a Tarski-style universe. It is given by the following term
and type constants in the internal type theory of~$\hat \CC$. The types $\Ctx$
and $\El \Phi$ generalize $\Obj$ and $\El a$ from \Cref{sect:simple} to the
dependent case: \begin{mathpar}
  
  \Istype[\cdot]{\Ctx}

  \Istype[\Phi : \Ctx]{\El(\Phi)}

  \Istype[\Phi : \Ctx]{\tTy\Phi}

  \Istype[\Phi : \Ctx\comma A : \tTy\Phi]
  {\tTm\Phi A}
  
  \typing[\cdot]{\top}{\Ctx}

  \typing[\Phi : \Ctx\comma A : \tTy\Phi]{\Phi.A}{\Ctx}

  \typing[\Phi : \Ctx]{{!}}{\El(\Phi) \to \El(\top)}

  \typing[\Phi : \Ctx\comma A : \tTy\Phi]{p}{\El(\Phi.A) \to \El(\Phi)}

  \typing[\Phi : \Ctx\comma A : \tTy\Phi]{v}{\tTm{\Phi.A}{A\{p\}}}

  \typing[\Phi : \Ctx\comma \Psi : \Ctx\comma A : \tTy \Psi \comma \sigma : \El(\Phi) \to \El(\Psi)]
  {A\{\sigma\}}{\tTy\Phi}

  \typing[\Phi,\Psi : \Ctx\comma A : \tTy\Psi \comma M : \tTm\Psi A\comma \sigma : \El(\Phi) \to \El(\Psi)]
  {M\{\sigma\}}{\tTm\Phi{A\{\sigma\}}}

    \typing[\Psi, \Phi : \Ctx\comma \sigma : \El(\Psi) \to \El(\Phi)\comma A :
  \tTy\Phi\comma M : \tTm\Psi{A\{\sigma\}}]{\ctxext{\sigma, M}}{\El(\Psi) \to
    \El(\Phi.A)}
  
\end{mathpar}

Let us outline the interpretation of these constants in~$\hat\CC$ next.
Recall from \Cref{sec:internal} that contexts are objects of $\hat\CC$, 
types in context~$\Gamma$ are presheaves $\hTy\Gamma = \elop\Gamma \to \Set$, and terms
are sections of the projections maps $p\colon \Gamma.A \to \Gamma$ in 
$\hat \CC$. We detail the required structure used in the interpretation.

The interpretation of the types $\Ctx$ and $\El(\Phi)$ is as follows:
\begin{align*}
  \Ctx &:  \hTy\top
  \\
  \Ctx(\Psi,*) &= \{ \vec A \mid \text{$\vec A$ is a telescope of types in context~$\Psi$}\}
  \\  
  \Ctx(\sigma:(\Psi',*) \to (\Psi,*)) &= \vec A \mapsto \vec A\{\sigma\}
\end{align*}
\begin{align*}
  \El &:  \hTy{\top.\Ctx}
  \\
  \El(\Psi,(*,\vec{A})) &= \{ \sigma \in \CC(\Psi, \Psi.\vec{A}) \mid p(\vec A)\circ \sigma = \id_\Psi \}
  \\  
  \El(\sigma: (\Psi', (*,\vec{A}\{\sigma\}) \to (\Psi, (*,\vec{A}))) &= f \in \CC(\Psi, \Psi.\vec{A}) \mapsto f\{\sigma\}
\end{align*}

The types of the form $\Ty \Phi$ in the internal type theory are interpreted
as follows:
\begin{align*}
  \texttt{Ty} &: \hTy{\top.\Ctx}
  \\
  \tTy{\Psi,(*,\vec{A})} &= \Ty{\Psi.\vec{A}}
  \\
  \tTy{\sigma:(\Psi', (*,\vec{A}\{\sigma\}) \to (\Psi, (*,\vec{A}))} &= B\in \Ty{\Psi.\vec{A}} \mapsto B\{q(\sigma, \vec{A})\}
\end{align*}
Here $\textsf{Ty}$ on the right hand side is given by the CwF structure of the domain
category $\CC$. That is, $\texttt{Ty}$ is defined in terms of $\textsf{Ty}$ in the domain
category, so all types of the domain level can be referred to as terms in the presheaf
category. 

The types of the form $\tTm{\Phi}{A}$ are interpreted as follows.
\begin{align*}
  \texttt{Tm} &: \hTy{\top.\Ctx.\texttt{Ty}}
  \\
  \texttt{Tm}(\Psi,(*,\vec A, B)) &= \Tm{\Psi.\vec A}{B}
  \\
  \texttt{Tm}(\sigma) &= M\mapsto M\{q(\sigma, \vec A)\}
\end{align*}
Similar to $\texttt{Ty}$, $\texttt{Tm}$ is also defined by terms
$\CTm$ in
the domain category $\CC$. 

For spelling out the interpretation of the remaining terms, we need to give
a manageable presentation of the semantic interpretation of function types
of the form $\El(\Phi) \to \El(\Phi')$. In the simply typed case, the Yoneda lemma
shows an isomorphism between $\CC(\Phi, \Phi')$ and $\hat\CC(\y(\Phi), \y(\Phi'))$. In the dependently typed case, this
isomorphism remains: the function space $\El(\Phi) \to \El(\Phi')$ is isomorphic to
$\CC(\Phi, \Phi')$, as proved by the following lemma:
\begin{lemma}\label{lem:explicitfun}
  The interpretation of
  \[
      \Istype[\Phi\of \Ctx, \Phi' \of \Ctx]{\El(\Phi) \to \El(\Phi')}
  \]
  in the internal type theory of\/~$\hat\CC$ is isomorphic to the following
  presheaf $\textsf{Hom}$:
  \begin{align*}
    \textsf{Hom} &\colon \hTy{\top.\Ctx.\Ctx\{p\}}
    \\  
    \textsf{Hom}(\Psi, (*,\vec{A}, \vec{B})) &= \{ \sigma \in \CC(\Psi.\vec{A}, \Psi.\vec{B}) \mid p(\vec{B}) \circ \sigma = p(\vec{A}) \}
  \end{align*}
\end{lemma}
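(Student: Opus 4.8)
The plan is to unfold the interpretation of the function type using the explicit description of $\hat\Pi$ from \Cref{sec:internal}, and then build an explicit isomorphism with $\textsf{Hom}$ out of the pullback squares $q(-,-)$ that come with the CwF structure of~$\CC$. Write $\Gamma$ for the context $\top.\Ctx.\Ctx\{p\}$, so that an element of $\Gamma(\Psi)$ is a pair $(\vec A,\vec B)$ of telescopes over~$\Psi$ and restriction acts by telescope substitution on both components. Pulling $\El$ back along the two projections $\Gamma \to \top.\Ctx$ gives types with $\El(\Phi)(\Psi,(*,\vec A,\vec B)) = \{\sigma \in \CC(\Psi,\Psi.\vec A) \mid p(\vec A)\circ\sigma = \id_\Psi\}$ and $\El(\Phi')(\Psi,(*,\vec A,\vec B)) = \{\tau \in \CC(\Psi,\Psi.\vec B) \mid p(\vec B)\circ\tau = \id_\Psi\}$.

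Since the arrow is non-dependent, the $\hat\Pi$ formula says that an element of $(\El(\Phi)\to\El(\Phi'))(\Psi,(*,\vec A,\vec B))$ is a family~$f$ assigning to every $\delta\colon\Psi'\to\Psi$ and every section $\sigma$ of $p(\vec A\{\delta\})$ a section $f(\Psi',\delta,\sigma)$ of $p(\vec B\{\delta\})$, subject to the coherence condition $f(\Psi'',\delta\circ\delta',\sigma\{\delta'\}) = f(\Psi',\delta,\sigma)\{\delta'\}$. I would first construct the map $\textsf{Hom}\to(\El(\Phi)\to\El(\Phi'))$: given $g\colon\Psi.\vec A\to\Psi.\vec B$ with $p(\vec B)\circ g = p(\vec A)$ and data $(\Psi',\delta,\sigma)$, the pair $(\id_{\Psi'},\,g\circ q(\delta,\vec A)\circ\sigma)$ forms a cone over the pullback square for $q(\delta,\vec B)$ --- this uses $p(\vec B)\circ g = p(\vec A)$, the identity $p(\vec A)\circ q(\delta,\vec A) = \delta\circ p(\vec A\{\delta\})$, and the fact that $\sigma$ is a section --- so it induces a section $f(\Psi',\delta,\sigma)$ of $p(\vec B\{\delta\})$, and coherence of the resulting family is forced by uniqueness in the universal property. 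Conversely, from~$f$ I would read off $g := q(p(\vec A),\vec B)\circ f(\Psi.\vec A,\,p(\vec A),\,\sigma_0)\colon\Psi.\vec A\to\Psi.\vec B$, where $\sigma_0$ is the canonical section of $p(\vec A\{p(\vec A)\})$ determined by $q(p(\vec A),\vec A)\circ\sigma_0 = \id_{\Psi.\vec A}$; that $p(\vec B)\circ g = p(\vec A)$ follows from the pullback identity for $q(p(\vec A),\vec B)$ together with $f(\dots)$ being a section.

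It then remains to check that these two constructions are mutually inverse and natural in~$\Psi$. One composite is immediate: feeding~$g$ into the first construction at the triple $(\Psi.\vec A,\,p(\vec A),\,\sigma_0)$ produces the cone $(\id,\,g)$, whose pullback factorization composed with $q(p(\vec A),\vec B)$ returns~$g$. For the other composite, the key point is that any triple $(\Psi',\delta,\sigma)$ is a restriction of the ``generic'' triple along the morphism $h := q(\delta,\vec A)\circ\sigma\colon\Psi'\to\Psi.\vec A$: one checks $p(\vec A)\circ h = \delta$ and, using functoriality of $q(-,-)$ and the pullback characterisation of restricted sections, $\sigma_0\{h\} = \sigma$. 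Coherence of~$f$ then gives $f(\Psi',\delta,\sigma) = f(\Psi.\vec A,p(\vec A),\sigma_0)\{h\}$, and a short computation with $q$-functoriality identifies $q(\delta,\vec B)\circ f(\Psi',\delta,\sigma)$ with $g\circ q(\delta,\vec A)\circ\sigma$, which by pullback uniqueness is exactly the value the first construction assigns. Naturality in~$\Psi$ reduces to checking that both maps commute with telescope substitution, which is routine.

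The main obstacle is the index bookkeeping around the maps $q(-,-)$: making sure the cone in the forward construction really commutes, and --- more delicately --- that the reconstruction genuinely recovers~$f$, which hinges on the identities $p(\vec A)\circ h = \delta$, $\sigma_0\{h\} = \sigma$, and $q(\delta,\vec A) = q(p(\vec A),\vec A)\circ q(h,\vec A\{p(\vec A)\})$. None of this is conceptually deep, but the indices are easy to get wrong, and one must ensure that $\sigma_0$, the restriction operation on sections, and $q(-,-)$ are all set up consistently with the CwF equations of \Cref{def:cwa}.
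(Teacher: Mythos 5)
Your proposal is correct and follows essentially the same route as the paper: one direction evaluates the $\hat\Pi$-family at the generic point $(p(\vec A),\sigma_0)$ (the paper's $(p_{\vec A}, v)$) and post-composes with $q(p(\vec A),\vec B)$, while the other factors $g\circ q(\delta,\vec A)\circ\sigma$ through the pullback square for $q(\delta,\vec B)$. You additionally sketch the verification that the two maps are mutually inverse, which the paper's proof leaves implicit; that part of your argument is also sound.
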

\begin{proof}
  For one direction, as given in \Cref{sec:internal}, the type $\El(\Phi) \to
  \El(\Phi')$ in the syntax is interpreted to a function $f$ which takes $\sigma \colon
  \Phi \to \Psi$ and $M:\Phi \to \Phi.\vec A\{\sigma\}$ and outputs $\Phi \to
  \Phi.\vec B\{\sigma\}$ that is natural in $\Phi$. Then we obtain $f(p_{\vec A} :
  \Psi.\vec A \to \Psi, v : \Psi.\vec A \to  \Psi.\vec A. \vec A \{p_{\vec A}\}) :
  \Psi.\vec A \to \Psi.\vec A.\vec B\{p_{\vec A}\}$, such that $p_{\vec A} \circ
  f(p_{\vec A}, v) = \id_{\Psi.\vec A}$. We obtain $\Psi.\vec A \to \Psi.\vec B$ by
  composing $q(p_{\vec A}, \vec B)$ with $f(p_{\vec A}, v)$. 

  The other direction is given $\sigma : \Psi.\vec A \to \Psi.\vec B$, $\delta : \Phi
  \to \Psi$ and $\sigma' : \Phi \to \Phi.\vec A\{\delta\}$ and outputs $\Phi \to
  \Phi.\vec B\{\delta\}$.
  \begin{center}
\begin{tikzcd}
                          & \Psi.\vec A \arrow[r, "\sigma"]                                   & \Psi.\vec B                                             \\
\Phi \arrow[r, "\sigma'"] & \Phi.\vec A\{ \delta\} \arrow[u, "{q(\delta, \vec A)}"] \arrow[r] & \Phi.\vec B\{\delta\} \arrow[u, "{q(\delta, \vec B)}"']
\end{tikzcd}
  \end{center}
  Thus the morphism $\Phi \to \Phi.\vec B\{\delta\}$ is obtained by composing~$\sigma'$ 
    with the unique morphism that makes the above square a pullback. 
\end{proof}
In the following, we shall apply the isomorphism from the lemma implicitly and treat the
interpretation of $\El(\Phi) \to \El(\Phi')$ to be the same as $\textsf{Hom}$.

With this convention, we are able to give an explicit interpretation for the judgments that
we give at the beginning of this section. Recall, for example, the term for type substitution:
\begin{align*}
    \typing[\Phi : \Ctx\comma \Psi : \Ctx\comma A : \tTy \Psi \comma \sigma : \El(\Phi) \to \El(\Psi)]
  {A\{\sigma\}}{\tTy\Phi}
\end{align*}
This term $A\{\sigma\}$ is 
interpreted as the following natural transformation:
\begin{align*}
  A\{\sigma\}_\Theta & :=
    (*,\vec{A}, \vec{B}, C : \tTy{\Theta.\vec{B}}, \sigma : \Theta.\vec{A} \to \Theta.\vec{B})
    \mapsto
    (*,\vec{A}, \vec{B}, C, \sigma, C\{\sigma\})
\end{align*}
Recall that terms of $\hat\CC$ are interpreted as sections of context 
projections, so $A\{\sigma\}$ is a natural transformation from
the interpretation of the context 
$\Gamma := \Phi : \Ctx\comma \Psi : \Ctx \comma A : \tTy \Psi \comma \sigma : \El(\Phi) \to \El(\Psi)$
to the interpretation of the context 
$\Gamma \comma B : \tTy \Phi$. $\vec A$ and $\vec B$ interpret $\Phi$ and $\Psi$
respectively because $\Ctx$ is interpreted as a set of telescopes of $\Theta$. 



We omit the interpretation of the remaining terms and the verification of the equations and next turn to the case
where $\CC$ is a CwF with dependent product.

\paragraph{Domain-level $\Pi$ types}


Domain-level $\Pi$ types can be formulated by the following constants:
\begin{mathpar}
  \typing[\Phi : \Ctx, A : \tTy \Phi, B : \tTy{\Phi.A}]{\Pi(A, B)}{\tTy\Phi}

  \typing[\Phi : \Ctx, M : \tTm{\Phi.A} B]{\Lambda(M)}{\tTm\Phi {\Pi(A, B)}}

  \typing[\Phi : \Ctx, M : \tTm\Phi {\Pi(A, B)}, N : \tTm\Phi A]{\App(M, N)}{\tTm\Phi {B\{\ctxext{1_\Phi, N}\}}}
\end{mathpar}
$\Pi$ types defined here reside in the presheaf category and model $\Pi$ types in the
domain category $\CC$. They can be defined in terms of $\Pi$ types in $\CC$ as the
types defined above:
\begin{align*}
  \Pi_\Psi := (*, \vec A, C : \Ty{\Psi.\vec A}, D : \Ty{\Psi.\vec A.C}) \mapsto (*,
  \vec A, C : \Ty{\Psi.\vec A}, D : \Ty{\Psi.\vec A.C}, \Pi(C, D))
\end{align*}
where $\Pi(C, D)$ is given by the $\Pi$ types in $\CC$. 


\paragraph{Object-level Language}

Finally we can define a model for the object-level language defined in the beginning of
\Cref{sec:depty}. In this object language, $\trm$ is indexed by $\ty$. We encode the
object language as follows:
\begin{mathpar}
  \typing[\cdot]{\tty}{\tTy\top}

  \typing[\cdot]{\ttrm}{\tTy{\top.\tty}}

  \typing[\Phi:\Ctx]{\tty' := \tty\{!\}}{\tTy\Phi}
\end{mathpar}
For convenience, we introduce the abbreviation $\tty'$ for $\tty\{!\}$. We model $\ty$
and $\trm$ in their minimal contexts in order to avoid formulating their coherence
conditions w.r.t. substitutions. In the front-end language, we use the applied form
$\trm[a]$ to denote a type of terms with type $a$. In the model given as here, we
must apply a substitution instead:
\begin{mathpar}
  \typing[\Phi : \Ctx, a : \tTm\Phi{\tty'}]{\ttrm\{\ctxext{!, a}\}}{\tTy\Phi}
\end{mathpar}
That is, in the front-end syntax, we write $\trm~a$, which is interpreted as
$\ttrm\{\ctxext{!, a}\}$ in the model. We introduce an abbreviation for later:
\begin{align*}
  \ttrm[a] := \ttrm\{\ctxext{!, a}\}
\end{align*}

Finally, the constructors of the object language are formulated as follows:
\begin{mathpar}
  \typing[\Phi : \Ctx]{\tobj}{\tTm\Phi{\tty'}}

  \typing[\Phi : \Ctx] {\tarr} {\tTm\Phi{\tty'} \to \tTm{\Phi}{\tty'} \to
    \tTm{\Phi}{\tty'}}

  \typing[\Phi : \Ctx]
  {\tlam}
  {(a, b : \tTm{\Phi}{\tty'}) \to
     \tTm{\Phi.\ttrm[a]}{\ttrm[b\{p\}]} \to
    \tTm{\Phi}{\ttrm[\tarr(a, b)]}}

  \typing[\Phi : \Ctx]
  {\tapp}
  {(a, b : \tTm{\Phi}{\tty'}) \to
    \tTm{\Phi}{\ttrm[\tarr(a, b)]} \to
    \tTm{\Phi}{\ttrm[a]} \to \tTm\Phi{\ttrm[b]}}
\end{mathpar}

\paragraph{Category-theoretic perspective}

A number of properties of the universe~$\El$ can be obtained by
category-theoretic considerations. We have explained~$\El$ as a syntactic
representation of the Yoneda embedding. It has been
shown~\cite{capriotti:thesis} that the Yoneda embedding is a morphism of CwFs,
which means that it is a functor preserving the CwF structure. Using the
notation of \cite[Definition~2.1.4]{capriotti:thesis} this means that there are
isomorphisms such as $\y(\Phi.A) \iso \y(\Phi).\y^{\mathrm{Ty}}(A)$ for the
preservation of context comprehension. The terms and types for the
universe~$\El$ in the internal type theory that we have defined at the beginning
of this section are a syntactic presentation of this structure.

With this category-theoretic view, it is possible to use existing results on
morphisms of CwFs to obtain information about the universe~$\El$. For example,
\cite[Proposition~4.8]{DBLP:journals/mscs/ClairambaultD14} can be used to show
that the Yoneda embedding preserves $\Pi$-types up to isomorphism, because the
Yoneda embedding preserves local cartesian closed structure~\cite[Lemma~4.5]{DBLP:conf/ctcs/Pitts87}.
This gives us isomorphisms such as between $\El(\Phi) \to \El(\Phi')$ and
$\El(\Phi \to \Phi')$. For instance the direct proof of
Lemma~\ref{lem:explicitfun} above can be understood as an instance of this
isomorphism.

\subsection{Interpreting the Domain Level}

Given the semantic model, we can detail the interpretation of the dependently typed
\cocon defined in \Cref{sec:dep-cocon} into this model. The interpretation is a
natural generalization of the simply typed version. First we will consider the
interpretation of the domain-level types and terms, as shown in \Cref{fig:intplf}. One
complication we encountered here is that various judgments are interdependent. For
example, the type well-formedness judgment $\istype A$ and the term well-formedness
judgment $\mtyping M A$ depend on each other. As previously discussed, a general fact
of the interpretation is that $\trm[M]$ in the syntactic level is interpreted to
$\ttrm[\intp M]$.

\begin{figure}
  \centering\small
  \begin{alignat*}{4}
    &\mbox{Interpretation of domain-level types} \span\span\span \\
    &\intp{\istype{\Pi x : A . B}} &\;=\;& \Pi(\intp{\istype A}, \intp{\istype[\Gamma][\Psi, x : A] B}) \\
    &\intp{\istype{\ty}} &\;=\;& \tty' \\
    &\intp{\istype{\trm[M]}} &\;=\;& \ttrm[\intp{\mtyping M \ty}] \\
    &\mbox{Interpretation of domain-level contexts} \span\span\span\\
    &\intp{\isctx\cdot} &\;=\;& \top \\
    &\intp{\isctx{\Psi, x:A}} &\;=\;& \intp{\isctx \Psi}.\intp{\istype A} \\
    &\intp{\isctx{\psi}} &\;=\;& \psi \\
    &\mbox{Interpretation of domain-level substitutions where $\Psi' = \intp{\isctx \Psi}$}
    \span\span\span \\
    &\intp{\mtyping \cdot \cdot} &\;=\;& ! : \El(\Psi') \to \El(\top) \\
    &\intp{\mtyping{\sigma, M}{\Phi, x : A}} &\;=\;& \ctxext{e_1, e_2} : \El(\Psi') \to
    \El(\Phi'.A')
    \tag*{where $A' = \intp{\istype[\Gamma][\Phi]A}$}\\
    \tag*{and $\Phi' = \intp{\isctx \Phi}$} \\
    \tag*{and $e_1 = \intp{\mtyping \sigma \Phi} : \El(\Psi') \to \El(\Phi')$} \\
    \tag*{and $e_2 = \intp{\mtyping{M}{A[\sigma/\hat{\Phi}]}} : \Tm{\Psi'}{A'\{e_1\}}$} \\
    &\intp{\mtyping[\Gamma][\Psi, \overrightarrow{x : A}]{wk_{\hat\Psi}}\Psi}
    &\;=\;& p^k : \El(\intp{\isctx{\Psi, \overrightarrow{x : A}}}) \to \El(\Psi') 
    \tag*{where $k = |\wvec{x{:}A}|$}\\
    &\mbox{Interpretation of domain-level terms where $\Psi' = \intp{\isctx \Psi}$}
    \span\span\span \\
    &\intp{\mtyping{x}{A}}
    &\;=\;& v\{p^k\} : \Tm{\Psi'}{A'\{p^{k+1}\}}
    \tag*{where $\Psi = \Psi_0, x : A, \overrightarrow{y_i : B_i}$} \\
    \tag*{and $|\overrightarrow{y_i : B_i}| = k$} \\
    & & \tag*{$A' = \intp{\istype[\Gamma][\Psi_0] A}$}\\
    &\intp{\mtyping{\lambda x. M}{\Pi x : A. B}}
    &\;=\;& \Lambda(e) : \Tm{\Psi'}{\Pi(A', B')}
    \tag*{where $A' = \intp{\istype A}$ and $B' = \intp{\istype[\Gamma][\Psi, x
        : A]B}$,} \\
    \tag*{and $e = \intp{\mtyping[\Gamma][\Psi, x : A] M B} : \Tm{\Psi'.A'}{B'}$} \\
    &\intp{\mtyping{M\;N}{[N/x]B}}
    &\;=\;& \App(e_1, e_2) : \Tm{\Psi'}{B'\{e_2\}}
    \tag*{where $A' = \intp{\istype A}$ and $B' = \intp{\istype[\Gamma][\Psi, x
        : A]B}$,} \\
    \tag*{and $e_1 = \intp{\mtyping M {\Pi x : A. B}} : \Tm{\Psi'}{\Pi(A', B')}$,}\\
    \tag*{and $e_2 = \intp{\mtyping N A} : \Tm{\Psi'} A$} \\
    &\intp{\mtyping \ttobj \ty} &\;=\;& \tobj : \Tm{\Psi'}{\tty'} \\
    &\intp{\mtyping{\ttarr\;a\;b}{\ty}} &\;=\;& \ttarr({\intp{\mtyping a \ty}},{\intp{\mtyping b
        \ty}}) : \Tm{\Psi'}{\tty'} \\
    &\intp{\mtyping{\tlam\;a\;b\;m}{\trm[(\tarr\;a\;b)]}} 
    &\;=\;& \ttlam (a', b', \App(e'\{p\}, v)) : \Tm{\Psi'}{\ttrm[arr(a',b')]}
    \tag*{where $a' = \intp{\mtyping a \ty} : \Tm{\Psi'}{\tty'}$,} \\
    \tag*{and $b' = \intp{\mtyping b \ty} : \Tm{\Psi'}{\tty'}$,} \\
    \tag*{and $e' = \intp{\mtyping{m}{\trm[a] \to \trm[b]}}: \Tm{\Psi'}{\Pi(\ttrm[a'], \ttrm[b'\{p\}])}$,} \\
    \tag*{and $v : \Tm{\Psi'.\ttrm[a']}{\ttrm[a']\{p\}}$} \\
    &\intp{\mtyping{\tapp\;a\;b\;m\;n}{\trm[b]}}
    &\;=\;& \ttapp(a', b', e_1,e_2)  : \Tm{\Psi'}{\ttrm[b']}
    \tag*{where $a' = \intp{\mtyping a \ty}$ and $b' = \intp{\mtyping b \ty}$,} \\
    \tag*{and $e_1 = \intp{\mtyping m {\trm[(\tarr\;a\;b)]}}$} \\
    \tag*{and $e_2 = \intp{\mtyping n {\trm[a]}}$} \\
    &\intp{\mtyping{\unquote t \sigma}{[\sigma/\hat{\Phi}]A}}
    &\;=\;& \letbox{u}{e_1}{u\{e_2\}} : \Tm{\Psi'}{A'\{e_2\}} \\
    \tag*{where $\Phi' = \intp{\isctx{\Phi}}$ and $A' = \intp{\istype[\Gamma][\Phi] A}$,} \\
    \tag*{and $e_1 = \intp{\typing t {\quot{\judge[\Phi]{A}}}} : \flat(\Tm{\Phi'}{A'})$,} \\
    \tag*{and $e_2 = \intp{\mtyping \sigma \Phi} : \El(\Psi') \to \El(\Phi')$}
  \end{alignat*}
  \caption{Interpretation of the domain-level level}\label{fig:intplf}
\end{figure}

In the interpretation, we proceed by interpreting domain-level types, $\intp{\istype
  A}$. We interpret $\ty$ and $\trm$ to their semantic correspondences. Dependent
function types $\Pi$ is interpreted to semantic $\Pi$  types as defined in the
previous subsection. The interpretation of domain-level contexts $\intp{\isctx \Psi}$
is defined recursively by appending interpreted types to the end of the semantic
context. The empty context $\cdot$ is interpreted to the terminal object $\top$ and
a context variable $\psi$ is interpreted to an interpreted crisp variable in the
interpretation of $\Gamma$, which we will give later in the interpretation of the
computation level. 

Domain-level substitutions are interpreted to substitution morphisms by
$\intp{\mtyping \sigma \Phi}$. The interpretation in this case is very similar to the
simply typed case. We use $!$ for the case of an empty substitution, $\ctxext{e_1,
  e_2}$ for an extended substitution, and iterated first projections $p^k$ for
weakening substitutions.

Last we interpret domain-level terms using $\intp{\mtyping t A}$. In the variable
case, given well-typedness, we know that the domain-level context $\Psi$ must have the
form $\Psi_0, x : A, \overrightarrow{y_i : B_i}$. We first use $v :
\tTm{\Psi_0.A'}{A'\{p\}}$ to extract $x$ from $\Psi_0, x : A$. Then we apply weakening
$p^k$ to it where $k$ is the length of the part of context after $x$, which gives us
$v\{p^k\}$. The abstraction and the application cases are straightforward; they are
interpreted to their semantic terms immediately. Next we interpret the constructors on
the object level. $\ttobj$ and $\ttarr$ are interpreted directly to their semantic
correspondences. 

The $\ttlam$ case is more interesting, because we need to determine how HOAS encoding in the domain language
corresponds in the semantics. According to the rule for $\tlam$ in the previous
subsection, HOAS corresponds to a semantic term in the set
$\tTm{\Phi'.\ttrm[a']}{\ttrm[b'\{p\}]}$. Meanwhile, if we directly interpret $m$ as in the
rule, we obtain a term $e'$ in the set $\tTm{\Psi'}{\Pi(\ttrm[a'],
  \ttrm[b'\{p\}])}$. Therefore, we need to transform $e'$ properly by supplying
$\App(e'\{p\}, v)$. We can examine that this transformation does achieve the goal:
\begin{align*}
  \App(e'\{p\}, v) &: \tTm{\Psi'.\ttrm[a']}{\ttrm[b'\{p\}]\{p\}\{\ctxext{id_{\Psi'.\ttrm[a']}, v}\}} \\
                  &=
                    \tTm{\Psi'.\ttrm[a']}{\ttrm[b'\{p\}]\{p \circ \ctxext{id_{\Psi'.\ttrm[a']},
                    v}\}} \\
                  &= \tTm{\Psi'.\ttrm[a']}{\ttrm[b'\{p\}]}
\end{align*}

The $\ttapp$ case is straightforward. In the unbox case, we first interpret $t$, from
which we obtain a boxed semantic term. We use \kw{let box} to extract from it $u$ in
$\tTm{\Phi'}{A'}$. By applying the interpreted substitution $e_2$ to $u$, we obtain
a term in the expected set $\tTm{\Psi'}{A'\{e_2\}}$.

\subsection{Interpreting the Computation Level}

\begin{figure}
  \centering\small
  \begin{alignat*}{4}
    \mbox{Intepretation of computation types} \span\span\span \\
    \intp{\Istype{\quot{T}}} &\;=\;&& \flat\intp{\judge[\Gamma]T} \\
    \intp{\Istype{(x : \check{\tau_1}) \Rightarrow \tau_2}}
    &\;=\;&& (x :: \intp{{\Istype{\check\tau_1}}}) \to^\flat {\intp{\Istype[\Gamma, x: \ann\tau_1]{\tau_2}}} \\
    \intp{\Istype{\ctx}} &\;=\;&& \Ctx \\
    \mbox{Interpretation of computation contexts} \span\span\span\\
    \intp{\cdot} &\;=\;&& \cdot \\
    \intp{\Gamma, x :\check{\tau}} &\;=\;&& \intp{\Gamma}, x :: \intp{\Istype{\ann\tau}}  \\
    \mbox{Interpretation of contextual objects} \span\span\span\\
    \intp{\typing{(\judge[\hat{\Psi}]{M})}{(\judge[\Psi]A)}}
    &\;=\;&& \intp{\mtyping M A} \\
    \intp{\typing{(\judge[\hat{\Psi}]{M})}{(\vjudge\Psi A)}}
    &\;=\;&& \intp{\mtyping M A} \\
    \mbox{Interpretation of contextual types} \span\span\span\\
    \intp{\judge{(\judge[\Psi]A)}}
    &\;=\;&& \Tm{\intp{\isctx{\Psi}}}{\intp{\istype A}} \\
    \intp{\judge{(\vjudge\Psi A)}}
    &\;=\;&& \Tmv{\intp{\isctx{\Psi}}}{\intp{\istype A}}
    \tag*{$\CTm_v$ is $\CTm$ but equivalent to $v\{p^k\}$}\\
    \mbox{Interpretation of computation terms} \span\span\span\\
    \intp{\typing{\quot{C}}{\quot{T}}}
    &\;=\;&& \boxit{\intp{\typing C T}} \\
    \intp{\typing{t\;s}{[s/y]\tau_2}}
    &\;=\;&& \intp{\typing{t}{(y : \check{\tau_1}) \Rightarrow \tau_2}}\;
    \intp{\typing{s}{\check{\tau_1}}} \\
    \intp{\typing{\deffun x t}{(x : \check{\tau_1}) \Rightarrow \tau_2}}
    &\;=\;&& \mlam x :: \flat\intp{\Istype{\ann\tau_1}} .{\intp{\typing[\Gamma, x :
      \check{\tau_1}]{t}{\tau_2}}} \\
    \intp{\typing x{\check\tau}} &\;=\;&& x
  \end{alignat*}
  \caption{Interpretation of the computation level without recursors}\label{fig:intpcomp}
\end{figure}

\begin{figure}[hbt]
  \centering\small
  The semantic recursor for $\ttrm$:
  \begin{mathpar}
    \inferrule*
    {\mmtyping[\Gamma][\cdot] \Psi \Ctx \\
      \mmtyping[\Gamma][\cdot] a {\flat(\Tm\top \tty)} \\
      \mmtyping[\Gamma][\cdot] y {\flat(\Tm\Psi{\ttrm[\lift(\Psi, a)]})}}
    {\Istype{R(\Psi, a, y)}}

    \inferrule*
    {\mmtyping[\Gamma][\cdot] \Psi \Ctx \\
      \mmtyping[\Gamma][\cdot] a {\flat(\Tm\top \tty)} \\
      \mmtyping[\Gamma][\cdot] t {\flat(\Tmv\Psi{\ttrm[\lift(\Psi, a)]})}}
    {\mmtyping[\Gamma][\cdot]{B_v(\Psi, a, t)}{R(\Psi, a, t)}}

    \inferrule*
    {\mmtyping[\Gamma][\cdot] \Psi \Ctx \\
      \mmtyping[\Gamma][\cdot] a {\flat(\Tm\top \tty)} \\
      \mmtyping[\Gamma][\cdot] b {\flat(\Tm\top \tty)} \\
      \mmtyping[\Gamma][\cdot] m {\flat(\Tm{\Psi.\ttrm[\lift(\Psi, a)]}{\ttrm[\lift(\Psi.\ttrm[\lift(\Psi, a)], b)]})} \\
      \mmtyping[\Gamma][\cdot]{f_m}{R(\Psi.\ttrm[\lift(\Psi, a)], b, m)}
    }
    {\mmtyping[\Gamma][\cdot]{B_{\tlam}(\Psi, m, f_m)}{R(\Phi, \tarr'(a, b), \tlam'(\Psi, m))}}
    
    \inferrule*
    {\mmtyping[\Gamma][\cdot] \Psi \Ctx \\
      \mmtyping[\Gamma][\cdot] a {\flat(\Tm\top \tty)} \\
      \mmtyping[\Gamma][\cdot] b {\flat(\Tm\top \tty)} \\\\
      \mmtyping[\Gamma][\cdot] m {\flat(\Tm\Psi{\ttrm[\tarr(\lift(\Psi, a), \lift(\Psi, b))]})} \\
      \mmtyping[\Gamma][\cdot] n {\flat(\Tm\Psi{\ttrm[\lift(\Psi, a)]})} \\
      \mmtyping[\Gamma][\cdot]{f_m}{R(\Psi, \tarr'(a, b), m)} \\
      \mmtyping[\Gamma][\cdot]{f_n}{R(\Psi, a, n)} 
    }
    {\mmtyping[\Gamma][\cdot]{B_{\tapp}(\Psi, m, n, f_m, f_n)}{R(\Psi, b, \tapp'(\Psi, m, n))}}

    \inferrule*
    { }
    {\Gamma \pipe \cdot \vdash{rec_{\ttrm}(B_v, B_{\tlam}, B_{\tapp})} : (\Psi : \Ctx) \to (a :
      \flat(\Tm\top \tty)) \to (y : \flat(\Tm\Psi{\ttrm[\lift(\Psi, a)]})) \to R(\Psi, a, y)}
  \end{mathpar}

  \vspace{1em}
  Equations:
  \begin{align*}
    &rec_{\ttrm}(B_v, B_{\tlam}, B_{\tapp}, \Psi, a, x) = B_v(\Psi, a, x) \tag*{where $x : \flat(\Tmv\Psi{\ttrm[\lift(\Psi, a)]})$}\\
    &rec_{\ttrm}(B_v, B_{\tlam}, B_{\tapp}, \Psi, \tarr'(a, b), \tlam'(\Psi,
      m)) \\
    = &B_{\tlam}(\Psi, f, rec_{\ttrm}(B_v, B_{\tlam}, B_{\tapp},
        \Psi.\ttrm[\lift(\Psi, a)], b, m)) \\
    &rec_{\ttrm}(B_v, B_{\tlam}, B_{\tapp}, \Psi, b, \tapp'(\Psi, m, n)) \\
    = &B_{\tlam}(\Psi, f, m, n, rec_{\ttrm}(B_v, B_{\tlam}, B_{\tapp}, \Psi,
        \tarr'(a, b), m),
        rec_{\ttrm}(B_v, B_{\tlam}, B_{\tapp}, \Psi, a, n))
  \end{align*}

  \caption{Semantic recursor}\label{fig:recursor}
\end{figure}
\begin{figure}[hbt]
  \centering\small
  \begin{alignat*}{4}
    \mbox{The interpretation of the recursor for $\ttrm$ where $\mathcal{I} = (\psi :
      \ctx) \Rightarrow (z : \quot{\judge[]{\ty}}) \Rightarrow (y :
      \quot{\judge[\psi]{\trm[\unquote{z}{\cdot}]}}) \Rightarrow \tau$}\span\span\span
    \\
    \intp{\typing{\induct{\mathcal{I}}{(b_v \pipe b_{\ttlam} \pipe
          b_{\ttapp})}{\Psi}{t}\;t'}{\tau[\Psi,t,t'/\psi, z, y]}}
    &\;=\;&& rec_{\ttrm}(e_v, e_{\tlam}, e_{\tapp}, e_{\Psi}, e_t, e_{t'})
    \tag*{where $e_\Psi = \intp{\isctx \Psi}$} \\
    \tag*{and $e_t = \intp{\typing t {\quot{\judge[]\ty}}}$} \\
    \tag*{and $e_{t'} = \intp{\typing{t'}{\quot{\judge[\Psi]\trm[\unquote{t}{\cdot}]}}}$} \\
    \tag*{and $e_v = \intp{\typing{b_v}{\mathcal{I}}}$} \\
    \tag*{and $e_{\tlam} = \intp{\typing{b_{\ttlam}}{\mathcal{I}}}$} \\
    \tag*{and $e_{\tapp} = \intp{\typing{b_{\ttapp}}{\mathcal{I}}}$} \\
  \end{alignat*}
  \begin{alignat*}{4}
    \mbox{Interpretation of branches} \span\span\span \\
    \intp{\typing{b_v}{\mathcal{I}}} &\;=\;&& \mlam\Psi\;a\;t.e
    \tag*{where $e = \intp{\typing[\Gamma, \Psi : \ctx, a : \quot{\judge[]\ty}, t :
    \quot{\vjudge{\Psi}{\trm[\unquote a {\cdot}]}}]{t_v}{\tau[a, t/ z, y]}}$}
    \\
    \intp{\typing{b_{\ttlam}}{\mathcal{I}}} &\;=\;&& \mlam \Psi\;a\;b\;m\;f_m.e
    \tag*{where $e = \intp{\text{the premise judgment}}$} \\
    \intp{\typing{b_{\ttapp}}{\mathcal{I}}} &\;=\;&& \mlam
    \Psi\;a\;b\;m\;n\;f_m\;f_n.e
    \tag*{where $e = \intp{\text{the premise judgment}}$}
  \end{alignat*}
  \caption{Interpretation of recursor for $\ttrm$}\label{fig:intprec}
\end{figure}

In this section, we discuss the interpretation of the computation level of \cocon. The
interpretation of the computation level is simpler than the one of the domain level,
because we work in a presheaf category which possesses a CwF structure with $\Pi$
types. The interpretation only needs to relate the corresponding parts. 

The interpretation functions without the recursors are shown in
\Cref{fig:intpcomp}. They are very similar to the simply typed case. In the
interpretation of computation types, $\intp{\Istype{\ann\tau}}$, we surround boxed
contextual types with a $\flat$ modality, and the contextual types are interpreted
using $\intp{\judge{(\judge[\Psi]A)}}$, resulting in some $\CTm$.
Computation-level
functions are directly translated to crisp functions in the model, similar to the
simply typed case. 
This is because we want to maintain the invariant
where computation-level variables are all crisp and in the model all crisp variables
live in $\flat$.
At last, we simply map $\ctx$ to $\Ctx$ which is a type representing the domain-level
contexts.

The interpretation of computation contexts, $\intp{\Gamma}$, simply iteratively
interprets all types in it. Note that a computation-level context $\Gamma$ is
interpreted to a global or crisp context in our model. That is why we do not wrap all
interpreted types with $\flat$ and seemingly mismatches with the parameter types in
function types. We will resolve this problem when interpreting computation-level
terms.

The interpretations of contextual objects and contextual terms are immediately reduced
to the interpretations of domain-level types and terms, which we have discussed in the
previous subsection. Since $\vjudge\Psi A$ denotes a variable of type $A$ in domain
context $\Psi$, the interpretation as a $\tTm\Phi A$ is restricted such that it is
semantically also a variable lookup $v\{p^k\}$. 
restricted semantic term in the set $\tTm\Phi A$ in the form of $v\{p^k\}$. This is indeed
the case by looking at the interpretation of the variable case of the domain level.



The interpretation of the computation-level terms is straightforward. Boxed contextual
objects are simply interpreted as boxed domain-level terms in the model. Since we
interpret to computation-level functions to crisp functions, we use crisp applications
and $\mlam$ abstractions respectively to interpret computation-level applications and
abstractions. Notice that in the application case, due to the soundness theorem we are
about to show, the interpretation of $s$, $\intp{\typing{s}{\ann\tau_1}}$, is indeed
closed, and thus the application is valid. At last, variables are simply interpreted
to those in the semantic context. 

\subsection{Interpreting Recursors}

Compared to the previous standard interpretations, the recursors are more interesting
to consider. The recursor of $\ty$ appears to be very typical because it is simply a
algebraic data type, which can already be modeled conventionally using the initial
algebra of some polynomial functor. Therefore, we omit the concrete formulation here
in favor of conciseness and only focus on the recursor of $\trm$.

In order to formulate the semantic recursor of $\ttrm$, we define the following three auxiliary definitions:
\begin{mathpar}
  \inferrule*
  {\mmtyping[\Gamma][\cdot]{A}{\tTy\top} \\ \mmtyping[\Gamma][\cdot]{\Psi}{\Ctx} \\ \mmtyping[\Gamma][\cdot]{t}{\flat(\tTm\top A)}}
  {\mmtyping[\Gamma][\cdot]{\lift(\Psi, t) := \letbox{x'}t{x'\{!\}}}{\tTm\Psi{A\{!\}}}}

  \inferrule*
  {\mmtyping[\Gamma][\cdot] a {\flat(\Tm\top \tty)} \\
    \mmtyping[\Gamma][\cdot] b {\flat(\Tm\top \tty)}}
  {\mmtyping[\Gamma][\cdot]{\tarr'(a, b) :=
      \letbox{a'}{a}{\letbox{b'}{b}{\boxit{\tarr(a', b')}}}}
  {\flat(\Tm\top \tty)}}
  
  \inferrule*
  {\mmtyping[\Gamma][\cdot]{\Psi}{\Ctx} \\ \mmtyping[\Gamma][\cdot]{a}{\tTm\Psi{\tty'}} \\
    \mmtyping[\Gamma][\cdot]{b}{\tTm\Psi{\tty'}} \\
    \mmtyping[\Gamma][\cdot]{m}{\flat(\tTm{\Psi.\ttrm[a]}{\ttrm[b\{p\}]})}}
  {\mmtyping[\Gamma][\cdot]{\tlam'(\Psi, m) := \letbox{m'}m{\boxit{\tlam(a, b, m')}}}{\flat(\tTm\Psi{\ttrm[\tarr(a, b)]})}}
\end{mathpar}
\begin{mathpar}
  \inferrule*
  {\mmtyping[\Gamma][\cdot]{\Psi}{\Ctx} \\ \mmtyping[\Gamma][\cdot]{a}{\tTm\Psi{\tty'}} \\
    \mmtyping[\Gamma][\cdot]{b}{\tTm\Psi{\tty'}} \\
    \mmtyping[\Gamma][\cdot]{m}{\flat(\tTm\Psi{\ttrm[arr(a, b)]})} \\
    \mmtyping[\Gamma][\cdot]{n}{\flat(\tTm\Psi{\ttrm[a]})}}
  {\mmtyping[\Gamma][\cdot]{\tapp'(\Psi, m, n) := \letbox{m'}m{\letbox{n'}n{\boxit{\tapp(a, b, m', n')}}}}{\flat(\tTm\Psi{\ttrm[b]})}}
\end{mathpar}
These helpers are defined to ease the operations related to the $\flat$ modality. For
example, the $\lift$ function transforms a term $t$ of type $\flat(\tTm\top A)$ to
$\tTm\Psi{A\{!\}}$ for some domain-level context $\Psi$. $\tlam'$ takes a boxed HOAS
representation, $m$, and return a boxed $\ttrm$ constructed by $\tlam$. Similarly, $\tapp'$
takes two boxed $\ttrm$ and return a boxed $\ttrm$ constructed by $\tapp$. We need
these helpers in order to reduce the clusters in the formulation of the semantic recursor of
$\trm$ and the equations, which is presented in \Cref{fig:recursor}.

Since the recursion happens in the computation level, we require the local context to
be empty, so we only handle closed domain-level types and terms.  In the
variable case, we require the semantic term $t$ to be $\CTm{_v}$, so that it indeed represents
 a variable in $\Psi$. In the $\tlam$ case, the recursion involves a HOAS encoding of an
object-level term. This corresponds to a domain-level term with an augmented context
$\Psi.\ttrm[\lift(\Psi, a)]$. The $\tapp$ case is straightforward since it just goes
down to the subterms recursively. 

Given the semantic recursor, we can straightforwardly interpret the syntactical
recursor, shown in \Cref{fig:intprec}.

Given the recursor, we can interpret the syntactical recursor quite straightforwardly
to the semantic recursor.
Following the pattern from the simply typed case, we interpret branches to
crisp functions and the bodies are recursively interpreted.

That concludes our interpretations. We formulate the soundness properties of the
interpretations as below, which are proved via a mutual induction. 
\begin{theorem}[Soundness]
  The following are true.
  \begin{enumerate}
  \item If $\isctx{\Phi}$, then $\typing[\intp{\Gamma} \pipe \cdot]{\intp{\isctx\Phi}}{\Ctx}$.
  \item If $\istype{A}$, then $\typing[\intp{\Gamma} \pipe \cdot]{\intp{\istype A}}{\tTy{\intp{\isctx\Psi}}}$.
  \item If $\mtyping M A$, then
    $\typing[\intp{\Gamma} \pipe \cdot] {\intp{\mtyping M A}}{\tTm{\intp{\isctx\Psi}}{\intp{\istype A}}}$.
  \item If $\mtyping \sigma \Phi$, then
    $\typing[\intp{\Gamma} \pipe \cdot]{\intp{\mtyping \sigma
        \Phi}}{\El(\intp{\isctx{\Psi}}) \to \El(\intp{\isctx{\Phi}})}$.
  \item If $\typing C T$, then $\typing[\intp{\Gamma} \pipe \cdot]{\intp{\typing C
        T}}{\intp{\Gamma \vdash T}}$.
  \item If $\Istype{\ann\tau}$, then $\Istype[\sem{\Gamma} \pipe
    \cdot]{\intp{\Istype{\ann\tau}}}$.
  \item If $\typing t {\check\tau}$, then $\typing[\intp{\Gamma} \pipe
    \cdot]{\intp{\typing t {\check\tau}}}{\intp{\Istype{\check\tau}}}$.
  \item If $\istype{A \equiv A'}$, then $\typing[\intp{\Gamma} \pipe
    \cdot]{\intp{\istype A} = \intp{\istype{A'}}}{\tTy{\intp{\isctx\Psi}}}$.
  \item
    If $\Gamma\semi \Psi \vdash M \equiv N \colon A$ then
    $\sem{\Gamma}\pipe \cdot \vdash \sem{\Gamma \semi \Psi \vdash M : A} = \sem{\Gamma
      \semi \Psi \vdash N : A} \colon {\tTm{\intp{\isctx\Psi}}{\intp{\istype A}}}$.
  \item
    If
    $\Gamma; \Psi \vdash \sigma \equiv \sigma' \colon \Phi$ then
    $\sem{\Gamma} \pipe \cdot
     \vdash
     \sem{\Gamma; \Psi \vdash \sigma : \Phi} = \sem{\Gamma; \Psi \vdash \sigma' : \Phi}
     \colon {\El(\intp{\isctx{\Psi}}) \to \El(\intp{\isctx{\Phi}})}$.
  \item If $\Istype{\ann{\tau_1} \equiv  \ann{\tau_2}}$, then $\Istype[\sem{\Gamma} \pipe
    \cdot]{\intp{\Istype{\ann{\tau_1}}} = \intp{\Istype{\ann{\tau_2}}}}$.
  \item
    If
    $\Gamma \vdash t_1 \equiv t_2 \colon \tau$,
    $\sem{\Gamma} \pipe \cdot \vdash \sem{\Gamma \vdash t_1 : \ann\tau} = \sem{\Gamma \vdash t_2 : \ann\tau} \colon \sem{\Istype{\ann\tau}}$.
  \end{enumerate}
\end{theorem}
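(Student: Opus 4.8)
The plan is to establish all twelve clauses simultaneously by a single induction on the given \cocon derivation, where a derivation of a conversion premise ($\istype{A \equiv B}$ in the type-conversion rule behind clause~3, and $\Istype{\ann\tau_1 \equiv \ann\tau_2}$ in the conversion rule behind clause~7) counts as a strict subderivation, so that clauses~3 and~7 may legitimately appeal to clauses~8 and~11, while the equality clauses~8--12 appeal to the typing clauses on strictly smaller derivations; no genuine cycle of equal size arises. Before the main induction I would prove a \emph{substitution lemma}: if $\mtyping{\sigma}{\Phi}$ and $e = \intp{\mtyping{\sigma}{\Phi}}$, then $\intp{\istype[\Gamma][\Phi]{A}}\{e\} = \intp{\istype[\Gamma][\Psi]{[\sigma]A}}$ in $\tTy{\intp{\isctx{\Psi}}}$ and $\intp{\mtyping[\Gamma][\Phi]{M}{A}}\{e\} = \intp{\mtyping[\Gamma][\Psi]{[\sigma]M}{[\sigma]A}}$ in the corresponding $\CTm$-set, together with the analogous statement for composition of two substitutions. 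This is itself a short mutual induction that leans on the CwF axioms---in particular the pullback property of $q(\sigma, A)$---and on the substitution axioms for semantic $\Pi$ types. With this lemma available, the unbox case, the dependent-application case, and the $\beta$-rules of both recursors reduce to routine rewriting, because in each of them a domain-level substitution is pushed onto a type or a term.

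For the typing clauses~1--7 I would go rule by rule. The formation rules for $\top$, $\ty$, $\trm$, $\Pi$ and the constants $\tobj, \tarr, \tlam, \tapp$ amount to checking that the semantic constants axiomatised in the subsections above have exactly the declared types; the only case that needs a real calculation is $\tlam$, where a term of $\tTm{\Psi'}{\Pi(\ttrm[a'], \ttrm[b'\{p\}])}$ must be transported to $\tTm{\Psi'.\ttrm[a']}{\ttrm[b'\{p\}]}$ by $\App(e'\{p\}, v)$, the target set being identified via the CwF equation $p \circ \ctxext{\id, v} = \id$ exactly as displayed in the text following \Cref{fig:intplf}. The variable case is handled by iterated projections $p^k$ applied to $v$, and the unbox case by $\letbox{u}{e_1}{u\{e_2\}}$ together with \Cref{lem:explicitfun}, which lets us treat the interpreted substitution as a genuine morphism to compose with. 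On the computation level the decisive invariant, which I would thread explicitly through the statement, is that $\intp{\Gamma}$ is a context of \emph{crisp} variables and that every $\intp{\typing t {\ann\tau}}$ is derived in $\intp{\Gamma} \mid \cdot$ with empty ordinary context; this is precisely what makes the crisp application rule apply in the case $\typing{t\;s}{[s/y]\tau_2}$, since the interpreted argument $\intp{\typing s{\ann\tau_1}}$ is then closed as required. The recursor cases reduce to the semantic recursor of \Cref{fig:recursor}: one checks that each interpreted branch is a crisp function of the shape demanded by $B_v$, $B_{\tlam}$, $B_{\tapp}$, using the auxiliary operators $\lift$, $\tarr'$, $\tlam'$, $\tapp'$ to mediate the $\flat$'s, and that the equations in \Cref{fig:recursor} hold by the definitions of those operators.

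For the equality clauses~8--12 I would again run through the conversion rules of \cocon. Reflexivity, symmetry, transitivity and the congruence rules are immediate from the induction hypotheses and the corresponding closure properties of the internal identity type. The domain-level $\beta$- and $\eta$-rules for $\Pi$ follow from axioms~4 and~5 of semantic $\Pi$ types. The reduction $\unquote{\quot M}{\sigma} \equiv [\sigma]M$ follows from the $\beta$-rule of $\Box$ and the substitution lemma, exactly as in the simply typed Soundness proposition; the $\eta$-rule $t \equiv \quot{\unquote t {\wk{\hat\Psi}}}$ uses the $\eta$-rule of $\lamb$ together with the \emph{definitional idempotency} of $\flat$ in this model ($\Box\Box F(\Psi) = \Box F(\top) = \Box F(\Psi)$), so that $\boxit{(\letbox{x}{t}{x})} = t$. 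The recursor reduction rules match the equations listed under \Cref{fig:recursor} once the interpretations of the branches in \Cref{fig:intprec} are unfolded and the substitution lemma is applied to the arguments.

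The step I expect to be the main obstacle is the dependent-typing bookkeeping around the substitution lemma and the type-conversion rule. Because domain-level terms occur inside domain-level types---$\trm[M]$ is interpreted as $\ttrm[\intp{\mtyping M \ty}]$---the claim ``$\intp{\istype A}$ is a well-formed semantic type'' is entangled with ``$\intp{\mtyping M \ty}$ is the intended semantic term,'' and the conversion rule forces clause~3 to consume clause~8 on a subderivation; getting the induction ordering right, and pinning down the exact form of the invariant (the empty-ordinary-context condition on the computation level, the precise equalities asserted by the substitution lemma, and the restriction that a variable contextual type $\vjudge\Psi A$ interprets to a term of the form $v\{p^k\}$), so that every appeal is to a strictly smaller derivation or an already-discharged clause, is where the real care is needed. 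Once that scaffolding is fixed, the remaining work is a mechanical pass chasing the CwF and $\Pi$-type axioms through Figures~\ref{fig:intplf}--\ref{fig:intprec}.
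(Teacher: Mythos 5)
Your proposal is correct and takes the same route as the paper, which states only that the twelve clauses ``are proved via a mutual induction'' on derivations and gives no further detail. The scaffolding you add --- the substitution lemma relating $\intp{\istype[\Gamma][\Phi]{A}}\{e\}$ to $\intp{\istype[\Gamma][\Psi]{[\sigma]A}}$, the induction ordering that lets the typing clauses consume the equality clauses through the conversion rules, and the crisp/empty-ordinary-context invariant that validates the crisp application rule --- is exactly what the paper leaves implicit, and your identification of the $\tlam$, unbox, and recursor cases as the ones needing real calculation matches the calculations the paper does display alongside its interpretation figures.
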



\section{Connection with Fitch-Style Type Theories}\label{sec:fitch}

\citet{DBLP:journals/mscs/BirkedalCMMPS20,DBLP:journals/pacmpl/GratzerSB19} discussed
two Fitch-style dependent modal type theories. Compared to the dual-context-style we
presented in previous sections, Fitch-style systems differ in that they use only one context to keep
track of all variables. Instead, Fitch-style systems use a ``locking'' mechanism to
prevent variable lookups from continuing. In this section, we establish a relation
between a fragment of \cocon without recursion on HOAS and a semantic framework discussed in
\citet{DBLP:journals/mscs/BirkedalCMMPS20}, dependent right adjoints. We show this by
showing an embedding of \cocon into the dependent intuitionistic K shown in
\citet{DBLP:journals/mscs/BirkedalCMMPS20}, which has the soundness and completeness
properties with respect to dependent right adjoints. Thus we can establish that the
fragment of \cocon without recursion on HOAS
can be interpreted by any system with dependent right adjoints.

\subsection{Fitch-style Modal Type Theories}

Fitch-style modal type theories are more intuitive than dual-context-style modal type
theories in a sense that Fitch style only handles one context. As a consequence, valid
and true assumptions in the contexts are mixed together. The introduction and the
elimination rules thus must tell these different kinds of assumption apart. In
Fitch-style systems, the introduction rule for necessity or box introduces a lock to
the top of the context. This lock ``blocks'' the context to its left. Different
flavors of Fitch-style systems are distinguished by their elimination rules. For
dependent intuitionistic K, the rules are
\begin{mathpar}
  \inferrule*
  {\typing[\Gamma, \thelock]m T}
  {\typing{\boxit{m}}{\square T}}

  \inferrule*
  {\typing m{\square T} \\ \thelock \notin \Gamma'}
  {\typing[\Gamma, \thelock, \Gamma']{\tunbox{m}}{T}}
\end{mathpar}
In both rules, \thelock\xspace symbol prevents variable lookups from going beyond its left. That
is, a term can only refer to variables to the right of the rightmost \thelock.  The
rules can be understood from the classical Kripke's semantics: \kw{box} accesses the
next world, and thus the \thelock\xspace symbol locks the assumptions in all previous
worlds (to its left); while \kw{unbox} allows us to only travel back to the immediate previous
world, so \thelock\xspace must not exist in $\Gamma'$ in the elimination rule. \cocon is much closer to
dependent intuitionistic K than to the dual-context models we discussed in the previous
sections.  Nonetheless, one fundamental problem of dependent intuitionistic K is that it does
not support recursion on HOAS structures like $\trm$ in \Cref{sec:depty} as \cocon
does. Therefore, in this section, we will only discuss the interpretation from the
fragment of \cocon without recursors to dependent intuitionistic K in order to discuss their
connection explicitly. Whether dependent intuitionistic K can be extended to support
recursion on HOAS is an interesting topic for future investigation.

\subsection{Dependent Right Adjoints}

Dependent right adjoints are a special structure added on top of a category with
families. Essentially it is a CwF equipped with a functor $L$, denoting the
\thelock\xspace symbol, and a family of types $R$, denoting the modality. Dependent
right adjoints are used to capture the nature of comonadic modality in a categorical
language.
\begin{definition}
  A category with families with a dependent right adjoint $\mathcal{C}$ is a
  category with families with the following extra data:
  \begin{enumerate}
  \item an endofunctor $L : \mathcal{C} \to \mathcal{C}$,
  \item a family $R_\Gamma(A) \in Ty(\Gamma)$ for each $\Gamma \in \mathcal{C}$ and $A
    \in Ty(L(\Gamma))$. 
  \end{enumerate}
  The following axioms hold:
  \begin{enumerate}
  \item $R_\Gamma(A) \{ \sigma\} = R_\Delta(A \{L(\sigma)\}) \in Ty(\Delta)$ for
    $\sigma : \Delta \to \Gamma$,
  \item the following isomorphism exists for $\Gamma \in \mathcal{C}$ and $A \in Ty(L(\Gamma))$:
    \begin{align*}
      Tm(L(\Gamma), A) \simeq Tm(\Gamma, R_\Gamma(A))
    \end{align*}
    with the effect from left to right as
    $\overrightarrow{M} \in Tm(\Gamma, R_\Gamma(A))$ for $M \in Tm(L(\Gamma), A)$ and
    the other effect $\overleftarrow{N} \in Tm(L(\Gamma), A)$ for
    $N \in Tm(\Gamma, R_\Gamma(A))$.
  \item $\overrightarrow{M}\{\sigma\} = \overrightarrow{M\{L(\sigma)\}} \in Tm(\Delta,
    R_\Delta(A\{\sigma\}))$ for $M
    \in Tm(L(\Gamma), A)$ and $\sigma \in \Delta \to \Gamma$.
  \end{enumerate}
\end{definition}

\citet{DBLP:journals/mscs/BirkedalCMMPS20} shows that dependent intuitionistic K
can be soundly interpreted into a CwF with a dependent right adjoint and
the type theory itself forms a term model. Thus it suffices to interpret \cocon
to their type theory to show that \cocon can be interpreted using the structure
of a dependent right adjoint. We give the interpretation in the next section.

\begin{figure}
  \centering\small
  \begin{alignat*}{4}
    &\mbox{Interpretation of domain-level types} \span\span\span \\
    &\intp{\istype{\Pi x : A . B}} &\;=\;& \lamb u. \Pi x : \intp{\istype A}\
    u. \intp{\istype[\Gamma][\Psi, x : A] B}\ u \\
    &\mbox{Interpretation of domain-level contexts} \span\span\span\\
    &\intp{\isctx\cdot} &\;=\;& \top \\
    &\intp{\isctx{\Psi, x:A}} &\;=\;& \intp{\isctx \Psi}. \intp{\istype A} \\
    &\intp{\isctx{\psi}} &\;=\;& \tunbox\psi \\
    &\mbox{Interpretation of domain-level substitutions where $\Psi' = \intp{\isctx \Psi}$}
    \span\span\span \\
    &\intp{\mtyping \cdot \cdot} &\;=\;& \lamb u. () \\
    &\intp{\mtyping{\sigma, M}{\Phi, x : A}} &\;=\;& \lamb u. (e_1~u, e_2~u)
    \tag*{where $e_1 = \intp{\mtyping \sigma \Phi}$ and $e_2 = \intp{\mtyping{M}{A[\sigma/\hat{\Phi}]}}$} \\
    &\intp{\mtyping[\Gamma][\Psi, \overrightarrow{x : A}]{wk_{\hat\Psi}}\Psi}
    &\;=\;& \pi_1^k \tag*{where $k = |\wvec{x{:}A}|$}\\
    &\mbox{Interpretation of domain-level terms where $\Psi' = \intp{\isctx \Psi}$}
    \span\span\span \\
    &\intp{\mtyping{x}{A}}
    &\;=\;& \lamb u. \pi_2(\pi_1^k~u)
    \tag*{where $\Psi = \Psi_0, x : A, \overrightarrow{y_i : B_i}$
      and $|\overrightarrow{y_i : B_i}| = k$} \\
    &\intp{\mtyping{\lambda x. M}{\Pi x : A. B}}
    &\;=\;& \lamb u~x. e~(u, x)
    \tag*{where $e = \intp{\mtyping[\Gamma][\Psi, x : A] M B}$} \\
    &\intp{\mtyping{M\;N}{[N/x]B}}
    &\;=\;& \lamb u. (e_1~u)~(e_2~u)
    \tag*{where $e_1 = \intp{\mtyping M {\Pi x : A. B}}$ and $e_2 = \intp{\mtyping N
        A}$} \\
    &\intp{\mtyping{\unquote t \sigma}{[\sigma/\hat{\Phi}]A}}
    &\;=\;&  \lamb u. (\tunbox{e_1})~(e_2~u)
    \tag*{where $e_1 = \intp{\typing t {\quot{\judge[\Phi]{A}}}}$ and $e_2 =
      \intp{\mtyping \sigma \Phi}$} \\
    &\mbox{Interpretation of contextual objects} \span\span\span\\
    &\intp{\typing{(\judge[\hat{\Psi}]{M})}{(\judge[\Psi]A)}}
    &\;=\;& \intp{\mtyping M A} \\
    &\intp{\typing{(\judge[\hat{\Psi}]{M})}{(\vjudge\Psi A)}}
    &\;=\;& \intp{\mtyping M A} \\
    &\mbox{Interpretation of contextual types} \span\span\span\\
    &\intp{\judge{(\judge[\Psi]A)}}
    &\;=\;& \Pi u : {\intp{\isctx{\Psi}}}.{\intp{\istype A}} \\
    &\intp{\judge{(\vjudge\Psi A)}}
    &\;=\;& \Pi u : {\intp{\isctx{\Psi}}}.{\intp{\istype A}} \\
    &\mbox{Intepretation of computation types} \span\span\span \\
    &\intp{\Istype{\quot{T}}} &\;=\;& \square\intp{\judge T} \\
    &\intp{\Istype{(x : \ann\tau_1) \Rightarrow \tau_2}}
    &\;=\;& \Pi x : \intp{\Istype{\ann\tau_1}}. \intp{\Istype{\tau_2}} \\
    &\intp{\Istype{\ctx}} &\;=\;& \square\Ctx \\
    &\mbox{Interpretation of computation contexts} \span\span\span\\
    &\intp{\cdot} &\;=\;& \cdot \\
    &\intp{\Gamma, x :\check{\tau}} &\;=\;& \intp{\Gamma}, x : \intp{\Istype{\ann\tau}}  \\
    &\mbox{Interpretation of computation terms} \span\span\span\\
    &\intp{\typing{\quot{C}}{\quot{T}}}
    &\;=\;& \boxit{\intp{\typing C T}} \\
    &\intp{\typing{t_1\;t_2}{[t_2/x]\tau}}
    &\;=\;& \intp{\typing{t_1}{(x : {\ann\tau_2}) \Rightarrow
        \tau}}~\intp{\typing{t_2}{{\ann\tau_2}}} \\ 
    &\intp{\typing{\deffun x t}{(x : \ann{\tau_1}) \Rightarrow \tau_2}}
    &\;=\;& \lamb x : \intp{\ann{\tau_1}} .{\intp{\typing[\Gamma, x :
        \ann{\tau_1}]{t}{\tau_2}}} \\
    &\intp{\typing x {\check\tau}} &\;=\;& x
  \end{alignat*}
  \caption{Interpretation to the Fitch-style system}\label{fig:intpfitch}
\end{figure}

\subsection{Interpreting \cocon}

The interpretation is mostly straightforward. We can easily show that the
interpretation is also sound:
\begin{theorem}[Soundness]
  The following are true.
  \begin{enumerate}
  \item If $\isctx{\Phi}$, then $\typing[\intp{\Gamma}]{\intp{\isctx\Phi}}{\Ctx}$.
  \item If $\istype{A}$, then $\typing[\intp{\Gamma}]{\intp{\istype A}}{\Pi u :
      \intp{\isctx\Psi}. \Type}$.
  \item If $\mtyping M A$, then
    $\typing[\intp{\Gamma}] {\intp{\mtyping M A}}{\Pi u
      : \intp{\isctx\Psi}. \intp{\istype A}\ u}$.
  \item If $\mtyping \sigma \Phi$, then
    $\typing[\intp{\Gamma}]{\intp{\mtyping \sigma
        \Phi}}{\Pi u : \intp{\isctx\Psi}. \intp{\isctx{\Phi}}}$.
  \item If $\typing C T$, then $\typing[\intp{\Gamma}]{\intp{\typing C
        T}}{\intp{\Gamma \vdash T}}$.
  \item If $\Istype{\ann\tau}$, then $\typing[\sem{\Gamma}]{\intp{\Istype{\ann\tau}}}{\Type}$.
  \item If $\typing t {\check\tau}$, then $\typing[\intp{\Gamma}]{\intp{\typing t {\check\tau}}}{\intp{\Istype{\check\tau}}}$.
  \end{enumerate}
\end{theorem}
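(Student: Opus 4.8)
The plan is to prove all seven items simultaneously by mutual induction on the \cocon derivations, reading off each step from the defining clauses of the interpretation in Figure~\ref{fig:intpfitch}. Before starting the main induction I would set up the usual auxiliary lemmas, themselves proved by induction and, where necessary, mutually with the main statement: (i) the interpretation commutes with domain-level substitution, i.e.\ for $\mtyping \sigma \Phi$ the interpretations $\intp{[\sigma]A}$ and $\intp{[\sigma]M}$ agree, up to $\beta\eta$, with $\intp{A}$ (resp.\ $\intp{M}$) pre-composed with $\intp{\mtyping\sigma\Phi}$; (ii) the interpretation commutes with computation-level substitution $[s/y]$; (iii) weakening lemmas at both levels; and (iv) a lemma recording that the interpretation of a \emph{domain-level} judgment is well-typed in the \emph{locked} ambient context $\intp{\Gamma},\thelock$ rather than in $\intp{\Gamma}$ itself, since $\intp{\isctx{\psi}}=\tunbox{\psi}$ and $\intp{\mtyping{\unbox t \sigma}{A}}=\lamb u.(\tunbox{e_1})\,(e_2\ u)$ both need a lock immediately above the eliminated variable. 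Items~1--4 of the theorem are then to be read through this lemma (the missing lock being supplied by the enclosing $\boxit{-}$), and items~5--7 at $\intp{\Gamma}$ as stated.

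Most cases are mechanical. For each \cocon rule one applies the IH to the premises and reassembles using the corresponding rule of dependent intuitionistic K: $\Pi$-formation/introduction/elimination for the domain- and computation-level function spaces and for contextual types $\intp{\judge{(\judge[\Psi]A)}}=\Pi u:\intp{\isctx\Psi}.\intp{\istype A}$; pairing for the extended-substitution clause $\intp{\mtyping{\sigma,M}{\Phi,x{:}A}}$; iterated first projections $\pi_1^k$ for weakening substitutions and $\pi_2(\pi_1^k\,u)$ for variable lookup; and the $\boxit{-}$/$\tunbox{-}$ rules for the modal clauses $\intp{\typing{\quot C}{\quot T}}=\boxit{\intp{\typing C T}}$ and $\intp{\Istype{\ctx}}=\Box\Ctx$. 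The type-conversion rules ($\mtyping M B$ from $\istype{A\equiv B}$, and the computation analogue) are discharged from the weak equivalence-preservation fact that definitionally equal \cocon types and terms receive definitionally equal interpretations, which in turn follows from the substitution lemmas above together with the $\beta\eta$-rules of the target theory; note that the stronger equivalence-preservation statements are \emph{not} claimed here, so no issue about idempotency of $\Box$ (which fails in intuitionistic~K) arises.

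The delicate point — and what I expect to be the main obstacle — is keeping the lock discipline straight across the whole mutual induction. In the $\quot C$ case the interpretation $\boxit{\intp C}$ is well-typed only because $\intp C$ (a domain-level interpretation) typechecks under the lock $\intp\Gamma,\thelock$, which is exactly what auxiliary lemma~(iv) guarantees. Conversely a contextual object may refer to computation variables of $\Gamma$ — in particular a context variable $\psi{:}\ctx$, which becomes $\psi{:}\Box\Ctx$ in $\intp\Gamma$ — and such a variable is reached \emph{through} the lock precisely by the $\tunbox$-rule of intuitionistic~K, which permits a variable of box type sitting to the left of a single lock to be used on the right; this is why the interpretation systematically inserts $\tunbox{-}$ when eliminating a contextual variable or a boxed computation ($\intp{\isctx\psi}$, and the $\unbox t \sigma$ clause). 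One must then verify the invariant that no case ever forces an elimination across two nested locks, which intuitionistic~K would not allow: this holds because in the fragment under consideration boxes are never nested — a contextual type is never itself packaged inside another contextual type — so every $\tunbox$ crosses exactly the one lock introduced by its enclosing $\boxit{-}$. Establishing this non-nesting invariant, and threading the locked-context bookkeeping through the mutually inductive hypotheses, is the only genuinely subtle part; the $\beta\eta$-equational reasoning and the dependent-type substitution calculations are routine given lemmas (i)--(iv).
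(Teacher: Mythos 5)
The paper does not actually give a proof of this theorem: it only asserts that the interpretation ``can easily be shown'' to be sound, by the same kind of mutual induction on derivations used for the presheaf-model soundness theorem of the previous section. Your proposal therefore supplies what the paper leaves implicit, and it is the intended argument: a simultaneous induction over all judgment forms, with substitution and weakening lemmas discharged separately and each case reassembled from the corresponding rule of dependent intuitionistic K. The one point where you add genuine content is the lock bookkeeping, and you are right to flag it: as literally stated, items 1--5 type the interpretations of domain-level judgments in the unlocked context $\intp{\Gamma}$, but $\intp{\isctx{\psi}}=\tunbox{\psi}$ with $\psi:\square\Ctx$ in $\intp{\Gamma}$, and the clause $\lamb u.(\tunbox{e_1})(e_2\ u)$ for $\unbox{t}{\sigma}$, only typecheck in a context of the form $\Gamma_0,\thelock,\Gamma'$; your auxiliary lemma~(iv), which strengthens items 1--5 to the ambient context $\intp{\Gamma},\thelock$ so that the $\boxit{-}$ in the clause for $\quot{C}$ discharges the lock, is the correct reading and is what makes the induction close. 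Your single-lock invariant is also correct, though the justification is not quite that boxes never nest (they do, since a domain-level term may contain $\unbox{\quot{C}}{\sigma}$); rather, the two $\tunbox$ sites each eliminate a term that is typed immediately to the left of the single lock introduced by the nearest enclosing $\boxit{-}$, so no elimination ever has to cross two locks. Finally, your observation that only weak equivalence preservation is needed for the conversion rules --- and hence that idempotency of the modality, which the presheaf model used for the $\eta$-rule of contextual types but which intuitionistic~K does not provide, is not required --- is consistent with the paper's deliberate omission of the equivalence-preservation clauses from this theorem.
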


There are a number of differences to
highlight:
\begin{enumerate}
\item In \cocon, the domain level sees contextual variables on the computation
  level. This implies that contextual variables must live in $\square$. 
\item The domain level and the computation level in \cocon have different
  syntax. In the interpretation, we need to merge them, e.g. two dependent function
  spaces are merged into the same syntax. We can still distinguish them by looking at
  the level they live in. 
\end{enumerate}

This idea leads to an interpretation shown in \Cref{fig:intpfitch}. As we can see, the
interpretation is very straightforward, showing that the Fitch-style system is
compatible with \cocon in many aspects. 

In the interpretation of the domain-level types, we only show the case for $\Pi$
types. If we have corresponding base types in \cocon and dependent intuitionistic K,
we can relate them via the interpretation. For example, if we have $\tty$ in dependent
intuitionistic K as well, then we will have a base case in the
interpretation. Nonetheless, we can still work on other parts of the interpretations.

In dependent intuitionistic K, we assume a universe $\Ctx$, which is used to represent
domain-level contexts. There are two types, $\top$ to represent the empty context and
$-.-$ for an extended context. That is, a domain-level context are managed as a
list-like structure. To construct a domain-level context in dependent
intuitionistic K, we use $()$ to construct an empty context and $-,-$ to extend an
existing context. Given $\Phi.A$, we can get the precedent $\Phi$ by applying $\pi_1$
and get the domain-level term of type $A$ by applying $\pi_2$. This is sufficient for
us to perform operations related domain-level contexts in dependent intuitionistic
K. The interpretation of domain-level contexts is quite straightforward, except that
when we encounter a context variable $\psi$, we unbox it in the interpretation.  This
is because contextual variables always have type $\square \Ctx$ in the model, as to be shown in the
interpretation of computation-level types.

The interpretations of domain-level substitutions and terms are also very
straightforward. Notice that the unbox case in the interpretation of terms becomes
much more direct. Since the interpretation of $t$ is some boxed function type, we
simply apply it to the interpretation of the substitution $\sigma$ after
unboxing. Notice that there we do not have to worry about idempotency of $\square$ as
in previous sections, as unbox in \cocon is already naturally modeled by unbox in
dependent intuitionistic K.

The interpretation of contextual objects directly forward to the interpretation of
domain-level terms. Contextual types are interpreted to dependent functions.

For computation-level types, we interpret boxed contextual types to boxed types and
function types in \cocon to function types in dependent intuitionistic K as typically
done. Notice that we interpret function types on both the domain level and the
computation level to functions in dependent intuitionistic K, so that we have a unified
syntax. It is worth mentioning that $\ctx$ is interpreted to $\square \Ctx$, because
context variables are always global in \cocon. That is why we have an additional \kw{unbox}
when interpreting contextual variables.

The interpretations of computation-level contexts and terms are immediate.

Based on this interpretation, the domain and the computation levels reside separately in
two ``zones'' in dependent intuitionistic K. These two ``zones'' can be distinguished
by checking whether a \thelock\xspace exists in the context. If there is, then the current
term is on the computation level, and otherwise it is on the domain level. A \thelock\xspace
is added when a \kw{box} is encountered. This corresponds to getting into the domain
level from the computation one. Therefore, dependent intuitionistic K and \cocon do
seem to correspond nicely (except that the former does not have recursors for HOAS). 

\section{Conclusion}

We have given a rational reconstruction of contextual type theory in presheaf models of
higher-order abstract syntax.
This provides a semantical way of understanding the invariants of contextual types
independently of the algorithmic details of type checking.
At the same time, we identify the contextual modal type theory, \cocon, which is known to be normalizing, as a syntax for presheaf
models of HOAS.
By accounting for the Yoneda embedding with a universe \'a la Tarski, we obtain a manageable
way of constructing contextual types in the model, especially in the dependent
case.
Presheaves over models of dependent types have been used in the context of two-level type 
theories for homotopy type theory~\cite{capriotti:thesis, ann-cap-kra:two-level}.
Clarifying the precise relationship to this line of research is an interesting direction
that will however require further work.
  
In future work, one may consider using the model as a way of compiling contextual types,
by implementing the semantics.
In another direction, it may be interesting to apply the syntax of contextual types to other presheaf categories.
We also hope that the model will help to guide the further development of \cocon.

\section*{Acknowledgement}

This work was funded by the Natural Sciences and Engineering Research Council of
Canada (grant number 206263), Fonds de recherche du Qu\'ebec - Nature et Technologies
(grant number 253521), and Postgraduate Scholarship - Doctoral by the Natural Sciences
and Engineering Research Council of Canada awarded to the first author.

\bibliographystyle{ACM-Reference-Format}


\end{document}